\documentclass[journal,twoside,web]{ieeecolor}

\usepackage[usenames,dvipsnames]{xcolor}

\definecolor{subsectioncolor}{rgb}{0,0.541,0.855}

\pdfminorversion=4
%\usepackage{flushend}

% bibliography packages
\usepackage[style=ieee]{biblatex}
\addbibresource{./srg_minimal.bib}
%\usepackage{cite}

% custom macros and packages 
\usepackage{./TLC}

% drawing and graphing packages
% \usepackage{booktabs}
\usepackage{circuitikz}
\usepackage{pgfplots}
\pgfplotsset{compat=1.16}
\usepackage{subcaption}
\usepgfplotslibrary{groupplots}
\usepackage{layouts}
\pgfplotscreateplotcyclelist{colors}{
        {CornflowerBlue},
        {BurntOrange},
        {RubineRed},
        {Emerald},
        {SpringGreen},
        {Goldenrod}
}

% maths and symbols
%\usepackage{IEEEtrantools}
\usepackage{mathtools}
\usepackage{physics}
\usepackage{blox}
\usepackage{siunitx}

\title{Graphical Nonlinear System Analysis}

\author{Thomas Chaffey$^{1}$, Fulvio Forni$^{1}$ and Rodolphe Sepulchre$^{1}$
\thanks{The research leading to these results has received funding from the European
Research Council under the Advanced ERC Grant Agreement Switchlet n. 670645.}%
\thanks{$^{1}$University of Cambridge, Department of Engineering, Trumpington Street,
Cambridge CB2 1PZ, {\tt\small tlc37@cam.ac.uk}, {\tt\small \{f.forni, r.sepulchre\}@eng.cam.ac.uk}.}
}
%\date{}
\begin{document}
\maketitle

\begin{abstract}
        We use the recently introduced concept of a Scaled Relative Graph (SRG) to
        develop a graphical analysis of input-output properties of feedback systems.  The
        SRG of a nonlinear operator generalizes the Nyquist diagram of an LTI system.
        In the spirit of classical control theory, important robustness indicators of
        nonlinear feedback systems are measured as distances between SRGs.
\end{abstract}

\section{Introduction}

The graphical analysis of a feedback system via the Nyquist diagram of its return ratio 
is a foundation of classical control theory. It underlies the analysis concept of 
stability margins and the design concept of loop shaping, which themselves provide
the grounds for the gap metric \autocite{El-Sakkary1985a, Vinnicombe2000} and
$H_\infty$ control \autocite{Zames1981a}.

The Nyquist diagram also has a fundamental place in the theory of nonlinear systems
of the Lur'e form (that is, systems composed of an LTI forward path in feedback with a
static nonlinearity).  The circle and Popov criteria allow the stability of a Lur'e
system to be proved by verifying a geometric condition on the Nyquist diagram of the
LTI component \autocite{Desoer1975}.  The geometric condition is determined by the
properties of the static nonlinearity.  Notably, only the Nyquist diagram of the LTI
component is defined, owing to a lack of a suitable definition of phase for nonlinear
systems.  At best, the frequency response of a nonlinear system may be computed
approximately.  The describing function \autocite{Krylov1947, Blagquiere1966, Slotine1991} gives rise to a family
of Nyquist curves for a nonlinearity, parameterized by the amplitude of the input.
Other efforts to generalize frequency response to nonlinear systems include the work
of \textcite{Pavlov2006} on Bode diagrams for convergent systems, and the recently introduced
notions of nonlinear phase and singular angle by Chen \emph{et al.} \autocite{Chen2020, Chen2021a}.

In this paper, we show that the Scaled Relative Graph of \textcite{Ryu2021}
generalizes the Nyquist diagram of an LTI transfer function, and may be plotted for
nonlinear input/output operators.  The SRG has been introduced in the theory of
optimization to visualize incremental properties of nonlinear operators, that is,
properties that are measured between pairs of input/output trajectories, such as
Lipschitz continuity and maximal monotonicity.  Such properties may be verified by
checking geometric conditions on the SRG of an operator.  Algebraic manipulations to
the operator correspond to geometric manipulations to the SRG.  The SRG gives rise to
simple, intuitive and rigorous proofs of the convergence of many algorithms in convex
optimization.  Furthermore, the tool is particularly suited to proving tightness of
convergence bounds, and has been used to prove novel tightness results
\autocite{Ryu2021,Huang2020}.

The objective of this paper is to provide a bridge
between SRG analysis and the incremental input-output analysis
of feedback systems  \autocite{Desoer1975}.  Our main result
(Theorem~\ref{thm:robust})
establishes a generalization of the Nyquist theorem for stable nonlinear
operators. Based on the homotopy argument central to IQC analysis
\autocite{Megretski1997},
this result enables SRG
analysis to address well-posedness issues -- in contrast to \autocite{Ryu2021} for
example, where
well-posedness of operators is {\it assumed}. In the context of nonlinear
feedback system analysis, this
result
enables an elegant and classical definition of stability margins for nonlinear
feedback systems. We illustrate the generality of the approach by recovering
several classical results.

The second part of the paper aims at providing concrete illustrations
of the theory. As a first step, we show that the SRG of an LTI system is
derived from its Nyquist curve, and we also provide an analytical derivation
of the SRG of scalar-valued static nonlinearities. Preliminary results were presented
in the conference paper \autocite{Chaffey2021b}. We then illustrate the application of SRG analysis
to three representative examples from the literature. Our first example
(Section~\ref{sec:delay})
is a feedback loop involving delays and saturations, a classical benchmark of IQC
system analysis \autocite{Megretski1997}. Our SRG analysis provides an analytical
bound on the feedback gain which guarantees stability, and  closely matches previous
numerical bounds obtained by IQC analysis.
We stress however that the SRG analysis characterizes the incremental gain whereas
previous bounds were non-incremental. We furthermore give an analytical bound on
the incremental
$L_2$ gain of the closed loop.  This example illustrates the strength and
potential of SRG analysis for verifying {\it incremental} properties, which is of considerable
importance, even though it seems to have received little attention in IQC analysis
\autocite{Megretski1997, Jonsson2001, Wang2019}.
Our second example illustrates SRG analysis in so-called cyclic feedback systems
\autocite{Tyson1978, Mallet-Paret1996, Sontag2006, Arcak2006}. Here also SRG
analysis suggests a strong potential: on top of providing an elegant graphical interpretation
of existing results, we illustrate that SRG analysis provides analytical bounds on
the incremental gain of the feedback system and stability margins against
dynamical uncertainties. Our final example (Section~\ref{sec:congestion}) combines cascades
and delays in the analysis of a congestion control model previously studied in
\autocite{Summers2013}.
Here again, SRG analysis provides novel bounds on the incremental gain, and
generalizes the equilibrium-independent passivity analysis proposed in
\autocite{Summers2013}.

The rest of the paper is organised as follows.  We begin by introducing some
background material in Section~\ref{sec:background}, before defining the SRG in
Section~\ref{sec:srg}, and summarizing its main properties.
Section~\ref{sec:feedback} presents our main theoretical results on the SRG analysis
of feedback systems.  Section~\ref{sec:LTI} connects the SRG to the Nyquist diagram
of an LTI transfer function, and Section~\ref{sec:static} describes the SRGs of
important classes of static nonlinearities.  Three detailed examples are then
developed in Sections~\ref{sec:delay}, \ref{sec:cyclic} and~\ref{sec:congestion}.
Some concluding remarks and areas for future research are given in
Section~\ref{sec:conclusions}.

\section{Background and preliminaries}\label{sec:background}

\subsection{Signal spaces, operators and relations}

Let $\mathcal{L}$ denote a Hilbert space, equipped with an inner product,
$\bra{\cdot}\ket{\cdot}: \mathcal{L} \times \mathcal{L} \to \mathbb{C}$, and the
induced norm $\norm{x} \coloneqq \sqrt{\bra{x}\ket{x}}$.

We will pay particular attention to Lebesgue spaces of square-integrable functions.
Given a time axis, which for brevity we will always consider to be $\R_{\geq 0}$, and a field 
$\mathbb{F} \in \{\R, \C\}$, we define the space $L^n_2(\mathbb{F})$ by the set of
signals $u: \R_{\geq 0} \to \mathbb{F}^n$ such that
\begin{IEEEeqnarray*}{rCl}
        \norm{u} \coloneqq \left(\int_{0}^{\infty} \bar{u}(t) u (t) \dd{t}\right)^{\frac{1}{2}}
        < \infty,
\end{IEEEeqnarray*}
where $\bar{u}(t)$ denotes the conjugate transpose of $u(t)$.
The inner product of $u, y \in L_2^n(\mathbb{F})$ is defined by
\begin{IEEEeqnarray*}{rCl}
        \bra{u}\ket{y} \coloneqq \int_{0}^{\infty} \bar{u}(t) y (t) \dd{t}.
\end{IEEEeqnarray*}
The Fourier transform of $u \in L_2^n(\mathbb{F})$ is defined as
\begin{IEEEeqnarray*}{rCl}
        \hat{u}(j\omega) \coloneqq \int_0^\infty e^{-j\omega t}u(t) \dd{t}.
\end{IEEEeqnarray*}
We omit the dimension and field when they are immaterial or clear from context.

For some $T \in \R_{\geq 0}$, define the truncation operator $P_T$ by
\begin{equation*}
        (P_T u)(t) \coloneqq \left\{ \begin{array}{c c}
                        u(t) & t \leq T,\\
                        0 & t > T,
                \end{array}\right.
\end{equation*}
where $t \in \R_{\geq 0}$ and $u$ is an arbitrary signal.  Define the
\emph{extension of $L^n_2(\mathbb{F})$} \autocite{Zames1966}, \autocite[p.
22]{Willems1971}, \autocite[p. 172]{Desoer1975} to be the space
\begin{equation*}
        L^n_{2, e}(\mathbb{F}) \coloneqq \left\{ u: \R_{\geq 0} \to \mathbb{F}^n \; | \; \norm{P_T u} <
        \infty \text{ for all } T \in \R_{\geq 0} \right\}.
\end{equation*}

An \emph{operator}, or \emph{system}, on a space $\mathcal{X}$, 
is a possibly
multi-valued map $R: \mathcal{X} \to \mathcal{X}$.
The identity operator, which maps $u \in \mathcal{X}$ to itself, is denoted by $I$.
% or $R: L^n_{2, e} \to
% L^n_{2, e}$.
The \emph{graph}, or \emph{relation}, of an operator, is the set $\{u, y\; | \; u \in
\dom{R}, y \in R(u)\} \subseteq \mathcal{X}\times\mathcal{X}$.  We use the notions of an operator and its relation
interchangeably, and denote them in the same way.  The relation of an operator may be
thought of as an input/output partition of a behavior \autocite[Def. 3.3.1]{Polderman1997}.

The usual operations on functions can be extended to relations:
\begin{IEEEeqnarray*}{rCl}
        S^{-1} &=& \{ (y, u) \; | \; y \in S(u) \}\\
        S + R &=& \{ (x, y + z) \; | \; (x, y) \in S, (x, z) \in R \}\\
        SR &=& \{ (x, z) \; | \; \exists\; y \text{ s.t. } (x, y) \in R, (y, z) \in S \}.
\end{IEEEeqnarray*}
Note that $S^{-1}$ always exists, but is not an inverse in the usual sense.  In
particular, it is in general not the case that $S^{-1}S = I$.  If, however, $S$ is an
invertible function, the relational inverse and functional inverse coincide, so the
notation $S^{-1}$ can be used without ambiguity.

An operator $R$ on $L_2$ or $L_{2, e}$ is said to be \emph{causal} if $P_T(R(u)) =
P_T(R(P_T u))$ for all $u$.

\subsection{Incremental input/output analysis}\label{sec:input-output}

The input/output approach to nonlinear systems analysis originated in the
dissertation of George Zames \autocite{Zames1960} and early work by Irwin Sandberg
\autocite{Sandberg1965a}.  Noting that the amplification of a nonlinear system was,
in general, dependent on the input, Zames introduced the notion of the \emph{incremental gain}
of a system, which characterizes the worst case amplification a system is capable of. 
Incremental properties feature heavily in Desoer and Vidyasagar's classic text
\autocite{Desoer1975}.  The general pattern is that requiring a property to be
verified for every possible input, rather than just a single distinguished input
($u = 0$, for example), leads to much stronger results, often comparable to the
results that may be proved for linear systems.  This is perhaps unsurprising, as any
property of a linear system is automatically incremental.  The study of properties relative to the
zero input, however, have dominated nonlinear input/output theory since these early days.

We now define the input/output properties of systems considered in this paper.  
We begin with a definition of incremental stability.

\begin{definition}
        Let $R: L_2 \to L_2$.  The \emph{incremental $L_2$ gain} of $R$ is 
        \begin{IEEEeqnarray*}{rCl}
                \mu \coloneqq \sup_{u_1, u_2 \in \dom{R}} \frac{\norm{y_1 -
                                y_2}}{\norm{u_1 - u_2}}, 
        \end{IEEEeqnarray*}
        where $y_1 \in R(u_1)$, $y_2 \in R(u_2)$.  If $\mu < \infty$, $R$ is said
        to have \emph{finite incremental $L_2$ gain}, or be \emph{incrementally
        $L_2$ stable}.
\end{definition}

The second class of properties relate to passivity.

\begin{definition}\label{def:passive}
        Let $R: L_{2} \to L_{2}$. Then:
        \begin{enumerate}
                \item $R$ is said to be \emph{incrementally positive} if
                \begin{IEEEeqnarray*}{rCl}
                        \bra{u_1 - u_2}\ket{y_1 - y_2} \geq 0
                \end{IEEEeqnarray*}
                for all $u_1, u_2 \in \dom{R}$ and $y_1 \in R(u_1), y_2 \in R(u_2)$.
        \item $R$ is said to be \emph{$\lambda$-input-strict incrementally positive} if
                \begin{IEEEeqnarray*}{rCl}
                        \bra{u_1 - u_2}\ket{y_1 - y_2} \geq \lambda \norm{
                        u_1 - u_2}^2
                \end{IEEEeqnarray*}
                for all $T \geq 0$, all $u_1, u_2 \in \dom{R}$ and $y_1 \in R(u_1), y_2 \in R(u_2)$.
        \item $R$ is said to be \emph{$\gamma$-output-strict incrementally positive} if
                \begin{IEEEeqnarray*}{rCl}
                        \bra{u_1 - u_2}\ket{y_1 - y_2} \geq \gamma \norm{
                        y_1 - y_2}^2
                \end{IEEEeqnarray*}
                for all $u_1, u_2 \in \dom{R}$ and $y_1 \in R(u_1),
                y_2 \in R(u_2)$.\qedhere
        \end{enumerate}
\end{definition}

For causal operators on $L_2$, incremental positivity coincides with the stronger
property of incremental passivity (the proof follows the same lines as
\autocite[Lemma 2, p. 200]{Desoer1975}).  In the language of optimization, incremental
positivity is called \emph{monotonicity}.
Monotone operator theory originated in the study of networks of nonlinear resistors.
The prototypical monotone operator was Duffin's \emph{quasi-linear} resistor
\autocite{Duffin1946}, a resistor with increasing, but not necessarily linear, $i-v$
characteristic.  The modern notion of a monotone operator was introduced by
\textcite{Minty1960, Minty1961}. Following the influential paper of
\textcite{Rockafellar1976}, monotone operators have become
a cornerstone of optimization theory.  Monotone operator methods have seen a surge of
interest in the last decade, due to their applicability to large-scale and nonsmooth
problems \cite{Ryu2016, Ryu2021a, Combettes2011, Combettes2018}.  SRGs have been
developed to prove convergence of these optimization methods.

Monotone operator theory is closely related to the classical input/output theory of
nonlinear systems.  All of the properties studied in the theory of monotone
operators correspond to a property in input/output system theory, the difference
being that the former are defined for an arbitrary Hilbert space, while the latter
are defined on $L_2$ or $L_{2,e}$.  
Table~\ref{tab:dictionary} shows these equivalences.  

\def\arraystretch{1.5}
\begin{table*}[ht]
        \centering
        \begin{tabular}{c p{3cm} p{3cm} p{2cm}}
                property & $L_2$ & $L_{2,e}$ & Hilbert \\ \hline
                $\norm{y_1 - y_2} \leq \mu \norm{u_1 - u_2}$ & finite incremental
                gain & finite incremental gain$^*$ & Lipschitz \\
                $\bra{u_1 - u_2}\ket{y_1 - y_2} \geq 0$ & incremental positivity &
                incremental passivity$^*$ & monotonicity \\
                $\bra{u_1 - u_2}\ket{y_1 - y_2} \geq \lambda \norm{u_1 - u_2}^2$ &
                incremental input-strict positivity & incremental input-strict
                passivity$^*$ & strong monotonicity or coercivity \\
                $\bra{u_1 - u_2}\ket{y_1 - y_2} \geq \gamma \norm{y_1 - y_2}^2$ &
                incremental output-strict positivity & incremental output-strict
                passivity$^*$ & cocoercivity 
        \end{tabular}
        \caption{A partial bilingual dictionary from input/output system theory to
                monotone operator theory.  The first column gives properties between
                pairs of inputs $u_1, u_2$ and the corresponding outputs $y_1, y_2$.
                Greek letters denote positive scalars.
                The second and third columns give the system theory names of
properties of operators on either $L_2$ or $L_{2,e}$, as defined by
\autocite{Desoer1975}.  The names in the second column ($^*$) apply if the property
holds for every truncation $\bra{P_T u_1 - P_T u_2}\ket{P_T y_1 - P_T y_2}$, $T > 0$.  The fourth column gives the names of these properties in
monotone operator theory, for operators on an arbitrary Hilbert space - see, for
example, \autocite{Ryu2021}.}
        \label{tab:dictionary}
\end{table*}

\section{Scaled relative graphs}\label{sec:srg}
We define SRGs in the same way as \textcite{Ryu2021}, with the minor modification of
allowing complex valued inner products. 

Let $\mathcal{L}$ be a Hilbert space.
The angle between $u, y \in \mathcal{L}$ is defined as
\begin{IEEEeqnarray*}{rCl}
        \angle(u, y) \coloneqq \acos \frac{\Re \bra{u}\ket{y}}{\norm{u}\norm{y}} \in [0, \pi]. 
\end{IEEEeqnarray*}

Let $R: \mathcal{L} \to \mathcal{L}$ be an operator.  Given $u_1, u_2 \in
\mathcal{U} \subseteq \mathcal{L}$, $u_1 \neq u_2$, define the set of complex numbers $z_R(u_1, u_2)$ by
\begin{IEEEeqnarray*}{rCl}
        z_R(u_1, u_2) \coloneqq &&\left\{\frac{\norm{y_1 - y_2}}{\norm{u_1 - u_2}} e^{\pm j\angle(u_1 -
u_2, y_1 - y_2)}\right.\\&&\bigg|\; y_1 \in R(u_1), y_2 \in R(u_2) \bigg\}.
\end{IEEEeqnarray*}
If $u_1 = u_2$ and there are corresponding
outputs $y_1 \neq y_2$, then
$z_R(u_1, u_2)$ is defined to be $\{\infty\}$.  If $R$ is single valued at $u_1$,
$z_R(u_1, u_1)$ is the empty set.

The \emph{Scaled Relative Graph} (SRG) of $R$ over $\mathcal{U} \subseteq \mathcal{L}$ is then given by
\begin{IEEEeqnarray*}{rCl}
        \srg[\mathcal{U}]{R} \coloneqq \bigcup_{u_1, u_2 \in\, \mathcal{U}}  z_R(u_1, u_2).
\end{IEEEeqnarray*}
If $\mathcal{U} = \mathcal{L}$, we write $\srg{R} \coloneqq \srg[\mathcal{L}]{R}$.

If $R$ is linear and $\dom{R}$ is a linear subspace of $\mathcal{L}$, $Ru_1 -
Ru_2 =
R(u_1 - u_2) = Rv$ for some $v \in \dom{R}$,
and we can define
\begin{IEEEeqnarray*}{rCl}
        z_R(v) \coloneqq \frac{\norm{R v}}{\norm{v}} e^{\pm j\angle(v, Rv)}
\end{IEEEeqnarray*}
and
\begin{IEEEeqnarray*}{rCl}
        \srg[\dom{R}]{R} \coloneqq \left\{ z_R(v) \middle| v \in
        \dom{R}, v \neq 0 \right\}.
\end{IEEEeqnarray*}
In the special case that $R$ is linear and time invariant with transfer function
$R(s)$, and $v(t) = e^{j\omega t}$,
$\lim_{T \to \infty}\norm{R(P_T v)}/\norm{P_T v} = |R(j\omega)|$ and
$\lim_{T\to\infty}\angle(P_T v, R(P_T v))$ = $|\arg{R(j\omega)}|$ (where
$\arg{R(j\omega)}$ is measured between $-\pi$ and $\pi$).  Thus
the gain and phase of the SRG generalize the classical notions of the gain and phase
of an LTI transfer function.  

\subsection{System properties from SRGs}
If $\mathcal{A}$ is a class of operators, we define the SRG of $\mathcal{A}$ by
\begin{IEEEeqnarray*}{rCl}
        \srg{\mathcal{A}} \coloneqq \bigcup_{R \in \mathcal{A}} \srg{R}.
\end{IEEEeqnarray*}
Note that a class of operators can be a single operator.

A class $\mathcal{A}$, or its SRG, is called \emph{SRG-full} if
\begin{IEEEeqnarray*}{rCl}
        R \in \mathcal{A}\quad \iff \quad \srg{R} \subseteq \srg{\mathcal{A}}.
\end{IEEEeqnarray*}
By construction, the implication $R \in \mathcal{A} \implies \srg{R} \subseteq
\srg{\mathcal{A}}$ is true.  The value of SRG-fullness is in the reverse implication: 
$\srg{R} \subseteq \srg{\mathcal{A}} \implies R \in \mathcal{A}$.  This allows class
membership to be tested graphically.  If $\mathcal{A}$ is the class of systems with a
particular system property, SRG-fullness of $\mathcal{A}$ allows this property to be
verified for a particular operator $R$ by plotting its SRG.  If $\srg{R} \subseteq
\srg{\mathcal{A}}$, $R$ has the property.

The following proposition gives the SRGs of the classical system properties introduced in
Section~\ref{sec:input-output}.  
\begin{proposition}\label{prop:properties}
        The SRGs of incrementally positive systems (top left),
        input-strict incrementally positive systems (top right), output-strict
        incrementally positive systems (bottom right) and incrementally $L_2$ bounded
        systems (bottom left) are shown below.  
        \begin{center}
        \includegraphics{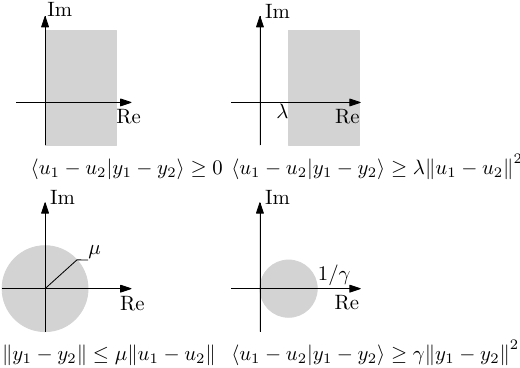}
        \end{center}
        All of these classes are SRG-full.
\end{proposition}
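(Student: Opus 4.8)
The plan is to reduce all four claims to a single one-line translation between a point of an SRG and the inequality defining the corresponding class, and then to read off SRG-fullness.

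First I would record the basic identity. Fix an operator $R$ and a pair $u_1 \neq u_2$ in $\dom{R}$ with outputs $y_i \in R(u_i)$; put $r \coloneqq \norm{y_1 - y_2}/\norm{u_1 - u_2}$ and $\theta \coloneqq \angle(u_1 - u_2,\, y_1 - y_2) \in [0,\pi]$, so that this pair contributes the two points $z = r e^{\pm j\theta}$ to $\srg{R}$. Then $|z| = r$ and, since $\cos\theta = \Re\bra{u_1 - u_2}\ket{y_1 - y_2}/(\norm{u_1 - u_2}\norm{y_1 - y_2})$,
\[
  \Re z \;=\; r\cos\theta \;=\; \frac{\Re\bra{u_1 - u_2}\ket{y_1 - y_2}}{\norm{u_1 - u_2}^2},
\]
and, when $y_1 \neq y_2$, also $\Re z/|z|^2 = \Re\bra{u_1-u_2}\ket{y_1-y_2}/\norm{y_1-y_2}^2$. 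Hence each defining inequality becomes, pair by pair, a condition on $\Re z$ and $|z|$ alone: incremental positivity $\Leftrightarrow \Re z \ge 0$; $\lambda$-input-strict positivity $\Leftrightarrow \Re z \ge \lambda$; $\gamma$-output-strict positivity $\Leftrightarrow \Re z \ge \gamma |z|^2$, i.e.\ $|z - 1/(2\gamma)| \le 1/(2\gamma)$; incremental $L_2$ bound $\mu$ $\Leftrightarrow |z| \le \mu$. These are exactly the closed right half-plane, the closed half-plane $\{\Re z \ge \lambda\}$, the closed disc of radius $1/(2\gamma)$ centred at $1/(2\gamma)$, and the closed disc of radius $\mu$ centred at the origin -- the four regions in the figure. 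The degenerate pairs are checked separately: a pair with $y_1 = y_2$ contributes the point $0$, which lies in all four regions and obeys all four inequalities; a pair with $u_1 = u_2$, $y_1 \neq y_2$ contributes $\infty$, which I place in the two half-planes (the inequalities there reduce to $0 \ge 0$ and $0 \ge \lambda\cdot 0$), while no such pair can occur for a bounded or output-strict operator (those inequalities force $\norm{y_1-y_2}\le 0$, resp.\ $0 \ge \gamma\norm{y_1-y_2}^2$), consistent with $\infty$ lying in neither disc.

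SRG-fullness is then immediate from the equivalences above. If $R$ is in class $\mathcal A$, every pair satisfies the inequality, so every point it contributes lies in the region and $\srg{R} \subseteq \srg{\mathcal A}$; conversely, if $\srg{R}$ is contained in the region, then for every pair the contributed point obeys the constraint on $\Re z$, $|z|$, which by the displayed identity is precisely the defining inequality for that pair, so $R \in \mathcal A$ -- this is the reverse implication that SRG-fullness asserts. To upgrade ``$\srg{\mathcal A}\subseteq$ region'' to the equality claimed by the figure, I would realise every point of the region by a member of $\mathcal A$: on $L_2(\C)$ the constant gain $u \mapsto ku$ has $\srg{R} = \{k,\bar k\}$ and lies in $\mathcal A$ exactly when $k$ meets the constraint above, so letting $k$ sweep the region (using symmetry about the real axis) covers it, and the point $\infty$ in the two half-plane cases is contributed by the monotone operator with domain $\{0\}$ and $R(0)=L_2$.

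The main obstacle is not analytic -- the entire content is the elementary identity $\Re z = \Re\bra{u_1-u_2}\ket{y_1-y_2}/\norm{u_1-u_2}^2$. The only care required is the bookkeeping for the degenerate pairs and the point at infinity, and consistently taking real parts of the (possibly complex) inner product; the reverse, fullness, direction is exactly what that identity is engineered to make trivial.
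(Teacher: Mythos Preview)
Your proposal is correct and is precisely the ``quick calculation'' the paper alludes to: the paper's own proof simply cites \textcite{Ryu2021} for both the shapes and SRG-fullness (via their Thm.~3.5), whereas you carry out that computation explicitly via the identity $\Re z = \Re\bra{u_1-u_2}\ket{y_1-y_2}/\norm{u_1-u_2}^2$. The only extra content in your write-up is the explicit realisation of every point by a constant-gain operator to upgrade containment to equality, which the paper leaves implicit in the citation.
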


\begin{proof}
These SRGs are proved in
\autocite{Ryu2021}, and all of the shapes follow from quick calculations.
SRG-fullness follows from \autocite[Thm. 3.5]{Ryu2021}.
\end{proof}

SRG-fullness of the classes of Proposition~\ref{prop:properties} means, for example,
that if the SRG of a system lies in the right half plane, the system is incrementally
positive, and if the SRG of a system is bounded, the system has
finite incremental $L_2$ gain.  These are reminiscent of the facts
that an LTI system is passive if its Nyquist diagram lies in the right half plane,
and has finite $H_\infty$ norm if its Nyquist diagram is bounded.  We will show in
Section~\ref{sec:LTI} that Proposition~\ref{prop:properties} is indeed a
generalization of these classical facts.

The properties of finite incremental $L_2$ gain and incremental positivity are particular
examples of incremental Integral Quadratic Constraints (IQCs)
\autocite{Megretski1997}.
A striking corollary of \textcite[Thm. 3.5]{Ryu2021} is that any SRG
defined by a static incremental IQC is SRG-full.

\begin{proposition}
        Let $u_i(t)$ denote the input to an arbitrary operator on $L_2$, and $y_i(t)$
        denote a corresponding output.  Let $\Delta u = u_1 - u_2$ and $\Delta y =
        y_1 - y_2$, and $\hat{x}(\omega)$ denote the Fourier transform of signal
        $x(t)$.  Then the classes of operators which obey either of the constraints
        \begin{IEEEeqnarray}{rCl}
                \int^\infty_{-\infty}\begin{pmatrix} \Delta\hat{u}(\omega)\\ \Delta
                \hat{y}(\omega) \end{pmatrix}\tran
                \begin{pmatrix} a & b\\ c & d \end{pmatrix} \begin{pmatrix} \Delta
        \hat{u}(\omega)\\ \Delta \hat{y}(\omega) \end{pmatrix} \dd{\omega} &\geq& 0,\label{eq:quad_1}\\
        \int^\infty_{0} \begin{pmatrix} \Delta u(t)\\ \Delta y(t) \end{pmatrix}\tran
        \begin{pmatrix} a & b\\ c & d \end{pmatrix} \begin{pmatrix} \Delta u(t)\\
\Delta y(t) \end{pmatrix}\dd{t} &\geq& 0,\label{eq:quad_2}
        \end{IEEEeqnarray}
        where $a, b, c, d \in \R$, are SRG-full. 
\end{proposition}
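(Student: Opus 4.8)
The plan is to reduce both statements to the quoted SRG-fullness result for static incremental IQCs, \autocite[Thm. 3.5]{Ryu2021}, by showing that each constraint \eqref{eq:quad_1} and \eqref{eq:quad_2} is equivalent to a condition of the form ``$z_R(u_1,u_2)$ lies in a fixed region of $\C$ for all admissible pairs'', and that the region in question is exactly the kind of region covered by that theorem. First I would recall that, by the definition of $z_R$, a point $z = r e^{\pm j\theta}$ in the SRG of $R$ encodes $r = \norm{\Delta y}/\norm{\Delta u}$ and $\cos\theta = \Re\bra{\Delta u}\ket{\Delta y}/(\norm{\Delta u}\norm{\Delta y})$. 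Hence $\norm{\Delta u}^2$, $\norm{\Delta y}^2$ and $\Re\bra{\Delta u}\ket{\Delta y}$ are, up to the positive scalar $\norm{\Delta u}^2$, equal to $1$, $|z|^2$ and $\Re z$ respectively. Therefore a quadratic form $a\norm{\Delta u}^2 + (b+c)\Re\bra{\Delta u}\ket{\Delta y} + d\norm{\Delta y}^2 \ge 0$ is equivalent to $a + (b+c)\Re z + d|z|^2 \ge 0$ for every $z \in z_R(u_1,u_2)$ — i.e. to containment of $\srg{R}$ in the sublevel set $\{z \in \C : d|z|^2 + (b+c)\Re z + a \ge 0\}$, a disk, half-plane, or disk-complement depending on the sign of $d$ and the discriminant.

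Next I would handle \eqref{eq:quad_2}, the time-domain constraint, which is the more direct of the two. Expanding the integrand using $\bra{\Delta u}\ket{\Delta y} = \int_0^\infty \Delta u(t)\tran \Delta y(t)\,\dd{t}$ (real signals, so no conjugation issue), the constraint is literally $a\norm{\Delta u}^2 + (b+c)\bra{\Delta u}\ket{\Delta y} + d\norm{\Delta y}^2 \ge 0$, which by the paragraph above is the SRG-containment condition, and this is precisely a static incremental IQC in the sense of \autocite{Ryu2021}; SRG-fullness is then exactly \autocite[Thm. 3.5]{Ryu2021}. For \eqref{eq:quad_1}, the frequency-domain version, I would invoke Parseval/Plancherel: $\int_{-\infty}^\infty \Delta\hat u(\omega)\tran \Delta\hat u(\omega)\,\dd{\omega}$ is proportional to $\norm{\Delta u}^2$, and similarly the cross term $\int \Re\big(\Delta\hat u(\omega)^* \Delta\hat y(\omega)\big)\,\dd\omega$ is proportional to $\Re\bra{\Delta u}\ket{\Delta y}$ — one has to be a little careful that the off-diagonal entries $b,c$ combine, via the Hermitian symmetrization forced by integrating over $\omega \in (-\infty,\infty)$ against conjugate-symmetric transforms of real signals, into $(b+c)$ times the real part, so that \eqref{eq:quad_1} collapses to the same scalar inequality $a + (b+c)\Re z + d|z|^2 \ge 0$ on every element of $\srg{R}$. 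Thus \eqref{eq:quad_1} and \eqref{eq:quad_2} define the same class, and the same appeal to \autocite[Thm. 3.5]{Ryu2021} gives SRG-fullness.

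The main obstacle, I expect, is bookkeeping rather than anything deep: one must verify carefully that the matrix $\begin{psmallmatrix} a & b \\ c & d \end{psmallmatrix}$ with $b \neq c$ genuinely contributes only through $b + c$ — because only the symmetric part of a quadratic form survives, and over the symmetric frequency axis the antisymmetric part integrates to an imaginary quantity that is killed by the implicit real-part (the integrand being a real number once the conjugate-symmetry of $\Delta\hat u, \Delta\hat y$ is used). A secondary point is to confirm that the region $\{d|z|^2 + (b+c)\Re z + a \ge 0\}$, in all sign cases, is of the form to which \autocite[Thm. 3.5]{Ryu2021} applies (it is: that theorem covers arbitrary static incremental IQCs, and these are exactly the quadratic forms in $\norm{\Delta u}$, $\norm{\Delta y}$, $\Re\bra{\Delta u}\ket{\Delta y}$). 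Once these two checks are in place the proposition is immediate.
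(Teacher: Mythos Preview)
Your proposal is correct and follows essentially the same approach as the paper: expand each constraint into the scalar inequality $a\norm{\Delta u}^2 + (b+c)\bra{\Delta u}\ket{\Delta y} + d\norm{\Delta y}^2 \ge 0$ (using Parseval for \eqref{eq:quad_1}), then invoke \autocite[Thm.~3.5]{Ryu2021}. The paper's proof is simply a terser version of yours, omitting the explicit translation into the $z$-plane region and the discussion of why only $b+c$ survives, both of which you spell out but which are absorbed into the cited theorem.
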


\begin{proof}
       Equation~\eqref{eq:quad_1} gives 
       \begin{IEEEeqnarray*}{rCl}
               a\norm{\Delta \hat{u}}^2 + (b + c)\bra{\Delta \hat{u}}\ket{\Delta
               \hat{y}} + d \norm{\Delta \hat{y}}^2 \geq 0.
       \end{IEEEeqnarray*}
       By Parseval's theorem, this is equivalent to
       \begin{IEEEeqnarray*}{rCl}
               a\norm{\Delta u}^2 + (b + c)\bra{\Delta u}\ket{\Delta y} + d
               \norm{\Delta y}^2 \geq 0,
       \end{IEEEeqnarray*}
       which is also implied by \eqref{eq:quad_2}.  The result then follows from
       \autocite[Thm. 3.5]{Ryu2021}.
\end{proof}

A class of operators defined defined by a geometric region
is SRG-full.  
\begin{proposition}
        Let $\mathcal{C} \subseteq \mathbb{C}$.  The class of operators $\mathcal{A}$
        defined by
        \begin{IEEEeqnarray*}{rCl}
                \mathcal{A} \coloneqq \{R \mbox{ an operator } | \srg{R} \subseteq
                \mathcal{C}\}
        \end{IEEEeqnarray*}
        is SRG-full.
\end{proposition}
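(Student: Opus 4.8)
The plan is simply to unwind the definition of SRG-fullness; the statement turns out to be a formal consequence of how $\srg{\mathcal{A}}$ is built, so there is little to do beyond chaining two inclusions. Recall that $\mathcal{A}$ is SRG-full precisely when the two implications $R \in \mathcal{A} \implies \srg{R} \subseteq \srg{\mathcal{A}}$ and $\srg{R} \subseteq \srg{\mathcal{A}} \implies R \in \mathcal{A}$ both hold. As already observed in the text, the first (forward) implication is automatic from the definition $\srg{\mathcal{A}} = \bigcup_{R \in \mathcal{A}} \srg{R}$, so the only thing to prove is the converse.

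The first step I would carry out is to locate $\srg{\mathcal{A}}$ relative to $\mathcal{C}$. By the defining property of $\mathcal{A}$, every $R \in \mathcal{A}$ satisfies $\srg{R} \subseteq \mathcal{C}$; taking the union over all such $R$ gives $\srg{\mathcal{A}} = \bigcup_{R \in \mathcal{A}} \srg{R} \subseteq \mathcal{C}$. The second and final step is then immediate: if $R$ is any operator with $\srg{R} \subseteq \srg{\mathcal{A}}$, chaining with the previous inclusion yields $\srg{R} \subseteq \srg{\mathcal{A}} \subseteq \mathcal{C}$, which is exactly the membership condition $R \in \mathcal{A}$. This establishes the converse implication and hence SRG-fullness.

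There is no genuine obstacle in this argument — the content is entirely definitional. The only point worth noting is the degenerate case $\mathcal{A} = \emptyset$, where $\srg{\mathcal{A}} = \emptyset$ and the hypothesis $\srg{R} \subseteq \emptyset$ forces $\srg{R} = \emptyset$, so that $\srg{R} \subseteq \mathcal{C}$ holds vacuously and $R \in \mathcal{A}$; the chain of inclusions above covers this case without modification, so no separate treatment is required.
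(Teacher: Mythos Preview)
Your argument is correct and is essentially the paper's proof written out in full. The paper dispatches the proposition in a single line — it just observes that the defining condition $R \in \mathcal{A} \iff \srg{R} \subseteq \mathcal{C}$ already has the shape of the SRG-fullness biconditional — whereas you make explicit the intermediate step $\srg{\mathcal{A}} \subseteq \mathcal{C}$ that bridges the gap between $\srg{\mathcal{A}}$ and $\mathcal{C}$; this is the same idea, just unpacked. Your closing remark on $\mathcal{A} = \emptyset$ is harmless but unnecessary: since the empty relation has empty SRG and therefore always lies in $\mathcal{A}$, that case never actually arises.
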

\begin{proof}
        The definition of $\mathcal{A}$ can be written as
        \begin{IEEEeqnarray*}{rCl}
                R \in \mathcal{A} \iff \srg{R} \subseteq \mathcal{C},
        \end{IEEEeqnarray*}
        which is the definition of SRG-fullness.
\end{proof}

This fact is particularly useful for system analysis, as it allows the
SRG of an operator to be over-approximated by a geometric region if, for example, the
precise SRG is unknown, or the SRG does not obey the properties necessary to apply a
theorem.  Over-approximating an SRG simply amounts to making the analysis more
conservative.

\subsection{Interconnections}

Under mild conditions on the SRG, system interconnections correspond to geometric
manipulations of their SRGs.  These interconnection results are proved by
\textcite{Ryu2021} in Theorems 4.1-4.5.  We recall the statements of these theorems
in the following five propositions.

\begin{proposition}
        If $\mathcal{A}$ and $\mathcal{B}$ are SRG-full, then
        $\mathcal{A}\cap\mathcal{B}$ is SRG-full, and
        \begin{IEEEeqnarray*}{rCl}
                \srg{\mathcal{A}\cap\mathcal{B}} = \srg{\mathcal{A}}
                \cap\srg{\mathcal{B}}.
        \end{IEEEeqnarray*}
\end{proposition}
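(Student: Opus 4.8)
The plan is to separate the set identity from the SRG-fullness claim, and to notice that once the ``easy half'' of the identity is in hand, SRG-fullness comes for free. First I would prove $\srg{\mathcal{A}\cap\mathcal{B}}\subseteq\srg{\mathcal{A}}\cap\srg{\mathcal{B}}$ directly from the definitions: if $z\in\srg{\mathcal{A}\cap\mathcal{B}}$ then $z\in\srg{R}$ for some $R\in\mathcal{A}\cap\mathcal{B}$; since $R\in\mathcal{A}$ we have $\srg{R}\subseteq\srg{\mathcal{A}}$ (the trivial half of SRG-fullness), and likewise $\srg{R}\subseteq\srg{\mathcal{B}}$, so $z$ lies in the intersection. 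Next, for SRG-fullness of $\mathcal{A}\cap\mathcal{B}$, the implication $R\in\mathcal{A}\cap\mathcal{B}\implies\srg{R}\subseteq\srg{\mathcal{A}\cap\mathcal{B}}$ holds by construction; conversely, if $\srg{R}\subseteq\srg{\mathcal{A}\cap\mathcal{B}}$, then by the inclusion just proved $\srg{R}\subseteq\srg{\mathcal{A}}$, so SRG-fullness of $\mathcal{A}$ gives $R\in\mathcal{A}$, and symmetrically $R\in\mathcal{B}$, hence $R\in\mathcal{A}\cap\mathcal{B}$.

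What remains, and all that is substantial, is the reverse inclusion $\srg{\mathcal{A}}\cap\srg{\mathcal{B}}\subseteq\srg{\mathcal{A}\cap\mathcal{B}}$. Fix $z$ in the left-hand side. I would use a realization lemma: for every $z\in\mathbb{C}\cup\{\infty\}$ there is an operator $R_z$ on $\mathcal{L}$ with $\srg{R_z}=\{z,\bar z\}$. For finite $z=re^{j\theta}$ with $r\geq 0$ and $\theta\in[0,\pi]$, take $R_z$ to be the gain-$r$ rotation by angle $\theta$ on a two-dimensional subspace $V$ of $\mathcal{L}$, with $\dom{R_z}=V$; a short computation gives $\norm{R_z v}=r\norm{v}$ and $\angle(v,R_z v)=\theta$ for every nonzero $v\in V$, so $z_{R_z}(v)=re^{\pm j\theta}$ and hence $\srg{R_z}=\{z,\bar z\}$. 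For $z=\infty$, take the partial, multivalued operator with graph $\{(0,0),(0,e)\}$ for some $e\neq 0$: its only admissible input pair is $0=0$, with distinct outputs, so $\srg{R_z}=\{\infty\}$. Since each $\srg{R}$, and therefore $\srg{\mathcal{A}}$ and $\srg{\mathcal{B}}$, is closed under complex conjugation by construction, $z\in\srg{\mathcal{A}}$ already forces $\{z,\bar z\}=\srg{R_z}\subseteq\srg{\mathcal{A}}$, so SRG-fullness of $\mathcal{A}$ yields $R_z\in\mathcal{A}$, and likewise $R_z\in\mathcal{B}$. Thus $R_z\in\mathcal{A}\cap\mathcal{B}$ and $z\in\srg{R_z}\subseteq\srg{\mathcal{A}\cap\mathcal{B}}$, which finishes the argument.

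The main obstacle is the realization lemma, and within it the degenerate cases rather than the generic scaled rotation: realizing $\{\infty\}$ needs a multivalued operator with a one-point domain, so it relies on the paper's convention that operators may be partial and multivalued; and the case $\dim\mathcal{L}=1$ over $\R$ needs a separate remark, since there every SRG is automatically real and scalar multiples of the identity already realize every point that can occur. Everything else is bookkeeping with the definitions of $\srg{\cdot}$ and SRG-fullness. A slightly cleaner packaging of the fullness claim, which sidesteps any direct use of $\srg{\mathcal{A}\cap\mathcal{B}}$, is to put $\mathcal{C}\coloneqq\srg{\mathcal{A}}\cap\srg{\mathcal{B}}$ and verify that $\mathcal{A}\cap\mathcal{B}$ equals the class $\{R\mid\srg{R}\subseteq\mathcal{C}\}$, which is SRG-full by the earlier proposition on geometric regions; the realization lemma is then still what is needed to identify the SRG of that class with $\mathcal{C}$.
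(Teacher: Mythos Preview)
Your argument is correct. The paper does not prove this proposition itself but simply cites \textcite{Ryu2021}, Theorem 4.1; your proof---the trivial inclusion, then SRG-fullness from that inclusion, and finally the reverse inclusion via a realization lemma producing an operator with $\srg{R_z}=\{z,\bar z\}$---is exactly the route taken in that reference, so there is nothing to compare.
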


\begin{proposition}
        Let $\alpha \in \R, \alpha \neq 0$. If $\mathcal{A}$ is a class of operators,
        \begin{IEEEeqnarray*}{rCl}
                \srg{\alpha\mathcal{A}} &=& \srg{\mathcal{A}\alpha} =
                \alpha\srg{\mathcal{A}},\\
                \srg{I + \mathcal{A}} &=& 1 + \srg{\mathcal{A}}.
        \end{IEEEeqnarray*}
        Furthermore, if $\mathcal{A}$ is SRG-full, then $\alpha\mathcal{A}$, $\mathcal{A}\alpha$
        and $I + \mathcal{A}$ are SRG-full.
\end{proposition}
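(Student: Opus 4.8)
The plan is to prove everything first for a single operator $R$ and a single pair of inputs, then recover the class statements by taking unions. The key observation is the following geometric reading of $z_R$: for $u_1\neq u_2$ and outputs $y_1,y_2$, write $w\coloneqq u_1-u_2$ and $z\coloneqq y_1-y_2$; then the complex number $\frac{\norm{z}}{\norm{w}}e^{\pm j\angle(w,z)}$ has real part $\Re\bra{w}\ket{z}/\norm{w}^2$ and imaginary part $\pm\norm{q}/\norm{w}$, where $q\coloneqq z-(\Re\bra{w}\ket{z}/\norm{w}^2)\,w$ is the component of $z$ orthogonal to $w$ with respect to $\Re\bra{\cdot}\ket{\cdot}$. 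In other words $z_R(u_1,u_2)$ is just $z$, rescaled by $1/\norm{w}$ and expressed in an orthonormal frame of the real span of $\{w,z\}$ whose first axis is $w/\norm{w}$, with the $\pm$ accounting for the undetermined orientation of the second axis. The degenerate cases ($u_1=u_2$, where $z_R$ is $\{\infty\}$ or $\emptyset$; $R$ single-valued; $z$ collinear with $w$) are immediate from the conventions $\alpha\cdot\infty=\infty$ and $1+\infty=\infty$ and are dealt with separately.

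From this picture the three pointwise identities are short calculations. For output scaling, $\alpha R$ replaces $z$ by $\alpha z$; since $\alpha\in\R$ one has $\Re\bra{w}\ket{\alpha z}=\alpha\Re\bra{w}\ket{z}$ and $\norm{\alpha z}=|\alpha|\,\norm{z}$, and a direct check (when $\alpha<0$ the angle becomes $\pi-\angle(w,z)$ while the modulus scales by $|\alpha|$, and the resulting conjugate pair $\{|\alpha|\,r\,e^{\pm j(\pi-\theta)}\}$ equals $\{\alpha\,r\,e^{\pm j\theta}\}$ because the SRG already contains both conjugates) gives $z_{\alpha R}(u_1,u_2)=\alpha\,z_R(u_1,u_2)$ for every $\alpha\neq 0$. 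For input scaling, $R\alpha$ acts at $u_1,u_2$ exactly as $R$ acts at $\alpha u_1,\alpha u_2$, so $w$ is replaced by $\alpha w$ and $z$ is unchanged; the modulus picks up a factor $|\alpha|$ and, for $\alpha<0$, the angle picks up the same $\pi-\theta$ flip, yielding $z_{R\alpha}(u_1,u_2)=\alpha\,z_R(\alpha u_1,\alpha u_2)$. For the shift, $I+R$ replaces $z$ by $w+z$; since $\Re\bra{w}\ket{w+z}=\norm{w}^2+\Re\bra{w}\ket{z}$ and the orthogonal component of $w+z$ is again $q$, the real part gains $1$ and the imaginary part is unchanged, so $z_{I+R}(u_1,u_2)=1+z_R(u_1,u_2)$; the identical computation with $cw+z$ in place of $w+z$ gives $z_{cI+R}(u_1,u_2)=c+z_R(u_1,u_2)$ for any $c\in\R$, which I will need below for $c=-1$.

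Taking the union over output pairs promotes these to identities for the sets $z_R(u_1,u_2)$; taking the union over $u_1,u_2\in\mathcal{L}$ promotes them to $\srg{\alpha R}=\alpha\srg{R}$, $\srg{R\alpha}=\alpha\srg{R}$ (here using that $(u_1,u_2)\mapsto(\alpha u_1,\alpha u_2)$ is a bijection of $\mathcal{L}\times\mathcal{L}$ for $\alpha\neq 0$) and $\srg{I+R}=1+\srg{R}$; and taking the union over $R\in\mathcal{A}$ gives the three displayed equalities. For SRG-fullness the forward implication is automatic for any class, so it remains to pull back. Since $\zeta\mapsto\alpha\zeta$ and $\zeta\mapsto 1+\zeta$ are bijections of $\mathbb{C}$ and $R\mapsto\alpha R$, $R\mapsto R\alpha$, $R\mapsto I+R$ are invertible on operators with inverses $\alpha^{-1}R$, $R\alpha^{-1}$, $R-I$, one argues, e.g., that if $\srg{S}\subseteq\srg{\alpha\mathcal{A}}=\alpha\srg{\mathcal{A}}$ then $\srg{\alpha^{-1}S}=\alpha^{-1}\srg{S}\subseteq\srg{\mathcal{A}}$, so SRG-fullness of $\mathcal{A}$ gives $\alpha^{-1}S\in\mathcal{A}$, hence $S\in\alpha\mathcal{A}$; the cases $\mathcal{A}\alpha$ and $I+\mathcal{A}$ are identical, the latter using $\srg{S-I}=-1+\srg{S}$ from the $c=-1$ computation.

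There is no deep obstacle: the entire content is planar Euclidean geometry in the real span of $\{w,z\}$, and indeed these identities are recorded as part of Theorems~4.1--4.5 of \autocite{Ryu2021}. The only points that genuinely require care are the sign bookkeeping when $\alpha<0$ — one must verify that negating $\alpha$ reflects the SRG through the \emph{origin}, not merely through the imaginary axis, which holds precisely because the SRG is symmetric about the real axis by construction — and checking that the union over a class commutes with the geometric operations, which is exactly where SRG-fullness of $\mathcal{A}$ enters.
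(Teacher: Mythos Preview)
Your proof is correct. The paper does not give its own proof of this proposition: it simply quotes the statement from \autocite{Ryu2021}, attributing it to Theorems~4.1--4.5 there. Your write-up is therefore more than what the paper provides --- a self-contained verification via the orthogonal decomposition of $z=y_1-y_2$ along $w=u_1-u_2$, which makes the real/imaginary parts of $z_R(u_1,u_2)$ explicit and turns each claim into a two-line calculation. The sign bookkeeping for $\alpha<0$ (using the built-in conjugate symmetry of the SRG) and the bijection $(u_1,u_2)\mapsto(\alpha u_1,\alpha u_2)$ for the input-scaling case are exactly the points that need care, and you handle both correctly; the SRG-fullness argument by pulling back along the inverse transformations is the standard one.
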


We define inversion in the
complex plane by the M\"obius transformation $re^{j\omega} \mapsto (1/r)e^{j\omega}$.
This is ``inversion in the unit circle'': points outside the unit circle map to the
inside, and vice versa.  The points $0$ and $\infty$ are exchanged under inversion.
\begin{proposition}\label{prop:inversion}
        If $\mathcal{A}$ is a class of operators, then
        \begin{IEEEeqnarray*}{rCl}
                \srg{\mathcal{A}^{-1}} = (\srg{\mathcal{A}})^{-1}.
        \end{IEEEeqnarray*}
        Furthermore, if $\mathcal{A}$ is SRG-full, then $\mathcal{A}^{-1}$ is
        SRG-full.
\end{proposition}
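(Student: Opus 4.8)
The plan is to first prove the set identity $\srg{R^{-1}} = (\srg{R})^{-1}$ for a single operator $R$, then lift it to classes by taking unions, and finally deduce SRG-fullness of $\mathcal{A}^{-1}$ from the fact that inversion is an order-preserving involution on $\mathbb{C}\cup\{\infty\}$.

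First I would argue at the level of the generating sets $z_R(u_1,u_2)$. A pair of trajectories $(u_1,y_1)$, $(u_2,y_2)$ in the graph of $R$ is, after exchanging the roles of input and output, a pair of trajectories in the graph of $R^{-1}$. When $u_1\neq u_2$ and $y_1\neq y_2$, the contribution $\tfrac{\norm{y_1-y_2}}{\norm{u_1-u_2}}e^{\pm j\angle(u_1-u_2,\,y_1-y_2)}$ to $z_R(u_1,u_2)$ is matched by the contribution $\tfrac{\norm{u_1-u_2}}{\norm{y_1-y_2}}e^{\pm j\angle(y_1-y_2,\,u_1-u_2)}$ to $z_{R^{-1}}(y_1,y_2)$; since $\bra{a}\ket{b}$ and $\bra{b}\ket{a}$ are complex conjugates their real parts coincide, so $\angle$ is symmetric and the second point is exactly the image of the first under the M\"obius inversion $re^{j\omega}\mapsto (1/r)e^{j\omega}$. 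I would then check the boundary conventions: if $u_1=u_2$ with $y_1\neq y_2$ then $z_R(u_1,u_2)=\{\infty\}$, while for $R^{-1}$ the distinct inputs $y_1,y_2$ share the output $u_1=u_2$, contributing $0=\infty^{-1}$; symmetrically, $y_1=y_2$ with $u_1\neq u_2$ gives $0\in z_R(u_1,u_2)$ and $z_{R^{-1}}(y_1,y_2)=\{\infty\}$; and where one of the maps is single-valued the contribution is the empty set, which is its own inverse. Hence $z_{R^{-1}}=(z_R)^{-1}$ after the relabelling, and taking the union over all trajectory pairs gives $\srg{R^{-1}}=(\srg{R})^{-1}$.

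Next, for a class $\mathcal{A}$ I would write $\mathcal{A}^{-1}=\{R^{-1}\mid R\in\mathcal{A}\}$ and use that any map sends unions to unions, so
\[
\srg{\mathcal{A}^{-1}}=\bigcup_{R\in\mathcal{A}}\srg{R^{-1}}=\bigcup_{R\in\mathcal{A}}(\srg{R})^{-1}=\left(\bigcup_{R\in\mathcal{A}}\srg{R}\right)^{-1}=(\srg{\mathcal{A}})^{-1}.
\]
For SRG-fullness, assume $\mathcal{A}$ is SRG-full and take any operator $S$ with $\srg{S}\subseteq\srg{\mathcal{A}^{-1}}=(\srg{\mathcal{A}})^{-1}$. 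Applying inversion, which preserves inclusions and squares to the identity, yields $(\srg{S})^{-1}\subseteq\srg{\mathcal{A}}$, i.e.\ $\srg{S^{-1}}\subseteq\srg{\mathcal{A}}$ by the single-operator identity; SRG-fullness of $\mathcal{A}$ then gives $S^{-1}\in\mathcal{A}$ and hence $S\in\mathcal{A}^{-1}$, which is what SRG-fullness of $\mathcal{A}^{-1}$ demands.

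The only delicate part is the first step: one has to verify that exchanging input and output carries each of the conventions in the definition of $z_R$ (the value $\{\infty\}$ at a repeated input with distinct outputs, and $\emptyset$ at a point of single-valuedness) to the matching convention, so that no point of $\srg{R^{-1}}$ is lost or spuriously introduced. Once the trajectory-wise correspondence $z_{R^{-1}}=(z_R)^{-1}$ is secured, the remainder is routine set manipulation together with the involution property of the M\"obius inversion.
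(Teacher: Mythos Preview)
Your proof is correct. The paper does not give its own argument for this proposition; it simply cites \textcite{Ryu2021}, Theorems 4.1--4.5, where these interconnection rules are established. Your trajectory-wise argument---identifying each pair $((u_1,y_1),(u_2,y_2))$ in the graph of $R$ with the swapped pair in the graph of $R^{-1}$, checking that the symmetry of $\Re\bra{\cdot}\ket{\cdot}$ makes the angle invariant under the swap, and handling the $0/\infty/\emptyset$ edge cases---is the natural route and is essentially what the cited reference does. The lift to classes via unions and the SRG-fullness step via the involution property are routine, as you note.
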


Define the line segment between $z_1, z_2 \in \C$ as $[z_1, z_2] \coloneqq \{\alpha
                                z_1 + (1 - \alpha) z_2\, |\, \alpha \in [0, 1]\}$.
A class of operators $\mathcal{A}$ is said to
satisfy the \emph{chord property} if $z \in \srg{\mathcal{A}}\setminus\{\infty\}$ implies $[z,
\bar z] \subseteq \srg{\mathcal{A}}$.

\begin{proposition}\label{prop:summation}
        Let $\mathcal{A}$ and $\mathcal{B}$ be classes of operators, such that $\infty \nin \srg{\mathcal{A}}$ and $\infty \nin
        \srg{\mathcal{B}}$. Then:
        \begin{enumerate}
                \item if $\mathcal{A}$ and $\mathcal{B}$ are SRG-full, then
                        $\srg{\mathcal{A} + \mathcal{B}} \supseteq \srg{\mathcal{A}} +
                        \srg{\mathcal{B}}$.
                \item if either $\mathcal{A}$ or $\mathcal{B}$ satisfies the chord
                        property, then 
                        $\srg{\mathcal{A} + \mathcal{B}} \subseteq \srg{\mathcal{A}} +
                        \srg{\mathcal{B}}$.
        \end{enumerate}
\end{proposition}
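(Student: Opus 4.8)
The plan is to prove the two set inclusions in Proposition~\ref{prop:summation} separately, both by unwinding the definition of the SRG in terms of the complex numbers $z_R(u_1,u_2)$ and by using the characterization of angle via inner products. Throughout, fix two inputs $u_1, u_2$ and write $\Delta u = u_1 - u_2$; a point of $\srg{\mathcal{A}+\mathcal{B}}$ arises from some $R \in \mathcal{A}$, $S \in \mathcal{B}$ and outputs $y_i \in R(u_i)$, $w_i \in S(u_i)$, contributing the number $z = \frac{\norm{\Delta y + \Delta w}}{\norm{\Delta u}} e^{\pm j \angle(\Delta u, \Delta y + \Delta w)}$, which I will rewrite in Cartesian form as $\frac{\Re\bra{\Delta u}\ket{\Delta y + \Delta w}}{\norm{\Delta u}^2} \pm j\,\frac{\sqrt{\norm{\Delta u}^2\norm{\Delta y + \Delta w}^2 - (\Re\bra{\Delta u}\ket{\Delta y + \Delta w})^2}}{\norm{\Delta u}^2}$. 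The assumption $\infty \nin \srg{\mathcal{A}}$, $\infty \nin \srg{\mathcal{B}}$ guarantees $R$ and $S$ are single-valued along each fixed input, so these expressions are well defined and no $\infty$ terms intrude.

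For part~(2), the $\subseteq$ inclusion: say $\mathcal{A}$ satisfies the chord property (the case of $\mathcal{B}$ is symmetric). The real part of $z$ splits additively as $\frac{\Re\bra{\Delta u}\ket{\Delta y}}{\norm{\Delta u}^2} + \frac{\Re\bra{\Delta u}\ket{\Delta w}}{\norm{\Delta u}^2}$, i.e.\ $\Re z = \Re a + \Re b$ where $a$ is the corresponding point of $z_R(u_1,u_2)$ and $b$ of $z_S(u_1,u_2)$. The imaginary parts, however, do \emph{not} add: by the triangle inequality in $\mathcal{L}$ applied to the component of $\Delta y + \Delta w$ orthogonal to $\Delta u$, one gets $|\Im z| \le |\Im a| + |\Im b|$. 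The idea is then to realize $z$ as a sum of a point of $\srg{\mathcal{A}}$ with the same real part as $a$ but adjusted imaginary part, and a point of $\srg{\mathcal{B}}$ equal to $b$ (or one of $b, \bar b$). Concretely, write $z = (\Re a + j t) + b$ where $t = \Im z - \Im b$; because $a$ and $\bar a$ lie in $\srg{\mathcal{A}}$ and $\mathcal{A}$ has the chord property, the whole vertical segment $[\,a, \bar a\,]$ at real part $\Re a$ lies in $\srg{\mathcal{A}}$, and the bound $|t| \le |\Im a|$ (from $|\Im z| - |\Im b| \le |\Im a|$, valid after choosing the sign of $b$ appropriately) places $\Re a + j t$ on that segment. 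Hence $z \in \srg{\mathcal{A}} + \srg{\mathcal{B}}$.

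For part~(1), the $\supseteq$ inclusion: take $a \in \srg{\mathcal{A}}$ and $b \in \srg{\mathcal{B}}$ arbitrary; I must exhibit $R \in \mathcal{A}$, $S \in \mathcal{B}$ and a pair of inputs realizing $a + b$ in $\srg{\mathcal{A}+\mathcal{B}}$. Here SRG-fullness of $\mathcal{A}$ and $\mathcal{B}$ is essential: it lets me construct \emph{new} operators tailored to the geometry rather than being stuck with the ones that happened to witness $a$ and $b$. The standard device (as in \autocite[Thm. 4.2]{Ryu2021}) is to realize each target complex number by a suitable operator on a two-dimensional subspace — a rotation-scaling acting on $\mathrm{span}\{\Delta u, v\}$ for an auxiliary orthonormal vector $v$ — whose SRG over all of $\mathcal{L}$ still lies in $\srg{\mathcal{A}}$ (resp.\ $\srg{\mathcal{B}}$), so that SRG-fullness forces membership in $\mathcal{A}$ (resp.\ $\mathcal{B}$); then aligning the two constructions on the \emph{same} pair of inputs and the \emph{same} auxiliary direction makes the output increments add vectorially, so the resulting point of $z_{R+S}(u_1,u_2)$ is exactly $a+b$.

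The main obstacle is the $\supseteq$ direction of part~(1), specifically the explicit construction of operators in $\mathcal{A}$ and $\mathcal{B}$ from prescribed SRG points while respecting causality/linearity constraints on $L_2$ and ensuring the auxiliary directions can be chosen consistently; the delicate bookkeeping is that $z_R(u_1,u_2)$ records \emph{both} $re^{+j\theta}$ and $re^{-j\theta}$, so one has freedom in the sign of the phase, and this freedom must be spent to make the orthogonal components of $\Delta y$ and $\Delta w$ parallel (not anti-parallel) so their norms add. Once the building-block operators are in hand, the inclusion is a one-line vector-addition computation. The two inclusions of part~(2) are comparatively routine once the Cartesian rewriting and the $|\Im z| \le |\Im a| + |\Im b|$ triangle inequality are in place; I would present them first as the warm-up, then invoke the construction from \autocite{Ryu2021} for part~(1).
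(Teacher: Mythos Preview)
The paper does not give its own proof of this proposition: it is stated as one of five propositions recalling Theorems~4.1--4.5 of \textcite{Ryu2021}, and the argument is entirely deferred to that reference. Your sketch is therefore not competing with anything in the paper; it is essentially reconstructing the proof of \autocite[Thm.~4.4]{Ryu2021}, and the overall shape --- Cartesian splitting of $z$ into real and imaginary parts for part~(2), explicit two-dimensional rotation-scaling constructions plus SRG-fullness for part~(1) --- matches that reference.

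There is, however, a gap in your part~(2) argument. You claim $|t| \le |\Im a|$ follows from the upper triangle inequality $|\Im z| \le |\Im a| + |\Im b|$ alone, but this only handles the case $t = \Im z - \Im b' \ge 0$. When $t < 0$ you need $|\Im b| - |\Im z| \le |\Im a|$, which is the \emph{reverse} triangle inequality applied to the components of $\Delta y$ and $\Delta w$ orthogonal to $\Delta u$: it gives $|\Im z| \ge \bigl|\,|\Im a| - |\Im b|\,\bigr|$. With both bounds in hand, taking $\Im z \ge 0$ without loss of generality and choosing $\Im b' = +|\Im b|$ always works, since then $-|\Im a| \le \Im z - |\Im b| \le |\Im a|$ directly. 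Your ``choosing the sign of $b$ appropriately'' remark cannot substitute for this: the alternative sign gives $\Im a' = \Im z + |\Im b|$, which imposes the strictly stronger requirement $\Im z \le |\Im a| - |\Im b|$ and so never rescues the case where the reverse inequality is needed. The fix is one line, but as written the step does not close.
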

Under additional assumptions, $\infty$ can be allowed - see the discussion following \autocite[Thm.
4.4]{Ryu2021}.

Define the \emph{right-hand arc}, $\rarc{z, \bar{z}}$, between $z$ and $\bar{z}$ to be the arc between $z$
and $\bar{z}$ with centre on the origin and real part greater than or equal to
$\Re(z)$. The \emph{left-hand arc}, $\larc{z, \bar{z}}$, is defined the same way, but with real part less
than or equal to $\Re(z)$.  Formally,
\begin{IEEEeqnarray*}{rCl}
        \rarc{z, \bar{z}} &\coloneqq& \large\{re^{j(1-2\theta)\phi} \;\large|\; z =
        re^{j\phi},\\&& \phi \in (-\pi, \pi],\, \theta \in [0, 1],\, r\geq 0\large\},\\
                \larc{z, \bar{z}} &\coloneqq& -\rarc{-z, -\bar{z}}.
\end{IEEEeqnarray*}
A class of operators $\mathcal{A}$ is said to satisfy the \emph{right hand (resp. left hand) arc property} if,
for all $z \in \srg{\mathcal{A}}$, $\rarc{z, \bar{z}} \in \srg{\mathcal{A}}$  (resp.
$\larc{z,
\bar{z}} \in \srg{\mathcal{A}}$).

\begin{proposition}\label{prop:composition}
        Let $\mathcal{A}$ and $\mathcal{B}$ be classes of operators, such that
        $\infty \nin \srg{\mathcal{A}}$, $\mathcal{A} \neq \varnothing$, $\infty \nin
        \srg{\mathcal{B}}$ and $\mathcal{B} \neq \varnothing$.  Then:
        \begin{enumerate}
                \item if $\mathcal{A}$ and $\mathcal{B}$ are SRG-full, then
                        $\srg{\mathcal{A}\mathcal{B}} \supseteq \srg{\mathcal{A}}
                        \srg{\mathcal{B}}$.
                \item if either $\mathcal{A}$ or $\mathcal{B}$ satisfies an arc 
                        property, then 
                        $\srg{\mathcal{A}\mathcal{B}} \subseteq \srg{\mathcal{A}}
                        \srg{\mathcal{B}}$.
        \end{enumerate}
\end{proposition}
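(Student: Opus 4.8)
The plan is to reduce both inclusions to the behaviour of one maximally simple family of model operators. Fix orthonormal vectors $e_1, e_2 \in \mathcal{L}$, let $V$ be their real-linear span, and identify $V$ with $\C$ via $a e_1 + b e_2 \leftrightarrow a + j b$; under this identification $\Re\bra{w_1}\ket{w_2}$ becomes the real inner product $\Re(\bar{w}_1 w_2)$ on $\C$. For finite $z \in \C$, let $\Phi_z : V \to V$ be multiplication by $z$. Then $\norm{\Phi_z w} = |z|\,\norm{w}$ and $\Re\bra{w}\ket{\Phi_z w} = |w|^2 \Re z$, so $\angle(w, \Phi_z w) = |\arg z|$ for every nonzero $w \in V$, and hence $\srg{\Phi_z} = \{z, \bar z\}$ (with $\srg{\Phi_0} = \{0\}$). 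I will also use throughout that every SRG is symmetric about the real axis by construction, so $z \in \srg{\mathcal{A}}$ already forces $\{z, \bar z\} \subseteq \srg{\mathcal{A}}$.

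For statement (1), take $z_A \in \srg{\mathcal{A}}$ and $z_B \in \srg{\mathcal{B}}$, both finite since $\infty \nin \srg{\mathcal{A}}, \srg{\mathcal{B}}$. As $\srg{\Phi_{z_A}} = \{z_A, \bar z_A\} \subseteq \srg{\mathcal{A}}$ and $\mathcal{A}$ is SRG-full, $\Phi_{z_A} \in \mathcal{A}$; likewise $\Phi_{z_B}, \Phi_{\bar z_B} \in \mathcal{B}$, all on the common domain $V$. Composition of multiplications is multiplication of the factors, so $\Phi_{z_A}\Phi_{z_B} = \Phi_{z_A z_B}$ and $\Phi_{z_A}\Phi_{\bar z_B} = \Phi_{z_A \bar z_B}$ both lie in $\mathcal{A}\mathcal{B}$, with SRGs $\{z_A z_B, \overline{z_A z_B}\}$ and $\{z_A \bar z_B, \overline{z_A \bar z_B}\}$. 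Hence all four products in $\{z_A, \bar z_A\}\cdot\{z_B, \bar z_B\}$ lie in $\srg{\mathcal{A}\mathcal{B}}$, and taking the union over admissible $z_A, z_B$ yields $\srg{\mathcal{A}}\srg{\mathcal{B}} \subseteq \srg{\mathcal{A}\mathcal{B}}$.

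For statement (2), suppose first that $\mathcal{A}$ has the right-hand arc property. Let $z \in \srg{\mathcal{A}\mathcal{B}}$, realized by $A \in \mathcal{A}$, $B \in \mathcal{B}$, and inputs $u_1 \neq u_2$ with $b_i \in B(u_i)$ and $a_i \in A(b_i)$. Since $\infty \nin \srg{\mathcal{A}}$ and $\infty \nin \srg{\mathcal{B}}$, both $A$ and $B$ are single-valued, so if $b_1 = b_2$ then $a_1 = a_2$ and $z = 0$, which lies in $\srg{\mathcal{A}}\srg{\mathcal{B}}$ because $0 \in \srg{B}$ (as $u_1 \neq u_2$) and $\srg{\mathcal{A}} \neq \varnothing$; so assume $b_1 \neq b_2$. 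Write $\Delta u = u_1 - u_2$, $\Delta b = b_1 - b_2$, $\Delta a = a_1 - a_2$, and set $\theta_B = \angle(\Delta u, \Delta b)$, $\theta_A = \angle(\Delta b, \Delta a)$, $\psi = \angle(\Delta u, \Delta a)$, so that $z_B := (\norm{\Delta b}/\norm{\Delta u})\, e^{\pm j\theta_B} \in \srg{\mathcal{B}}$ and $z_A := (\norm{\Delta a}/\norm{\Delta b})\, e^{\pm j\theta_A} \in \srg{\mathcal{A}}$, with $|z| = |z_A|\,|z_B|$. Since $\angle$ is the geodesic distance on the unit sphere of $\mathcal{L}$ (viewed as a real Hilbert space), it satisfies the triangle inequality, giving $|\theta_A - \theta_B| \le \psi \le \theta_A + \theta_B$. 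The right-hand arc property applied to $|z_A| e^{\pm j\theta_A} \in \srg{\mathcal{A}}$ places the whole arc $\{|z_A| e^{j\alpha} : |\alpha| \le \theta_A\}$ in $\srg{\mathcal{A}}$; multiplying this arc by $|z_B| e^{\pm j\theta_B}$ shows that $\srg{\mathcal{A}}\srg{\mathcal{B}}$ contains every $|z| e^{j\gamma}$ with $|\theta_A - \theta_B| \le |\gamma| \le \theta_A + \theta_B$, and in particular $z = |z| e^{\pm j\psi}$. The remaining three cases --- the left-hand arc property for $\mathcal{A}$, and either arc property for $\mathcal{B}$ --- follow the same pattern after reversing the order of composition and/or replacing $\Delta a$ by $-\Delta a$, which relocates both the relevant arc and the relevant triangle inequality.

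The step I expect to be the genuine obstacle is the angle book-keeping at the end of (2): a priori $\srg{\mathcal{A}}\srg{\mathcal{B}}$ only visibly contains the four ``corner'' phases $\pm\theta_A \pm \theta_B$, whereas the realized phase $\psi$ may be any value permitted by the triangle inequality, so one must verify that the arc property converts those corners into exactly the continuum of phases $|\gamma| \in [|\theta_A - \theta_B|, \theta_A + \theta_B]$ that is needed, tracking the signs and handling the boundary cases where $\theta_A + \theta_B$ exceeds $\pi$ or $\psi$ approaches $0$ or $\pi$. By comparison, the $\Phi_z$ construction, the reduction to a single realizing pair of trajectories, and the disposal of empty-domain and $b_1 = b_2$ degeneracies are comparatively routine.
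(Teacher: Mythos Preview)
The paper does not prove this proposition at all: it is stated as a recall of \cite[Thm.~4.5]{Ryu2021}, with the sentence ``These interconnection results are proved by Ryu et al.\ in Theorems 4.1--4.5'' as the only justification. Your proposal therefore supplies what the paper omits.

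Your argument is correct and is essentially the proof of Ryu, Hannah and Yin. For (1), the two-dimensional rotation operators $\Phi_z$ are precisely their model operators, and SRG-fullness is used in exactly the same way to place them in $\mathcal{A}$ and $\mathcal{B}$. For (2), the identification of $\angle(\cdot,\cdot)$ with geodesic distance on the unit sphere, the triangle inequality $|\theta_A-\theta_B|\le\psi\le\theta_A+\theta_B$, and the use of the arc property to fill in the intermediate phases, is their mechanism as well. Your reduction of the left-hand arc case via $A\mapsto -A$ is clean: since $\larc{z,\bar z}=-\rarc{-z,-\bar z}$, the class $-\mathcal{A}$ inherits the right-hand arc property, and $\srg{(-\mathcal{A})\mathcal{B}}=-\srg{\mathcal{A}\mathcal{B}}$ finishes it. The anxiety you voice about $\theta_A+\theta_B>\pi$ is unfounded in the right-hand arc case: taking $\alpha=\psi-\theta_B$ always satisfies $|\alpha|\le\theta_A$ using only the two inequalities you already have, with no need for the spherical perimeter bound.

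One genuinely minor gap: in the $b_1=b_2$ branch you need $\srg{\mathcal{A}}\neq\varnothing$ to place $0$ in the product, and $\mathcal{A}\neq\varnothing$ does not quite guarantee this if some $A\in\mathcal{A}$ has a one-point domain. This is a degeneracy that Ryu et al.\ handle with side conventions (the paper itself alludes to ``additional assumptions'' for $\varnothing$ immediately after the proposition), so it is not a defect in your main line of argument.
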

Under additional assumptions, $\infty$ and $\varnothing$ can be allowed -- see the discussion following \autocite[Thm.
4.5]{Ryu2021}.

\subsection{Scaled graphs about particular solutions}\label{sec:input_dependent}

Scaled relative graphs capture the behavior of an operator with respect to any
possible operating point.  However, the behavior about one or several specific
inputs (for example, stable equilibria) may be of particular interest.
The methods of this paper apply equally to the analysis of properties with respect to
particular inputs, via the \emph{scaled graph} (SG). For notational
convenience, we only define the SG over the full space, but the SG can be
restricted to a subset of the input space in the same way as the SRG.

\begin{definition}
        Let $R: \mathcal{L} \to \mathcal{L}$.
        The \emph{scaled graph} of $R$ over $\mathcal{L}$ with
        respect to the input $u^\star$ is
        \begin{IEEEeqnarray*}{rCl}
                \sg[u^\star]{R} \coloneqq \bigcup_{u \in \mathcal{L}} z_R(u,
                u^\star).
        \end{IEEEeqnarray*}
\end{definition}

Note that the SG of an LTI operator with respect to any input is equal to its SRG.

The graphical algebra of SRGs applies to SGs with very little modification - the only
requirement is that interconnected SGs are defined with respect to compatible inputs.
 In the remainder of this
section, we highlight the difference between incremental and
input-specific properties, using the example of positivity.

\begin{definition}
        An operator $H: L_2 \to L_2$ is said to be \emph{positive about
        $u^\star \in L_2$} if, for all $u \in L_2$, $y \in H(u)$ and
        $y^\star \in H(u^\star)$, $\bra{u - u^\star}\ket{y - y^\star} > 0$.
\end{definition}

From this definition, it follows immediately that an operator is positive about
$u^\star$ if and only if its SG about $u^\star$ belongs to the closed right half
plane.  However, this does not mean its SG about any other input necessarily lies in
the right half plane -
Figure~\ref{fig:negative_resistor} gives such an example.

\begin{figure}[h]
        \centering
        \includegraphics{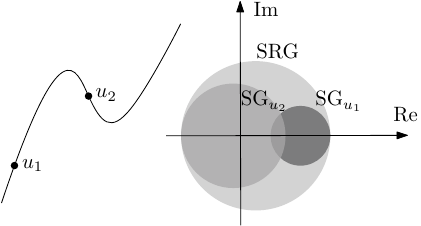}
        \caption{On the left is the $i-v$ characteristic of a resistor with a region of negative
        resistance.  The resistor is positive about some operating points (including
$u_1$), but not about others
(such as $u_2$).  On the right are the SGs computed at $u_1$ and $u_2$, as well as
the SRG.}
       \label{fig:negative_resistor}
\end{figure}

Taking the union of SGs over multiple trajectories allows properties that lie between
trajectory-dependent and incremental to be verified.  For example,
\textcite{Hines2011} define \emph{equilibrium-independent passivity} to be passivity
with respect to every constant input to the system (under assumptions on the system
that ensure that there is a constant output for every constant input).  This can be verified by
checking that the union of SGs over constant inputs lies in the right half plane.

\section{Feedback analysis with scaled relative graphs}\label{sec:feedback}

In this section, we demonstrate the use of scaled relative graphs for the stability analysis of
feedback interconnections.  We begin by using the SRG to generalize the Nyquist
criterion to a stable nonlinear operator in unity gain negative feedback, and
introduce a nonlinear stability margin.  We then formulate a general stability
theorem by inflating the point $-1$ to the negative of the SRG of an operator in the
feedback path, and show that this theorem encompasses both the incremental small gain
and incremental passivity theorems.

This stability theorem relies on viewing a feedback interconnection as the inverse of
a parallel interconnection.  The conditions of the theorem ensure that the parallel
interconnection has a strictly positive lower bound on its incremental gain; it then
follows that its inverse has an upper bound on its incremental gain.
In order to show that a feedback interconnection leads to a well-defined operator,
we use a homotopy argument
similar to \textcite{Megretski1997}.  We place a gain $\tau \in [0, 1]$ in the
feedback loop, and assume stability for $\tau = 0$ (no feedback).
We then use SRGs to show stability for every $\tau \in (0, 1]$, which implies that
there is no loss of stability as the feedback is introduced.  This allows us to use
SRG analysis to prove not only stability, but also well-posedness.

\subsection{A Nyquist stability criterion for stable nonlinear
operators}\label{subsec:Nyquist}

The Nyquist criterion characterizes the stability of a transfer function $L$ in
unity gain negative feedback (Figure~\ref{fig:nyq_fb}) in terms of the distance between
the Nyquist diagram of $L$ and the point -1.  This distance is
called the \emph{stability margin}, and is the inverse of the $H_\infty$ norm of the
sensitivity transfer function \autocite[p. 50]{Doyle1992}.  In this section, we show that the
Nyquist criterion and stability margin can be generalized to stable nonlinear operators by
replacing the Nyquist diagram with an SRG.  For such stable nonlinear operators, the closed loop system is stable if the
SRG of the loop operator leaves $-1$ on the left.

\begin{figure}[ht]
        \centering
        \includegraphics{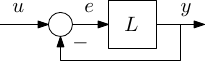}
        \caption{Unity gain negative feedback around the relation $L$.}%
        \label{fig:nyq_fb}
\end{figure}

\begin{theorem}\label{thm:Nyquist}
        Let $L: L_2 \to L_2$ be an operator with finite incremental $L_2$ gain. The
        closed loop operator shown in Figure~\ref{fig:nyq_fb} maps $L_2$ to $L_2$ and has finite incremental $L_2$ gain
        from $u$ to $y$ if
        \begin{IEEEeqnarray}{rCl}
                \label{eq:nyquist}
                0 &\nin& 1 + \tau\srg{L} \quad \mbox{for all }\tau \in (0, 1].
        \end{IEEEeqnarray}
        The closed loop gain from $u$ to $e$ in Figure~\ref{fig:nyq_fb} is bounded above by
        $1/s_m$, where $s_m$ is the shortest distance between $\srg{L}$ and $-1$.
\end{theorem}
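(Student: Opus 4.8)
The plan is to read the closed loop from $u$ to $e$ in Figure~\ref{fig:nyq_fb} as the relational inverse $(I+L)^{-1}$ of the parallel interconnection $I+L$, and the closed loop from $u$ to $y$ as $L\,(I+L)^{-1}$. Since $L$ has finite incremental $L_2$ gain $\mu$, the latter map has finite incremental gain as soon as the former does, so it suffices to show that $(I+L)^{-1}$ is a well-defined operator from $L_2$ to $L_2$ with incremental $L_2$ gain at most $1/s_m$. By the scaling rules of Section~\ref{sec:srg}, $\srg{I+\tau L} = 1 + \tau\srg{L}$, so~\eqref{eq:nyquist} says exactly that $0 \nin \srg{I+\tau L}$ for every $\tau \in [0,1]$ (the case $\tau=0$ is trivial, since $\srg{I}=\{1\}$), and in particular $\mathrm{dist}(0,\srg{I+L}) = \mathrm{dist}(-1,\srg{L}) = s_m$. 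Since $L$ has finite incremental gain, $\srg{L}$ is bounded, and I would first record the small but slightly delicate fact that $c \coloneqq \inf_{\tau\in[0,1]} \mathrm{dist}(0,\srg{I+\tau L}) > 0$, so that $s_m \geq c > 0$; when $\srg{L}$ is closed this follows from~\eqref{eq:nyquist} together with the compactness of $[0,1]$, and otherwise one replaces $\srg{L}$ by a closed over-approximation as discussed in Section~\ref{sec:srg}.

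The easy half is a lower incremental bound on the parallel interconnection. For $e_1 \neq e_2$, the quotient $\norm{(I+\tau L)e_1 - (I+\tau L)e_2}/\norm{e_1-e_2}$ is the modulus of a point of $\srg{I+\tau L}$, hence is at least $\mathrm{dist}(0,\srg{I+\tau L}) \geq c$, and for $\tau=1$ at least $s_m$. So each $I+\tau L$ is injective with lower incremental gain $\geq c$, hence admits a single-valued inverse on its range, Lipschitz with constant $\leq 1/c$, and for $\tau=1$ with constant $\leq 1/s_m$. (Equivalently, Proposition~\ref{prop:inversion} and Proposition~\ref{prop:properties} give that $\srg{(I+L)^{-1}} = (\srg{I+L})^{-1}$ lies in the closed disc of radius $1/s_m$.) It remains to show that the range of $I+L$ is all of $L_2$.

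For this I would use the homotopy argument announced in the text, in the spirit of~\autocite{Megretski1997}. Let $V \coloneqq \{\tau \in [0,1] \mid I+\tau L \text{ maps } L_2 \text{ onto } L_2\}$. Then $0 \in V$ since $I + 0\cdot L = I$. The set $V$ is open in $[0,1]$: if $\tau_0 \in V$ then $I+\tau_0 L$ is a bijection of $L_2$ with $(I+\tau_0 L)^{-1}$ Lipschitz of constant $\leq 1/c$, and for $|\tau-\tau_0| < c/\mu$ the map $u \mapsto (I+\tau_0 L)^{-1}\big(w - (\tau-\tau_0)Lu\big)$ is a contraction on $L_2$ for every $w$, whose fixed point solves $(I+\tau L)u = w$; so a neighbourhood of $\tau_0$ lies in $V$. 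The set $V$ is closed in $[0,1]$: if $\tau_n \to \tau_\infty$ with $\tau_n \in V$, then for fixed $w$ the solutions $u_n$ of $(I+\tau_n L)u_n = w$ satisfy $c\norm{u_n} \leq \norm{w} + \norm{L(0)}$, so $\norm{Lu_n}$ is bounded; writing $(I+\tau_\infty L)u_n = w + (\tau_\infty - \tau_n)Lu_n$ and invoking the lower bound $c$ shows $(u_n)$ is Cauchy, and its limit $u_\infty$ satisfies $(I+\tau_\infty L)u_\infty = w$ by continuity of $I+\tau_\infty L$. Connectedness of $[0,1]$ then forces $V = [0,1]$; in particular $I+L$ is onto, hence a bijection of $L_2$, so $(I+L)^{-1}\colon L_2 \to L_2$ is a well-defined operator with incremental $L_2$ gain at most $1/s_m$, and composition with $L$ gives finite incremental gain from $u$ to $y$.

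I expect the \textbf{main obstacle} to be precisely this well-posedness step: a bi-Lipschitz injection of $L_2$ need not be surjective, so the homotopy is genuinely needed, and both of its estimates — the contraction used for the openness of $V$ and the Cauchy estimate used for the closedness of $V$ — rely on the \emph{uniform} positive lower bound $c$ over all $\tau\in[0,1]$, not merely on $0\nin\srg{I+\tau L}$ for each individual $\tau$. Securing $c>0$ (morally, that $\srg{L}$ stays bounded away from the ray $(-\infty,-1]$) is the one point that requires care; it is clean once $\srg{L}$ is taken to be closed, which also makes $s_m>0$ automatic.
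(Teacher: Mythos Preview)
Your proposal is correct and follows the same homotopy-in-$\tau$ strategy as the paper, but with two differences worth noting. First, you work with the $u\to e$ map $(I+\tau L)^{-1}$ and then compose with $L$, whereas the paper works directly with the $u\to y$ map written as $(L^{-1}+\tau I)^{-1}$; these are dual descriptions related by SRG inversion (the paper's ``distance from $\srg{L^{-1}}$ to $-\tau$'' is your ``distance from $\srg{I+\tau L}$ to $0$'' seen through Proposition~\ref{prop:inversion}), and your route has the mild advantage of not introducing the relational inverse $L^{-1}$. Second, and more substantively, you supply an explicit open/closed argument --- a contraction mapping for openness of $V$, a Cauchy estimate for closedness --- to establish that $I+\tau L$ remains surjective along the homotopy; the paper treats this point informally, arguing only that the gain bound $1/r_\tau$ varies continuously with $\tau$ and concluding that ``finite incremental gain is preserved'' and ``all inputs in $L_2$ continue to map to outputs in $L_2$.'' Your identification of this surjectivity step as the main obstacle, and of the need for a \emph{uniform} lower bound $c>0$ over $\tau\in[0,1]$, is exactly right, and your treatment of it is more complete than the paper's own.
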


\begin{proof}
We place a gain of $\tau$ in the feedback path, and show that the mapping from $\tau$ to the incremental gain from $u$ to $y$ is
continuous if \eqref{eq:nyquist} holds.  The operator from $u$ to $y$ is given by $(L^{-1} + \tau I)^{-1}$. Let the
distance between $\srg{L^{-1}}$ and $-\tau$ be $r_\tau > 0$.  Then $\srg{L^{-1} +
\tau I}$ is at least a distance of $r_\tau$ from the origin, so its inverse is at most
$r_\tau$ from the origin, giving a bound of $1/r_\tau$ on the
incremental gain from $u$ to $y$, as illustrated below.  Condition~\eqref{eq:nyquist}
guarantees that $r_\tau > 0$ for all $\tau \in (0, 1]$.

        \begin{center}
                \includegraphics{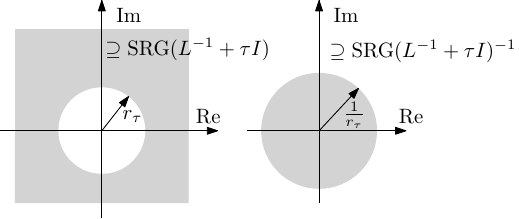}  
        \end{center}

Let $\epsilon > 0$ be smaller than $r_\tau$.  Then there exists a $\delta$ (positive or negative) such
that, if $\tau$ is changed to $\tau +
\delta$, the distance $r_\tau$ decreases by $\epsilon$.  Furthermore, as $\epsilon \to 0$, $\delta
\to 0$. This is a statement of the fact that the distance between a set
and a point varies continuously with the position of the point.  The closed
loop incremental gain bound then increases to $1/(r_\tau -
\epsilon)$. This
is bounded provided $\epsilon < r_\tau$ (in which case $\delta$ small
enough that $-(\tau + \delta)$ doesn't intersect $\srg{L^{-1}}$) and
approaches $r_\tau$ as $\delta \to 0$.  This shows continuity from $\tau$ to the
closed loop incremental gain from $u$ to $y$, and shows that finite incremental gain is preserved
provided $\srg{L^{-1}}$ never intersects $-1/\tau$.  In particular, all inputs in
$L_2$ continue to map to outputs in $L_2$.  We conclude finite incremental gain from
$u$ to $y$ by setting $\tau = 1$.

To prove the second part of the theorem, note that the relation from $u$ to $e$ is
given by
\begin{IEEEeqnarray*}{rCl}
        e = (I + L)^{-1} u.
\end{IEEEeqnarray*}
If $\srg{I + L}$ is bounded away from 0 by a distance $s_m$, then $(I +
L)^{-1}$ has an $L_2$ gain bound of $1/s_m$.  
\end{proof}

\subsection{A general feedback stability theorem}\label{subsec:robust}

The Nyquist stability criterion presented in the previous section can be generalized
to allow a second nonlinear operator in the feedback path, by
inflating the point $-1$ into the SRG of the feedback operator.  The following
theorem encompasses the classical small gain and passivity theorems as special cases.  

\begin{figure}[ht]
        \centering
        \includegraphics{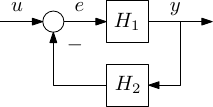}
        \caption{Negative feedback interconnection of $H_1$ and $H_2$.}%
        \label{fig:sym_fb}
\end{figure}

Let $\mathcal{H}$ be a class of operators. By $\bar{\mathcal{H}}$, we will denote a
class of operators such that $\mathcal{H} \subseteq \bar{\mathcal{H}}$ and
$\srg{\bar{\mathcal{H}}}$ satisfies the chord property.

\begin{theorem}\label{thm:robust}
        Consider the feedback interconnection shown in Figure~\ref{fig:sym_fb}
        between any pair of operators $H_1 \in \mathcal{H}_1$ and $H_2 \in
        \mathcal{H}_2$, where $\mathcal{H}_1$ is a class of operators on $L_2$ with
        finite incremental gain, and
        $\mathcal{H}_2$ is a
        class of operators on $L_{2}$.  If, for all $\tau \in (0, 1]$,
        \begin{IEEEeqnarray*}{rCl}
                \srg{\mathcal{H}_1}^{-1}\cap -\tau \srg{\bar{\mathcal{H}}_2} =
                \varnothing,
        \end{IEEEeqnarray*}
        then the feedback interconnection maps $L_2$ to $L_2$ and has an incremental
        $L_2$ gain bound from $u$ to $y$ of $1/r_m$, where 
        $r_m$ is the shortest distance between
$\srg{\mathcal{H}_1^{-1}}$ and $-\srg{\bar{\mathcal{H}}_2}$.
\end{theorem}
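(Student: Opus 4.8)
The plan is to mirror the proof of Theorem~\ref{thm:Nyquist}, with the scalar gain $\tau I$ in the feedback path replaced by $\tau H_2$, so that the loop transformation turns the feedback interconnection into the inverse of a parallel interconnection. Concretely, I would insert a gain $\tau \in [0,1]$ in the feedback path of Figure~\ref{fig:sym_fb} and chase the signals around the loop: writing $e$ for the input to $H_1$ and $y = H_1 e$, the relations $e = u - \tau H_2 y$ and $e \in H_1^{-1}y$ combine to give $u \in (H_1^{-1} + \tau H_2)y$, so the closed-loop operator from $u$ to $y$ is $(H_1^{-1} + \tau H_2)^{-1}$. At $\tau = 0$ this is simply $H_1$, which maps $L_2$ to $L_2$ because $H_1 \in \mathcal{H}_1$ has finite incremental gain; the task is to propagate this property, together with a gain bound, across $\tau \in (0,1]$.

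For fixed $\tau \in (0,1]$ I would bound $\srg{H_1^{-1} + \tau H_2}$ from below in modulus. Since $H_1 \in \mathcal{H}_1$ and $H_2 \in \mathcal{H}_2 \subseteq \bar{\mathcal{H}}_2$, the SRG of the operator $H_1^{-1} + \tau H_2$ is contained in $\srg{\mathcal{H}_1^{-1} + \tau\bar{\mathcal{H}}_2}$. The set $\srg{\bar{\mathcal{H}}_2}$ satisfies the chord property, and so does $\tau\srg{\bar{\mathcal{H}}_2} = \srg{\tau\bar{\mathcal{H}}_2}$ (a chord $[z,\bar z]$ is carried to $[\tau z, \tau\bar z]$ because $\tau$ is a positive real), so Proposition~\ref{prop:summation} gives $\srg{\mathcal{H}_1^{-1} + \tau\bar{\mathcal{H}}_2} \subseteq \srg{\mathcal{H}_1^{-1}} + \tau\srg{\bar{\mathcal{H}}_2}$, while Proposition~\ref{prop:inversion} identifies $\srg{\mathcal{H}_1^{-1}}$ with $\srg{\mathcal{H}_1}^{-1}$ (invoking, where needed, the extensions of Proposition~\ref{prop:summation} that permit $\infty$, which may lie in $\srg{\mathcal{H}_1^{-1}}$ if $H_1$ is not incrementally injective). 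The hypothesis $\srg{\mathcal{H}_1}^{-1}\cap(-\tau\srg{\bar{\mathcal{H}}_2}) = \varnothing$ is exactly the statement that $0 \nin \srg{\mathcal{H}_1}^{-1} + \tau\srg{\bar{\mathcal{H}}_2}$; writing $r_\tau$ for the distance from $0$ to this set -- equivalently the distance between $\srg{\mathcal{H}_1^{-1}}$ and $-\tau\srg{\bar{\mathcal{H}}_2}$ -- we obtain $r_\tau > 0$, and hence $\srg{H_1^{-1} + \tau H_2}$ is bounded away from $0$ by $r_\tau$. Inverting once more (Proposition~\ref{prop:inversion}), $\srg{(H_1^{-1} + \tau H_2)^{-1}}$ lies in the closed disk of radius $1/r_\tau$, so by Proposition~\ref{prop:properties} the closed-loop operator has incremental $L_2$ gain at most $1/r_\tau$ on its domain; for $\tau = 1$ this bound is $1/r_m$.

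It remains to show that the closed loop actually maps \emph{all} of $L_2$ to $L_2$ for every $\tau \in (0,1]$, and not merely that it is Lipschitz on its range; I would obtain this by the continuation argument already used in Theorem~\ref{thm:Nyquist}. The map $\tau \mapsto r_\tau$ is continuous on $[0,1]$ (the distance from a point to the Minkowski sum of a fixed set with a continuously scaled one is continuous) and strictly positive there -- positive on $(0,1]$ by hypothesis, and $r_0 \geq 1/\mu > 0$, where $\mu$ is the incremental gain of $H_1$, because finite incremental gain keeps $\srg{\mathcal{H}_1^{-1}}$ at distance at least $1/\mu$ from $0$ -- hence bounded below by some $r_{\min} > 0$; consequently the closed-loop incremental gain is bounded by $1/r_{\min}$ uniformly in $\tau$. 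Starting from the well-posed case $\tau = 0$ and using this uniform bound to prevent any escape of trajectories from $L_2$ as $\tau$ increases, one argues that the set of $\tau \in [0,1]$ for which the closed loop is defined on all of $L_2$ is both open and closed, hence all of $[0,1]$; evaluating at $\tau = 1$ gives the claim with gain bound $1/r_m$. I expect this last step to be the main obstacle: making the homotopy rigorous requires care because $H_1$ and $H_2$ are general (possibly multivalued, a priori only defined on $L_2$) operators, so the continuity in $\tau$ and the openness of the solvable set must be handled at the level of relations rather than by an implicit function theorem, and the SRG estimates of the second paragraph are precisely what supply the uniform a priori bounds on which this argument rests.
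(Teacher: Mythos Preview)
Your proposal is correct and follows essentially the same approach as the paper: express the closed loop as $(H_1^{-1}+\tau H_2)^{-1}$, use the chord property of $\bar{\mathcal{H}}_2$ together with Proposition~\ref{prop:summation} to bound the SRG of the sum away from zero by $r_\tau$, invert to obtain the gain bound $1/r_\tau$, and invoke the homotopy in $\tau$ for well-posedness. If anything, you are more explicit than the paper about the role of the chord property and about the care needed in the continuation step; the paper's own proof treats the latter quite briefly, so your caution there is well placed rather than a deviation.
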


The choice of which SRG to over-approximate is arbitrary.
In the theorem, we have chosen $\srg{\mathcal{H}_2}$, but it could just as well be
$\srg{\mathcal{H}_1^{-1}}$.

\begin{proof}[Proof of Theorem~\ref{thm:robust}]
        For a gain of $\tau$ in the feedback path, the class of operators from $u$ to $y$ is given by
        \begin{IEEEeqnarray*}{rCl}
                (\mathcal{H}_1^{-1} + \tau \mathcal{H}_2)^{-1}.
        \end{IEEEeqnarray*}
Suppose there exists a positive number $r_{\tau}$ such
        that $|z - w| \geq r_{\tau}$ for all $z \in \srg{\mathcal{H}_1^{-1}}$, $w\in \srg{-\tau
        \bar{\mathcal{H}}_2}$.
        \begin{center}
                \includegraphics{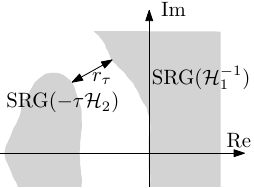}
        \end{center}
        Since $\srg{\mathcal{H_1}^{-1} + \tau H_2} \subseteq
\srg{\mathcal{H_1}^{-1}} + \tau \srg{\bar{\mathcal{H}}_2}$, where $H_2 \in
        \bar{\mathcal{H}}_2$, 
        it follows that $\srg{\mathcal{H_1}_1^{-1} + \tau H_2}$ is bounded away from zero by a distance of $r_{\tau}$
for all $H_2 \in \bar{\mathcal{H}}_2$. In particular,
        this holds for every operator $H_2 \in \mathcal{H}_2$.  
        \begin{center}
                \includegraphics{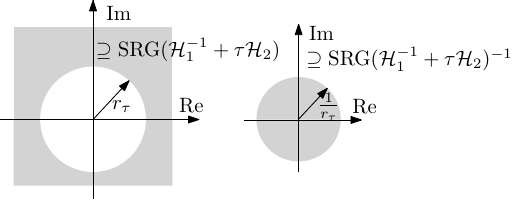}
        \end{center}
        Applying the inverse transformation
gives an incremental $L_2$ gain bound of $1/r_{\tau}$.

          Ensuring this holds for all $\tau \in
          (0, 1]$ means the finite incremental gain of $\mathcal{H}_1$ is never lost, so the
        feedback interconnection remains defined on $L_2$. $r_m$ corresponds to $r_1$.
\end{proof}

One case where the criteria of Theorem~\ref{thm:robust} are automatically satisfied
is the classical small gain setting.

\begin{corollary}\label{thm:small_gain}
        Consider the feedback interconnection shown in Figure~\ref{fig:sym_fb}
        between any pair of operators $H_1 \in \mathcal{H}_1$ and $H_2 \in
        \mathcal{H}_2$, where $\mathcal{H}_1$ and $\mathcal{H}_2$ are the classes of
        operators on $L_2$ with finite incremental $L_2$ gain bounds of $\gamma$ and
        $\lambda$, respectively.  If
                $\gamma\lambda < 1$,
        then the feedback interconnection maps $L_2$ to $L_2$ and has an incremental
        $L_2$ gain bound from $u$ to $y$ of
        $\gamma/(1-\gamma\lambda)$.
\end{corollary}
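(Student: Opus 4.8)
The plan is to obtain Corollary~\ref{thm:small_gain} as a direct specialisation of Theorem~\ref{thm:robust}, the only real work being to identify the SRGs involved and to compute a single distance.

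First I would recall, via Proposition~\ref{prop:properties}, that the class $\mathcal{H}_1$ of operators on $L_2$ with incremental $L_2$ gain bound $\gamma$ is SRG-full with $\srg{\mathcal{H}_1}$ equal to the closed disk of radius $\gamma$ centred at the origin, and likewise $\srg{\mathcal{H}_2}$ is the closed disk of radius $\lambda$. This disk is convex and symmetric about the real axis, hence satisfies the chord property, so we may take $\bar{\mathcal{H}}_2 = \mathcal{H}_2$ in the statement of Theorem~\ref{thm:robust}. By Proposition~\ref{prop:inversion}, $\srg{\mathcal{H}_1}^{-1} = \srg{\mathcal{H}_1^{-1}}$ is the inversion of the disk of radius $\gamma$, namely $\{z \in \C : |z| \geq 1/\gamma\} \cup \{\infty\}$.

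Second, I would verify the hypothesis of Theorem~\ref{thm:robust}. For $\tau \in (0,1]$ the set $-\tau\srg{\bar{\mathcal{H}}_2}$ is the closed disk of radius $\tau\lambda$ about the origin (negation and positive scaling leave a centred disk centred), and this is disjoint from $\{z : |z| \geq 1/\gamma\} \cup \{\infty\}$ precisely when $\tau\lambda < 1/\gamma$. Since $\gamma\lambda < 1$ is equivalent to $\lambda < 1/\gamma$, we get $\tau\lambda \leq \lambda < 1/\gamma$ for every $\tau \in (0,1]$, so the intersection is empty throughout $(0,1]$. Theorem~\ref{thm:robust} then applies and yields a closed-loop operator mapping $L_2$ to $L_2$ with incremental $L_2$ gain bound $1/r_m$, where $r_m$ is the shortest distance between $\srg{\mathcal{H}_1^{-1}}$ and $-\srg{\bar{\mathcal{H}}_2}$, i.e.\ between the exterior of the disk of radius $1/\gamma$ and the disk of radius $\lambda$. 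Their separation is attained on the positive real axis and equals $1/\gamma - \lambda = (1-\gamma\lambda)/\gamma$, so $1/r_m = \gamma/(1-\gamma\lambda)$, exactly as claimed.

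The one delicate point — the step I would flag as the main obstacle — is that $\infty \in \srg{\mathcal{H}_1^{-1}}$, since the zero operator lies in $\mathcal{H}_1$, so the inclusion $\srg{\mathcal{H}_1^{-1} + \tau H_2} \subseteq \srg{\mathcal{H}_1^{-1}} + \tau\srg{\bar{\mathcal{H}}_2}$ used inside the proof of Theorem~\ref{thm:robust} must be invoked in the variant where $\infty$ is permitted. I would justify this via the additional-assumptions case discussed after \autocite[Thm. 4.4]{Ryu2021} — applicable here because $\srg{\mathcal{H}_1^{-1}}$ is the exterior of a centred disk and $\tau\srg{\bar{\mathcal{H}}_2}$ is bounded — and I would note that the presence of the point $\infty$ does not affect the finite distance $r_m$ to the bounded set $-\srg{\bar{\mathcal{H}}_2}$, so the gain bound is unchanged.
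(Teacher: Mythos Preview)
Your proposal is correct and follows the same route as the paper: the paper simply states that the result follows directly from Theorem~\ref{thm:robust} with the relevant geometry shown in a figure, and your argument is precisely the analytic spelling-out of that geometry (the SRGs are the centred disks of radii $\gamma$ and $\lambda$, the inverse is the exterior of the disk of radius $1/\gamma$, and the minimum separation is $1/\gamma - \lambda$). Your flagging of the $\infty \in \srg{\mathcal{H}_1^{-1}}$ issue and its resolution via the extended form of Proposition~\ref{prop:summation} is a valid technical refinement that the paper itself leaves implicit.
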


\begin{proof}
        The result follows directly from Theorem~\ref{thm:robust}.  The conditions of
        the theorem are shown to be satisfied by the geometry below.
        \begin{center}
                \includegraphics{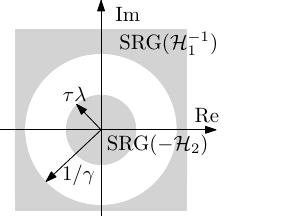}
        \end{center}
        \vspace{-10mm}
\end{proof}

The second case where the conditions of Theorem~\ref{thm:robust} are automatically
satisfied is in the feedback interconnection of incrementally positive systems.  The classical
incremental passivity theorem \autocite{Zames1966} is proved in the following
corollary.

\begin{corollary}\label{thm:passivity}
        Consider the feedback interconnection shown in Figure~\ref{fig:sym_fb}
        between any pair of operators $H_1 \in \mathcal{H}_1$ and $H_2 \in
        \mathcal{H}_2$, where $\mathcal{H}_1$  is the class of $\lambda$-input-strict
        incrementally positive operators which have an incremental $L_2$ gain bound of
        $\mu$, and $\mathcal{H}_2$ is the class of
        incrementally positive operators. Assume $\lambda > 0$.  Then the feedback
        interconnection maps $L_2$ to $L_2$ and has an incremental $L_2$ gain bound from $u$ to
        $y$ of $\mu^2/\lambda$.
\end{corollary}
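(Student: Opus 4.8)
The plan is to derive the corollary directly from Theorem~\ref{thm:robust}; all the work is in identifying the two scaled relative graphs and verifying that they separate for every $\tau\in(0,1]$. By Proposition~\ref{prop:properties} and the intersection rule for SRG-full classes, the class $\mathcal{H}_1$ of $\lambda$-input-strict incrementally positive operators with incremental $L_2$ gain at most $\mu$ satisfies $\srg{\mathcal{H}_1}\subseteq\{z\in\C\mid\Re z\ge\lambda\}\cap\{z\mid|z|\le\mu\}$. Inversion in the unit circle carries the closed half-plane $\{\Re z\ge\lambda\}$ (which, as $\lambda>0$, misses the origin) into the closed disc $D_\lambda\coloneqq\{z\mid|z-\tfrac{1}{2\lambda}|\le\tfrac{1}{2\lambda}\}=\{z\mid\Re z\ge\lambda|z|^2\}$, and carries $\{|z|\le\mu\}$ onto $\{|z|\ge 1/\mu\}$; hence, by Proposition~\ref{prop:inversion}, $\srg{\mathcal{H}_1^{-1}}=(\srg{\mathcal{H}_1})^{-1}\subseteq D_\lambda\cap\{|z|\ge 1/\mu\}$. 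Finite incremental gain makes every member of $\mathcal{H}_1$ single-valued, so $\infty\nin\srg{\mathcal{H}_1}$ and $0\nin\srg{\mathcal{H}_1^{-1}}$, as required by Theorem~\ref{thm:robust}.

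For the feedback path, Proposition~\ref{prop:properties} gives $\srg{\mathcal{H}_2}=\{z\mid\Re z\ge 0\}$ for the class $\mathcal{H}_2$ of incrementally positive operators. This closed half-plane satisfies the chord property (each chord $[z,\bar z]$ is a vertical segment of constant nonnegative real part), so we take $\bar{\mathcal{H}}_2=\mathcal{H}_2$. Scaling $\{\Re z\ge 0\}$ by any $\tau>0$ leaves it fixed, so $-\tau\srg{\bar{\mathcal{H}}_2}=\{z\mid\Re z\le 0\}$ for all $\tau\in(0,1]$. Since $D_\lambda$ touches the closed left half-plane only at the origin, and the modulus bound $|z|\ge 1/\mu$ excludes the origin, we conclude $\srg{\mathcal{H}_1^{-1}}\cap(-\tau\srg{\bar{\mathcal{H}}_2})=\varnothing$ for every $\tau\in(0,1]$. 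Theorem~\ref{thm:robust} then yields that the feedback interconnection maps $L_2$ to $L_2$ with incremental $L_2$ gain bound $1/r_m$, where $r_m$ is the distance between $\srg{\mathcal{H}_1^{-1}}$ and $-\srg{\bar{\mathcal{H}}_2}=\{\Re z\le 0\}$.

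To finish, I bound $r_m$ below. Every $w\in D_\lambda$ obeys $\Re w\ge\lambda|w|^2$, so a point $w\in\srg{\mathcal{H}_1^{-1}}$, which in addition satisfies $|w|\ge 1/\mu$, obeys $\Re w\ge\lambda|w|^2\ge\lambda/\mu^2$; for such a $w$ and any $z$ with $\Re z\le 0$ one has $|w-z|\ge\Re w-\Re z\ge\lambda/\mu^2$. Hence $r_m\ge\lambda/\mu^2$, giving the incremental $L_2$ gain bound $1/r_m\le\mu^2/\lambda$.

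The only substantive steps are geometric: that inversion maps the half-plane $\{\Re z\ge\lambda\}$ into the disc $D_\lambda$, and that on $D_\lambda$ one has $\Re w\ge\lambda|w|^2$, which combined with the modulus bound $|w|\ge 1/\mu$ inherited from the gain constraint on $\mathcal{H}_1$ forces the worst-case distance $\lambda/\mu^2$. Everything else is a routine invocation of Theorem~\ref{thm:robust} together with Propositions~\ref{prop:properties} and~\ref{prop:inversion}, plus the observation that finite incremental gain of $\mathcal{H}_1$ keeps the origin out of $\srg{\mathcal{H}_1^{-1}}$ and the point at infinity out of $\srg{\mathcal{H}_1}$.
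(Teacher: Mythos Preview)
Your proof is correct and follows the same route as the paper: both apply Theorem~\ref{thm:robust} after identifying $\srg{\mathcal{H}_1^{-1}}$ with the lens $D_\lambda\cap\{|z|\ge 1/\mu\}$ and $-\tau\srg{\bar{\mathcal{H}}_2}$ with the closed left half-plane, then read off the minimum separation. Your algebraic extraction of the distance via $\Re w\ge\lambda|w|^2\ge\lambda/\mu^2$ is exactly the content of the paper's geometric figure, expressed without the picture.
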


\begin{proof}
        The SRGs of $H_1$ and $H_2$ are contained in the SRGs shown below.  Note that these both satisfy the chord
        property.
        \begin{center}
                \includegraphics{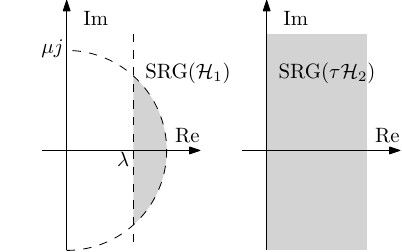}
        \end{center}

        The SRG of the inverse of the class of $\lambda$-input-strict
        incrementally positive operators is the circle with centre $1/(2\lambda)$ and
        radius $1/(2\lambda)$ (Proposition~\ref{prop:properties}).
        This circle is parameterized as
        $\{(1/\lambda)\cos(\theta) \exp(j\theta),\;|\; 0 \leq \theta \leq 2\pi\}$.
        The
        semicircle with centre at the origin, positive real part and radius $\mu$,
        which is the SRG of the class of incrementally positive operators with an
        incremental $L_2$ gain bound of $\mu$,
        is parameterized as $\{\mu\exp(j\phi),\;|\; -\pi/2 \leq \phi
        \leq \pi/2\}$.
        The result then follows from Theorem~\ref{thm:robust} and the geometry 
        below.

        \begin{center}
                \includegraphics{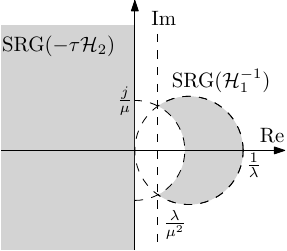}
        \end{center}
        \vspace{-10mm}
\end{proof}

Corollary~\ref{thm:passivity} characterizes the incremental gain of the closed loop.
We can also characterize the incremental positivity of the closed loop, with another
form of the classical passivity theorem.  The following theorem generalizes 
\autocite[Prop. 8]{Chaffey2021b}.

\begin{theorem}\label{thm:weak_passivity}
        Consider the feedback interconnection shown in Figure~\ref{fig:sym_fb}
        between any pair of operators $H_1 \in \mathcal{H}_1$ and $H_2 \in
        \mathcal{H}_2$, where $\mathcal{H}_1$ is the class of operators which are
        $\gamma$-output-strict incrementally positive, and $\mathcal{H}_2$ is the
        class of operators which are $\lambda$-input-strict incrementally positive.
        If
        \begin{IEEEeqnarray*}{rCl}
                \lambda + \gamma \geq 0,
        \end{IEEEeqnarray*}
        then the operator from $u$ to $y$ is $(\gamma + \lambda)$-output-strict
        incrementally positive.
\end{theorem}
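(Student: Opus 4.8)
The plan is to read off the closed‑loop signal relations and then combine the two strictness inequalities directly; no homotopy argument is required here, because the claimed property is an inequality that holds for whatever input–output pairs the interconnection admits. First I would fix the signal conventions of Figure~\ref{fig:sym_fb} consistently with the proof of Theorem~\ref{thm:robust}: writing $e$ for the input to $H_1$ and $y$ for its output (which is also the closed‑loop output), the negative feedback interconnection reads $y \in H_1(e)$ and $e = u - w$ with $w \in H_2(y)$, so that the loop operator from $u$ to $y$ is $(H_1^{-1} + H_2)^{-1}$ exactly as in that proof. The point to emphasize is that the input to $H_2$ is the closed‑loop output $y$, while the input to $H_1$ is the loop signal $e$.

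Next I would take two input–output pairs $(u_1, y_1)$ and $(u_2, y_2)$ of the closed loop with associated internal signals $e_i$, $w_i$, and set $\Delta u = u_1 - u_2$, $\Delta y = y_1 - y_2$, $\Delta e = e_1 - e_2$, $\Delta w = w_1 - w_2$, so that $\Delta u = \Delta e + \Delta w$. Since $\Delta e$ is an input increment of $H_1$ with corresponding output increment $\Delta y$, output‑strict incremental positivity of $H_1$ gives $\bra{\Delta e}\ket{\Delta y} \geq \gamma \norm{\Delta y}^2$; since $\Delta y$ is an input increment of $H_2$ with corresponding output increment $\Delta w$, input‑strict incremental positivity of $H_2$ gives $\bra{\Delta y}\ket{\Delta w} \geq \lambda \norm{\Delta y}^2$. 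Adding these,
\[
        \bra{\Delta u}\ket{\Delta y} = \bra{\Delta e}\ket{\Delta y} + \bra{\Delta w}\ket{\Delta y} \geq (\gamma + \lambda)\norm{\Delta y}^2,
\]
which is precisely $(\gamma + \lambda)$‑output‑strict incremental positivity of the operator from $u$ to $y$.

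The only genuine care needed is the bookkeeping in the previous step: keeping straight which increment plays the role of ``input'' and which of ``output'' for each of $H_1$ and $H_2$, since the strictness weights $\gamma$ and $\lambda$ attach to different signals (both to $\norm{\Delta y}^2$ only after the substitutions are made). The hypothesis $\lambda + \gamma \geq 0$ is not used in the inequality chain itself; it is imposed so that the conclusion is a bona fide positivity statement — in particular, when $\gamma + \lambda > 0$ it additionally yields incremental positivity of the closed loop together with an incremental $L_2$ gain bound of $1/(\gamma + \lambda)$ via Cauchy–Schwarz. If one wished to also assert that the interconnection maps $L_2$ to $L_2$, one would supplement this with the homotopy reasoning of Theorem~\ref{thm:robust}; for the stated conclusion the estimate above is enough, being valid for every closed‑loop solution and vacuous when none exists.
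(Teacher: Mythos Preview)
Your argument is correct, but it follows a genuinely different route from the paper. You give the classical direct proof: using the loop relation $\Delta u = \Delta e + \Delta w$, you add the two strictness inequalities for $H_1$ and $H_2$ (both of which contribute a multiple of $\norm{\Delta y}^2$ once the roles of input and output are identified) to get $\bra{\Delta u}\ket{\Delta y} \geq (\gamma+\lambda)\norm{\Delta y}^2$ in one line. The paper instead argues entirely through the SRG calculus it is developing: it inverts the disc $\srg{\mathcal{H}_1}$ to a half-plane, adds the half-plane $\srg{\mathcal{H}_2}$, and inverts back to the disc of radius $1/(2(\gamma+\lambda))$, handling the degenerate case $\gamma+\lambda = 0$ by a limit. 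Your approach is shorter and more elementary, and, as you observe, makes it transparent that the inequality itself does not need the hypothesis $\gamma+\lambda \geq 0$; the paper's approach, on the other hand, is the one that fits the narrative of the section, namely that such passivity results are immediate corollaries of the graphical interconnection rules.
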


\begin{proof}
        Assume, without loss of generality, that $\lambda < 0$.
        We first prove the case where $\lambda + \gamma > 0$.  This follows from the
        geometry shown below.
        \begin{center}
                \includegraphics{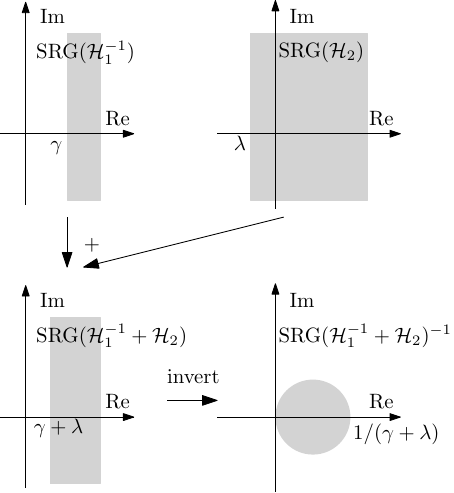}
        \end{center}

        The case where $\lambda + \gamma = 0$ then follows by taking the limit $\lambda
        \to -\gamma$, and allowing the radius of the circle in the final panel above
        to tend to $\infty$.
\end{proof}

The definition of a stability margin for nonlinear operators leads us naturally to
pose an ``$H_\infty$ design problem'', in the same vein as \textcite{Zames1981a},
to do with the maximization of the stability margin over a set of admissible
controllers.  A generalization of the $H_\infty$ design question to nonlinear
operators is as follows: given a plant $G$ (modelled by an operator on
$L_2$) in feedback with an uncertain block $\Delta$ known to be bounded by a particular
SRG, design a controller $C$ to maximize the distance between $\srg{C G}^{-1}$ and
$-\srg{\Delta}$.

\section{The scaled relative graph of an LTI transfer function}\label{sec:LTI}

In this section, we show that the SRG of a stable LTI transfer function is the convex
hull of its Nyquist diagram, under the Beltrami-Klein mapping.  We first presented
this result in \autocite{Chaffey2021b}, and it was noted by \textcite{Pates2021} that
this is a special case of a more general phenomenon involving the numerical range of
a linear operator.  This allows computational methods for the numerical range to be
applied directly to computation of the boundary of an SRG.

We begin by introducing some preliminaries from hyperbolic geometry in
Section~\ref{sec:hgeo}, before giving the main result in Section~\ref{subsec:lti}.

\subsection{Hyperbolic geometry}\label{sec:hgeo}

We recall some necessary details from hyperbolic geometry.
The notation is consistent with \textcite{Huang2020a}.  

\begin{definition}\label{def:arc}
        Let $z_1, z_2 \in \C_{\Im \geq 0} \coloneqq \{z \in \C \, | \, \Im(z) \geq 0\}$, the
        upper half complex plane.  We define the following sets:
        \begin{enumerate}
                \item %$\Circ{z_1, z_2}$ 
                        $\text{Circ}\,({z_1},\, {z_2})$ is the circle through $z_1$ and $z_2$ with
                        centre on the real axis.  If $\Re(z_1) = \Re(z_2)$, this is
                        defined as the infinite line passing through $z_1$ and $z_2$.
                \item $\text{Arc}_{\min}\,(z_1,\, z_2)$ is the arc of
                        $\text{Circ}\,({z_1},\, {z_2})$ in $\C_{\Im \geq 0}$.  If $\Re(z_1) =
                        \Re(z_2)$, then $\text{Arc}_{\min}\,(z_1,\, z_2)$ is $[z_1,
                        z_2]$.
                \item Given $z_1, \ldots, z_m \in \C_{\Im \geq 0}$, the \emph{arc-edge polygon}
                        is defined by: $\text{Poly}\,(z_1) \coloneqq \{z_1\}$ and
                        $\text{Poly}\,(z_1, \ldots, z_m)$ is the smallest simply
                        connected set containing $S$, where
                        \begin{IEEEeqnarray*}{+rCl+x*}
                                S &=& \bigcup_{i, j = 1\ldots m} \text{Arc}_{\min}\,
                                (z_i, z_j).&\qedhere
                        \end{IEEEeqnarray*}
        \end{enumerate}
\end{definition}
Note that, as $\Poly{z_1, \ldots, z_{m-1}} \subseteq \Poly{z_1, \ldots, z_{m-1},
z_m} \subseteq \C_{\Im \geq 0}$, the set $\Poly{Z}$, where $Z$ is a countably infinite sequence of
points in $\C_{\Im \geq 0}$, is well defined as the limit $\lim_{m \to \infty} \Poly{Z_m}$, where $Z_m$
is the length $m$ truncation of $Z$ (see \autocite[p. 111]{Rockafellar1998}).

Definition~\ref{def:arc} forms the basis of the Poincar\'e half plane
model of hyperbolic geometry.  Under the Beltrami-Klein mapping, $f \circ g$, where
\begin{IEEEeqnarray*}{rCl}
        f(z) &=& \frac{2z}{1 + |z|^2},\\
        g(z) &=& \frac{z - j}{z + j},
\end{IEEEeqnarray*} $\C_{\Im \geq 0}$ is mapped
onto the unit disc, and $\Arc{z_1, z_2}$ is mapped to a straight line segment.
We make the following definitions of convexity and the convex hull in the Poincar\'e half plane model.

\begin{definition}
        A set $S \subseteq \C_{\Im \geq 0}$ is called \emph{hyperbolic-convex} or \emph{h-convex} if 
        \begin{IEEEeqnarray*}{rCl}
                z_1, z_2 \in S \implies \Arc{z_1, z_2} \in S.
        \end{IEEEeqnarray*}
        Given a set of points $P \in \C_{\Im \geq 0}$, the \emph{h-convex hull of} $P$ is the
        smallest h-convex set containing $P$.
\end{definition}

Note that h-convexity is equivalent to Euclidean convexity under the Beltrami-Klein
mapping.  $\Arc{z_1, z_2}$ is the minimal geodesic between $z_1$ and $z_2$ under
the Poincar\'e metric, so h-convexity may be thought of as geodesic convexity with
respect to this metric.  We recall the following useful lemma of \textcite{Huang2020a}.

\begin{lemma}
        (Lemma 2.1 \autocite{Huang2020a}): Given a sequence of points $Z \in \C_{\Im \geq 0}$, $\Poly{Z}$ is h-convex.
\end{lemma}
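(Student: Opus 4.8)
The approach is to transport the whole statement through the Beltrami--Klein mapping $f\circ g$ of Section~\ref{sec:hgeo}, under which $\C_{\Im \geq 0}$ becomes the unit disc, each arc $\Arc{z_i, z_j}$ becomes the Euclidean line segment between the images $p_i \coloneqq (f\circ g)(z_i)$ and $p_j$, and h-convexity becomes ordinary Euclidean convexity. Since $f\circ g$ is a homeomorphism, both ``simply connected'' and ``smallest with respect to inclusion'' are preserved, so the image of $\Poly{z_1, \ldots, z_m}$ is exactly the smallest simply connected planar set containing $S' \coloneqq \bigcup_{i,j}[p_i, p_j]$. It therefore suffices to prove the purely Euclidean fact that this set equals $\operatorname{conv}\{p_1, \ldots, p_m\}$; pulling this back through $(f\circ g)^{-1}$ then gives h-convexity of $\Poly{z_1, \ldots, z_m}$ for every finite list, by the definition of h-convexity.

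\textbf{The Euclidean reduction.} I would prove the two inclusions separately. For one direction, $\operatorname{conv}\{p_1, \ldots, p_m\}$ is convex, hence simply connected, and contains every segment $[p_i, p_j]$, so it contains the smallest simply connected superset of $S'$. For the reverse direction, let $C$ be any simply connected set with $S' \subseteq C$. The extreme points of $\operatorname{conv}\{p_1, \ldots, p_m\}$ form a subset of $\{p_1, \ldots, p_m\}$, and the boundary of the polygon $\operatorname{conv}\{p_1, \ldots, p_m\}$ is the union of the segments joining cyclically consecutive extreme points; hence $\partial\operatorname{conv}\{p_1, \ldots, p_m\} \subseteq S' \subseteq C$. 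If this convex hull has empty interior (the $p_i$ collinear, or $m = 1$) it is itself a segment or a point and the claim is immediate; otherwise its boundary is a Jordan curve lying in $C$, so by the Jordan curve theorem together with simple connectedness of $C$, the bounded region it encloses --- that is, all of $\operatorname{conv}\{p_1, \ldots, p_m\}$ --- must lie in $C$. Combining the inclusions gives the claim (and incidentally shows the defining intersection is indeed attained, hence well posed, in this case).

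\textbf{The infinite case and the main obstacle.} For a countably infinite sequence $Z$, recall from the discussion after Definition~\ref{def:arc} that $\Poly{z_1, \ldots, z_{m-1}} \subseteq \Poly{z_1, \ldots, z_m}$ and $\Poly{Z} = \lim_{m\to\infty}\Poly{Z_m}$. In the Beltrami--Klein picture this is an increasing chain of convex sets; the union of such a chain is convex, and the set limit of an increasing sequence is its closure, which is again convex, so the image of $\Poly{Z}$ is convex and $\Poly{Z}$ is h-convex. The only delicate step is the reverse Euclidean inclusion: making rigorous that a simply connected subset of the plane containing a Jordan curve must contain that curve's interior, which requires fixing the precise notion of simple connectedness for general (possibly non-open) planar sets and invoking the Jordan curve theorem. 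Everything else --- the Beltrami--Klein translation, the easy inclusion, the identification of $\partial\operatorname{conv}$ with a union of the listed segments, and the limiting argument --- is routine.
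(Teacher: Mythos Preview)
The paper does not prove this lemma: it is merely \emph{recalled} from \autocite[Lemma~2.1]{Huang2020a} and stated without argument, so there is no ``paper's own proof'' to compare against.

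That said, your proposal is a sound and natural proof. Transporting through the Beltrami--Klein map $f\circ g$ is exactly the right move, since (as the paper records) it carries $\Arc{z_i,z_j}$ to the Euclidean segment $[p_i,p_j]$ and turns h-convexity into ordinary convexity; being a homeomorphism, it also preserves ``smallest simply connected set containing~$S$''. Your identification of the image of $\Poly{z_1,\ldots,z_m}$ with $\operatorname{conv}\{p_1,\ldots,p_m\}$ is correct, and the ``delicate step'' you flag is indeed the only nonroutine point. It can be dispatched cleanly as follows: if $C$ is simply connected, contains the Jordan curve $J=\partial\operatorname{conv}\{p_i\}$, but omits some interior point $q$, then $C\subseteq\R^2\setminus\{q\}$ and the inclusion sends $[J]\in\pi_1(C)$ to a generator of $\pi_1(\R^2\setminus\{q\})\cong\Z$ (winding number $\pm 1$), contradicting $[J]=0$ in $\pi_1(C)$. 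This avoids any ambiguity about what ``simply connected'' means for non-open planar sets beyond the standard ``path-connected with trivial $\pi_1$''. Your limiting argument for countably infinite $Z$ is also fine: the Painlev\'e--Kuratowski limit of an increasing sequence is the closure of the union, closures of convex sets are convex, and the homeomorphism $f\circ g$ preserves set limits.
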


In our terminology, given a sequence of points $Z \in \C_{\Im \geq 0}$, $\Poly{Z}$ is the
h-convex hull of $Z$.

\subsection{SRGs of LTI transfer functions}\label{subsec:lti}
Let $g: L_2 \to L_2$ be linear and time invariant, and denote its transfer
function by $G(s)$.
$g$ maps a complex sinusoid $u(t) = a e^{j\omega t}$ to the complex sinusoid
$y(t) = a|G(j \omega)| e^{j\angle G( j \omega) + j\omega t}$.  These signals do not
belong to $L_2$, but are treated as limits of sequences in $L_2$.  Precisely, we define
the points on the SRG corresponding to sinusoidal signals by taking the 
gain and phase to be
\begin{IEEEeqnarray*}{rCl}
        \lim_{T \to \infty} \frac{\norm{P_T y}}{\norm{P_T u}}\\ 
        \lim_{T \to \infty} \angle(P_T u, P_T y).
\end{IEEEeqnarray*}
Both these limits exist when $u$ and $y$ are sinusoidal. 
The Nyquist diagram $\nyq{G}$ of an operator $g: L_2(\C) \to L_2(\C)$ is the locus of points
$\{G(j \omega)\,|\, \omega \in \R\}$.

\begin{theorem}\label{prop:nyq}
        Let $g: L_2(\C) \to L_2(\C)$ be linear and time invariant, with transfer
        function $G(s)$.  Then
        $\srg{g} \cap \C_{\Im \geq 0}$ is the h-convex hull of $\nyq{G} \cap \C_{\Im \geq 0}$.
\end{theorem}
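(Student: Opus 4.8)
The plan is to establish the two set inclusions $\srg{g} \cap \C_{\Im \geq 0} \subseteq \Poly{\nyq{G} \cap \C_{\Im \geq 0}}$ and $\Poly{\nyq{G} \cap \C_{\Im \geq 0}} \subseteq \srg{g} \cap \C_{\Im \geq 0}$ separately, working throughout in the upper half plane since the SRG and the Nyquist diagram are both symmetric about the real axis. Since $g$ is LTI, for any input $v \neq 0$ the point $z_g(v) = (\norm{gv}/\norm{v})e^{\pm j\angle(v, gv)}$ depends only on the pair of norms and the real part of the inner product $\bra{v}\ket{gv}$. By Parseval, each of $\norm{v}^2$, $\norm{gv}^2$ and $\Re\bra{v}\ket{gv}$ is an integral against the spectral measure $|\hat v(j\omega)|^2\,\dd\omega$ of, respectively, $1$, $|G(j\omega)|^2$ and $\Re G(j\omega)$. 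The key observation is that the point $z_g(v) \in \C_{\Im \geq 0}$ is determined by the ratios of these three integrals, so $z_g(v)$ is a kind of ``weighted average'' of the points $G(j\omega)$ on the Nyquist curve. I would make this precise by the following elementary geometric lemma: a point $z = x + jy \in \C_{\Im \geq 0}$ lies on $\Arc{z_1,z_2}$ (or more generally in $\Poly{z_1,\dots,z_m}$) if and only if a certain finite system of quadratic inequalities in $(x, y, |z|^2)$ holds — these are exactly the inequalities cutting out the half-plane model image of the Euclidean convex hull after the Beltrami–Klein map. Each such inequality, being affine in $(|G(j\omega)|^2, \Re G(j\omega), 1)$ with the right sign when evaluated at a single $G(j\omega)$, is preserved under taking spectral averages; this gives the first inclusion.

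For the first inclusion in detail: fix $v \in L_2(\C)$, $v \neq 0$, and let $\rho(\dd\omega) = |\hat v(j\omega)|^2\,\dd\omega / \norm{v}^2$ be the associated probability measure on $\R$. Then
\begin{IEEEeqnarray*}{rCl}
        \frac{\norm{gv}^2}{\norm{v}^2} = \int |G(j\omega)|^2 \rho(\dd\omega), \qquad
        \frac{\Re\bra{v}\ket{gv}}{\norm{v}^2} = \int \Re G(j\omega)\, \rho(\dd\omega),
\end{IEEEeqnarray*}
and one checks that $z_g(v)$ (taken with $\Im \geq 0$) has real part $\int \Re G(j\omega)\,\rho(\dd\omega)$ and $|z_g(v)|^2 = \int |G(j\omega)|^2\,\rho(\dd\omega)$. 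I would then argue that any hyperbolically convex set $S \subseteq \C_{\Im \geq 0}$ containing $\nyq{G} \cap \C_{\Im \geq 0}$ must contain every such ``barycentre''. The cleanest route is via the Beltrami–Klein map $f \circ g$: under this map, $|z|^2$-and-$\Re z$ data is transported to genuine Euclidean coordinates in the disc, $\Arc{\cdot,\cdot}$ becomes a straight segment, $\Poly{\cdot}$ becomes an ordinary Euclidean convex hull, and the barycentre relation above becomes (after checking that the map sends our spectral average to a Euclidean convex combination, possibly via a continuity/limiting argument handling the atomic versus continuous parts of $\rho$) an honest convex combination of points of $(f\circ g)(\nyq{G} \cap \C_{\Im \geq 0})$. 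Hence $z_g(v)$ lands in the Euclidean convex hull, i.e. in the h-convex hull.

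For the reverse inclusion I would show that every point of $\Poly{\nyq{G}\cap\C_{\Im\geq 0}}$ is realised as $z_g(v)$ for some $v$. The extreme points are the $G(j\omega)$ themselves, realised (in the limiting sense already set up in the excerpt for sinusoids $v(t) = ae^{j\omega t}$) by pure tones. A point on $\Arc{G(j\omega_1), G(j\omega_2)}$ is then hit by a two-tone input $v(t) = a_1 e^{j\omega_1 t} + a_2 e^{j\omega_2 t}$: as the amplitude ratio $|a_1|/|a_2|$ ranges over $[0,\infty]$, the triple $(|z_g(v)|^2, \Re z_g(v))$ traces out exactly the convex combinations of the two endpoint triples, which under Beltrami–Klein is the full segment; so the arc is covered, and by induction on the number of tones the whole arc-edge polygon is covered, with the countably-infinite case handled by the limiting definition of $\Poly{Z}$ recalled after Definition~\ref{def:arc}. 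The main obstacle, and where I would spend the most care, is the correspondence ``spectral average $\leftrightarrow$ Euclidean convex combination under $f\circ g$'': one must verify that $f \circ g$ intertwines the map $(|G|^2, \Re G) \mapsto z$ with genuine affine averaging, including the subtlety that a general probability measure $\rho$ need not be finitely supported, so a density/compactness argument (the Nyquist curve together with its limit points is compact, and the h-convex hull of a compact set is closed) is needed to pass from finite arc-edge polygons to the full h-convex hull. The causality/well-posedness issues that complicate other parts of the paper do not arise here, since $g$ is a fixed bounded LTI operator.
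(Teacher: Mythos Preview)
Your approach is correct and takes a genuinely different route from the paper's. The paper first restricts to $T$-periodic signals, where the exponentials $\{e^{jtn2\pi/T}\}$ form an orthonormal basis of $L_{2,T}$, invokes \autocite[Thm.~3.1]{Huang2020a} to obtain $z_G(\operatorname{span}(\mathcal{B})\setminus\{0\}) = \Poly{z_G(\mathcal{B})}$ on that space, and only then passes to $L_2$ by a Riemann-sum limit as $T \to \infty$. Your argument bypasses this entirely: the Parseval identities $\Re z_g(v) = \int \Re G(j\omega)\,\rho(\dd\omega)$ and $|z_g(v)|^2 = \int |G(j\omega)|^2\,\rho(\dd\omega)$, together with the fact that every hyperbolic half-plane in the Poincar\'e model is cut out by an inequality affine in $(\Re z,\, |z|^2,\, 1)$, give the forward inclusion in one stroke, with no periodisation and no limit. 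This is essentially a numerical-range/Toeplitz--Hausdorff argument, which the paper acknowledges as the natural perspective via the reference to \textcite{Pates2021}. Your route is more self-contained and makes the convexity mechanism explicit; the paper's route is shorter on the page because it outsources the combinatorial core to \autocite{Huang2020a}.

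One point worth tightening: routing the argument through the Beltrami--Klein map $f \circ g$ is a slight misdirection. That map linearises geodesics but is not affine, so a $\rho$-barycentre in your sense does \emph{not} become a convex combination with the same weights in the Klein disc. The map that does exactly what you need is $\Phi: z \mapsto (\Re z,\, |z|^2)$ itself, which you already write down: it is a bijection from $\C_{\Im \geq 0}$ onto the convex region $\{(x,r): r \geq x^2\}$, it sends $\Arc{z_1,z_2}$ to the straight segment $[\Phi(z_1),\Phi(z_2)]$ (since $|z-c|^2 = R^2$ is linear in $(\Re z, |z|^2)$), and it sends $z_g(v)$ to the \emph{literal} $\rho$-barycentre of $\Phi(\nyq{G})$. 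Working directly in $\Phi$-coordinates removes the ``main obstacle'' you flag --- no intertwining check and no atomic-versus-continuous subtlety is needed for the forward inclusion. For the reverse inclusion your multi-tone construction is fine, and Carath\'eodory in the two-dimensional image $\Phi(\nyq{G})$ means three tones already realise every point of the h-convex hull, so the induction terminates without any compactness argument.
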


The proof of Theorem~\ref{prop:nyq} is closely related to the proof of \textcite[Thm.
3.1]{Huang2020a}, and may be found in Appendix~\ref{app:nyq_proof}.  A consequence of
Theorem~\ref{prop:nyq} is that the SRG of an LTI operator is bounded by its
Nyquist diagram. For example, the SRG of the transfer function $1/(s^3+ 5s^2 + 2s + 1)$ is illustrated in
        \ref{fig:third_order}.  Further examples are given in
        \autocite{Chaffey2021b}.

Given Theorem~\ref{prop:nyq}, we recover two familiar properties of
the Nyquist diagram as special cases of Proposition~\ref{prop:properties}, namely
that passivity is equivalent to the Nyquist diagram lying in the right half plane,
and the $H_\infty$ gain is the maximum magnitude of the Nyquist diagram.

        \begin{figure}[h!]
                \centering
                \includegraphics{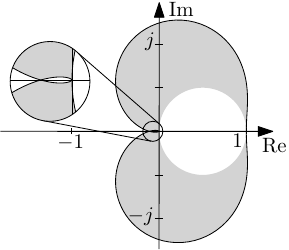}
                \caption{SRG of the transfer function $1/(s^3 + 5s^2 + 2s + 1)$. The
                        black curve is its Nyquist diagram, the grey region is the SRG.}%
                \label{fig:third_order}
        \end{figure}

\section{Scaled relative graphs of static nonlinearities}\label{sec:static}

LTI systems map complex sinusoids to complex sinusoids, and the behavior of an LTI
system on $L_2$ can be fully characterized by its behavior on complex sinusoids.

Similarly, static nonlinearities map square waves to square waves.
Here, we
show that the behavior of single input, single output static nonlinearities on $L_2$, insofar as it is captured by the scaled relative graph, is
fully characterized by their behavior on a two-dimensional subspace of $L_2$ spanned
by two Haar wavelets (truncations of a square wave to a single period).  In
particular, we show that the SRGs of the saturation and ReLU are identical, and closely
related to the SRG of a first order lag.  The use of square waves allows us to test
the effect of different input amplitudes on the output, which is analogous to the use
of sinusoids to test the effect of different input frequencies on the output of an
LTI system.

\begin{proposition}\label{prop:static_bound}
        Suppose $S: L_2 \to L_2$ is the operator given by a SISO static nonlinearity
        $s: \R \to \R$, such that for all $u_1, u_2 \in \R$, $y_i \in s(u_i)$, 
        \begin{IEEEeqnarray}{rCl}
        \mu (u_1 - u_2)^2 \leq (y_1 - y_2)(u_1 - u_2) \leq \lambda 
        (u_1 - u_2)^2.\label{eq:incremental_sector} 
\end{IEEEeqnarray}
        Then the SRG of $S$ is contained within the disc centred at 
        $(\lambda + \mu)/2$ with radius $(\lambda - \mu)/2$.        
\end{proposition}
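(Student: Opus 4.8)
The plan is to reduce the claim about the operator $S$ on $L_2$ to a pointwise statement about the scalar nonlinearity $s$, and then to show that the pointwise incremental sector condition \eqref{eq:incremental_sector} is exactly an incremental quadratic constraint of the form \eqref{eq:quad_2}, whose associated class is SRG-full with the stated disc as its SRG. First I would fix $u_1, u_2 \in L_2$ with $u_1 \neq u_2$ and corresponding outputs $y_i = s \circ u_i$, and compute the defining quantities of $z_S(u_1, u_2)$. Writing $\Delta u = u_1 - u_2$ and $\Delta y = y_1 - y_2$, the point $z \in z_S(u_1, u_2)$ has $|z| = \norm{\Delta y}/\norm{\Delta u}$ and $\Re(z) = \bra{\Delta u}\ket{\Delta y}/\norm{\Delta u}^2$, so showing $z$ lies in the disc centred at $c = (\mu+\lambda)/2$ with radius $\rho = (\lambda - \mu)/2$ amounts to verifying $|z - c|^2 \leq \rho^2$, i.e.
\begin{IEEEeqnarray*}{rCl}
        |z|^2 - 2c\,\Re(z) + c^2 - \rho^2 \leq 0,
\end{IEEEeqnarray*}
which after multiplying through by $\norm{\Delta u}^2$ and using $c^2 - \rho^2 = \mu\lambda$ becomes
\begin{IEEEeqnarray*}{rCl}
        \norm{\Delta y}^2 - (\mu + \lambda)\bra{\Delta u}\ket{\Delta y} + \mu\lambda\norm{\Delta u}^2 \leq 0.
\end{IEEEeqnarray*}

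The key step is then to establish this integral inequality by integrating its pointwise counterpart. For almost every $t$, set $a = \Delta u(t) = u_1(t) - u_2(t) \in \R$ and $b = \Delta y(t) = s(u_1(t)) - s(u_2(t)) \in \R$; condition \eqref{eq:incremental_sector} says $\mu a^2 \leq ab \leq \lambda a^2$, which is equivalent to $(b - \mu a)(b - \lambda a) \leq 0$, i.e. $b^2 - (\mu+\lambda)ab + \mu\lambda a^2 \leq 0$ pointwise (one checks the boundary cases $a = 0$, which forces $b = 0$, separately). Integrating over $\R_{\geq 0}$ yields exactly the displayed integral inequality, which is \eqref{eq:quad_2} with $(a,b,c,d) = (\mu\lambda, -(\mu+\lambda)/2, -(\mu+\lambda)/2, 1)$. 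Hence every $z \in z_S(u_1, u_2)$ lies in the disc; since this holds for all admissible $u_1, u_2$, and since $S$ is single valued so the degenerate $u_1 = u_2$ case contributes nothing and $\infty$ never arises, we conclude $\srg{S}$ is contained in the disc.

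I do not expect a serious obstacle here; the argument is essentially a pointwise-to-integral lifting of the scalar sector condition, and SRG-fullness of quadratic-constraint classes has already been recorded in the proposition following Proposition~\ref{prop:properties}. The only points requiring mild care are: checking that the disc is correctly identified (the sign/ordering of $\mu$ and $\lambda$ and the verification $c^2 - \rho^2 = \mu\lambda$); handling the set where $\Delta u(t) = 0$ so that the pointwise factorization is legitimate there; and confirming that $\Delta y \in L_2$ whenever $\Delta u \in L_2$, which follows from $|\Delta y(t)| \leq \max(|\mu|,|\lambda|)\,|\Delta u(t)|$, a consequence of \eqref{eq:incremental_sector}. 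None of these is more than a routine check, so the proof is short.
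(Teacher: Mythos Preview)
Your proposal is correct, but it takes a genuinely different route from the paper. The paper performs a loop transformation: it defines $\bar S = S - \mu I$, shows pointwise that $\Delta u\,\Delta\bar y \geq \tfrac{1}{\lambda-\mu}\Delta\bar y^{2}$, concludes that $\bar S$ is $1/(\lambda-\mu)$-output-strict incrementally positive, reads off the SRG of that class from Proposition~\ref{prop:properties} (the disc of radius $(\lambda-\mu)/2$ tangent to the origin), and then shifts by $\mu$ via $S=\bar S+\mu I$. You instead translate disc membership $|z-c|^{2}\leq\rho^{2}$ directly into the integral quadratic inequality $\norm{\Delta y}^{2}-(\mu+\lambda)\bra{\Delta u}\ket{\Delta y}+\mu\lambda\norm{\Delta u}^{2}\leq 0$, factor the pointwise integrand as $(b-\mu a)(b-\lambda a)\leq 0$, and integrate. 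Your argument is more self-contained---no auxiliary operator, no appeal to the catalogue of SRG shapes---while the paper's is more modular, reusing results already established. One cosmetic slip: the coefficients you quote for \eqref{eq:quad_2} give the inequality with the wrong sign (that constraint is stated with $\geq 0$); you want $(-\mu\lambda,\,(\mu+\lambda)/2,\,(\mu+\lambda)/2,\,-1)$, but since you never actually invoke SRG-fullness this does not affect the proof.
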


For a static nonlinearity obeying Condition~\eqref{eq:incremental_sector}, we say
that it is 
\emph{incrementally in the sector $[\mu, \lambda]$}.

\begin{proof}
        Define an operator $\bar S$ by $u \mapsto \bar{y} \coloneqq S(u) - \mu u$.
        Let $\Delta u(t) = u_1(t) - u_2(t)$ and $\Delta \bar{y}(t) = \bar{y_1}(t) -
        \bar{y_2}(t)$.  We drop the $t$ dependence in the remainder of this proof.
        By assumption on $s$, for all $\Delta u$ and corresponding incremental 
        output $\Delta \bar{y}$, we have
        \begin{IEEEeqnarray}{rCl}
                0 &\leq& \Delta u(\Delta y - \mu \Delta u) \leq (\lambda - \mu)
                \Delta u^2,\\
                0 &\leq& \Delta u\Delta \bar{y} \leq (\lambda - \mu)
                \Delta u^2.
        \end{IEEEeqnarray}
        It then follows that $\Delta u \Delta \bar{y} \geq 0$ and $\Delta u \Delta \bar{y} - 
        (\lambda - \mu)\Delta u^2 \leq 0$,
        from which the following series of equivalent statements follow:
        \begin{IEEEeqnarray*}{rCl}
                \Delta u \Delta \bar{y} (\Delta u \Delta \bar{y} -
                (\lambda - \mu)\Delta u^2) &\leq& 0\\
                \Delta u^2 (\Delta \bar{y}^2 - (\lambda - \mu)\Delta u \Delta
                \bar{y}) &\leq& 0\\
                \Delta \bar{y}^2 &\leq& (\lambda - \mu)\Delta u \Delta \bar{y}\\
                \Delta u \Delta \bar{y} &\geq& \frac{1}{\lambda - \mu} \Delta
                \bar{y}^2.
        \end{IEEEeqnarray*}
        This shows that $\bar{S}$ is output-strict incrementally positive with
        constant $1/(\lambda - \mu)$, so its SRG is the disc with centre $(\lambda -
        \mu)/2$ and radius $(\lambda - \mu)/2$.  The result then follows by
        noting that $S$ is the parallel interconnection of $\bar{S}$ with $\mu I$, so
        its SRG is the SRG of $\bar{S}$ shifted to the right by $\mu$.
\end{proof}

The same bounding region can be obtained for the SG with respect to an input
$u^\star$, by restricting the second input in the proof of
Proposition~\ref{prop:static_bound} to be $u^\star$. This is stated formally below.

\begin{proposition}\label{prop:static_bound_SG}
Suppose $S: L_2 \to L_2$ is the operator given by a SISO static nonlinearity $s: \R
\to \R$, such that, for all $u_1 \in \R$, $y_1 \in s(u_1)$, $y^\star \in
s(u^\star)$, 
\begin{IEEEeqnarray}{rCl}
        \mu (u_1 - u^\star)^2 \leq (y_1 - y^\star)(u_1 - u^\star) \leq \lambda 
        (u_1 - u^\star)^2.\label{eq:sector} 
\end{IEEEeqnarray}
        Then the SG of $S$ with resepct to $u^\star$ is contained within the disc centred at 
        $(\lambda + \mu)/2$ with radius $(\lambda - \mu)/2$.        
\end{proposition}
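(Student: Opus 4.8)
The plan is to run the proof of Proposition~\ref{prop:static_bound} essentially verbatim, with the second input $u_2$ frozen at $u^\star$. As there, define $\bar S$ by $u \mapsto \bar y \coloneqq S(u) - \mu u$, set $\Delta u(t) \coloneqq u_1(t) - u^\star(t)$ and $\Delta \bar y(t) \coloneqq \bar y_1(t) - \bar y^\star(t) = \big(y_1(t) - y^\star(t)\big) - \mu\,\Delta u(t)$, and note that the sector hypothesis~\eqref{eq:sector}, applied pointwise in $t$, gives
\begin{IEEEeqnarray*}{rCl}
        0 &\leq& \Delta u\,\Delta \bar y \;\leq\; (\lambda - \mu)\,\Delta u^2.
\end{IEEEeqnarray*}
The same elementary manipulation as in the proof of Proposition~\ref{prop:static_bound} (divide through by $\Delta u^2$ where $\Delta u \neq 0$, so that $0 \leq \Delta\bar y/\Delta u \leq \lambda - \mu$, the case $\Delta u = 0$ being trivial) upgrades this to the pointwise bound $\Delta u\,\Delta\bar y \geq \frac{1}{\lambda-\mu}\Delta\bar y^2$; integrating over $t \in \R_{\geq 0}$ then yields $\bra{u_1 - u^\star}\ket{\bar y_1 - \bar y^\star} \geq \frac{1}{\lambda - \mu}\norm{\bar y_1 - \bar y^\star}^2$ for every $u_1 \in L_2$.

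The next step is to read this inequality off the scaled graph $\sg[u^\star]{\bar S}$. For $\Delta u \neq 0$ a typical point of $\sg[u^\star]{\bar S}$ has the form $z = (\norm{\Delta \bar y}/\norm{\Delta u})\,e^{\pm j\angle(\Delta u,\, \Delta\bar y)}$, and since $\Re\bra{\Delta u}\ket{\Delta\bar y} = \norm{\Delta u}\norm{\Delta\bar y}\cos\angle(\Delta u,\, \Delta\bar y)$, the inequality just derived says exactly $|z| \leq (\lambda - \mu)\cos(\arg z)$, i.e. $z$ lies in the closed disc with centre $(\lambda - \mu)/2$ and radius $(\lambda - \mu)/2$. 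This is the same one-line computation underlying the output-strict case of Proposition~\ref{prop:properties}, merely localised to the base point $u^\star$; hence $\sg[u^\star]{\bar S}$ is contained in that disc. Finally, $S = \bar S + \mu I$ is the parallel interconnection of $\bar S$ with $\mu I$, and because the graphical algebra of SRGs carries over to scaled graphs taken about a common base input (Section~\ref{sec:input_dependent}), in particular $\sg[u^\star]{\mu I + \bar S} = \mu + \sg[u^\star]{\bar S}$, we conclude that $\sg[u^\star]{S}$ is contained in that disc shifted to the right by $\mu$, which is precisely the bounding disc asserted in the statement.

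I do not expect a substantive obstacle: the content is identical to Proposition~\ref{prop:static_bound}, the only change being $u_2$ replaced throughout by $u^\star$. The two points that warrant a word of justification are (i) that output-strictness of $\bar S$ relative to the single input $u^\star$ constrains $\sg[u^\star]{\bar S}$ in exactly the way incremental output-strictness constrains the SRG, and (ii) that the $+\,\mu I$ shift is legitimate for scaled graphs; both are instances of the principle recorded in Section~\ref{sec:input_dependent} that the SRG calculus transfers verbatim to scaled graphs about compatible inputs. Point (i) is the mildly non-automatic one, and it is settled by the displayed calculation above.
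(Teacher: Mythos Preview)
Your proposal is correct and follows exactly the route the paper indicates: the paper does not give a separate proof but simply remarks that the bounding region for the SG is obtained ``by restricting the second input in the proof of Proposition~\ref{prop:static_bound} to be $u^\star$,'' which is precisely what you carry out. Your additional remarks justifying that output-strictness relative to $u^\star$ constrains $\sg[u^\star]{\bar S}$ and that the $+\mu I$ shift is valid for scaled graphs merely spell out what the paper leaves implicit.
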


The discs obtained in the previous two propositions are closely related to the
discs of the classical incremental circle criterion \autocite{Zames1968} - indeed,
taking the negative and inverting transforms one to the other.

We now show that, for a large class of systems, the disc bound on the SRG cannot be
improved.  If the characteristic curve of $s$ contains a ``maximal elbow'', that is,
a point where the slope switches from maximum to minimum, then small signals centred around the elbow
can be used to generate the perimeter of the bound of
Proposition~\ref{prop:static_bound}.  Furthermore, if the region of minimum slope
extends to infinity, then large signals can be used to generate the interior of the bound of
Proposition~\ref{prop:static_bound}.  This is formalized in the following two
propositions.  We treat only an elbow from slope $1$ to slope $0$, as a loop
transformation can be used to convert any other elbow to this form.

\begin{proposition}\label{thm:static_nonlinearity}
        Suppose $S:L_2 \to L_2$ is a memoryless nonlinearity defined by a
        map $s:\R \to \R$ which satisfies \eqref{eq:incremental_sector} with $\mu =
        0$ and $\lambda = 1$. Furthermore,
        suppose there are real numbers $u^\star$ and $\delta > 0$, such that, 
        \begin{IEEEeqnarray}{rCl}
                \label{eq:min_slope}
                s(u^\star + \epsilon_u) - s(u^\star) = 0 \mbox{ for all } \epsilon_u \in
        [0, \delta]\\ 
        \label{eq:max_slope}
        s(u^\star) - s(u^\star - \epsilon_l)
        = \epsilon_l\mbox{ for all } \epsilon_l \in [0, \delta].
        \end{IEEEeqnarray}
  Then the SRG of $S$ contains the circle centred at $1/2$ with radius
        $1/2$.
\end{proposition}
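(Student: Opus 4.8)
The plan is to exhibit explicit pairs of inputs whose $z_S$ values trace out the entire circle centred at $1/2$ with radius $1/2$. Since Proposition~\ref{prop:static_bound} already shows $\srg{S}$ is contained in the disc of radius $1/2$ centred at $1/2$, it suffices to produce, for each point on the boundary circle, a pair of inputs realising it; by the chord property — or rather, more directly, once we realise every boundary point the disc claim reduces to showing the circle itself is hit. We work with two-valued "Haar wavelet" inputs supported on a single period, as flagged in the preamble to Section~\ref{sec:static}: on one half-period the input sits at a level where $s$ has slope $1$, on the other at a level where $s$ has slope $0$. Concretely, fix the elbow at $u^\star$ and, for a small amplitude $\epsilon \in (0,\delta]$, consider $u_1$ taking value $u^\star - \epsilon$ on a time set $A$ of measure $a$ and $u^\star$ on a set $B$ of measure $b$ (disjoint, $a+b$ the period), and $u_2 \equiv u^\star$. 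Then by \eqref{eq:max_slope}--\eqref{eq:min_slope}, $\Delta u = u_1 - u_2$ equals $-\epsilon$ on $A$ and $0$ on $B$, while $\Delta y = s(u_1) - s(u_2)$ equals $-\epsilon$ on $A$ and $0$ on $B$ as well. That degenerate choice gives $z = 1$, so I instead need a second input that is also non-constant, to decouple $\Delta u$ and $\Delta y$.

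The key construction: take $u_1$ equal to $u^\star + \epsilon$ on $A$ (measure $a$) and $u^\star - \epsilon$ on $B$ (measure $b$), and $u_2 \equiv u^\star$. Then $\Delta u = u_1 - u_2$ is $+\epsilon$ on $A$ and $-\epsilon$ on $B$, so $\norm{\Delta u}^2 = \epsilon^2(a+b)$. Using \eqref{eq:min_slope} on $A$ (slope $0$ to the right of $u^\star$) and \eqref{eq:max_slope} on $B$ (slope $1$ to the left of $u^\star$), we get $\Delta y = 0$ on $A$ and $\Delta y = -\epsilon$ on $B$, so $\norm{\Delta y}^2 = \epsilon^2 b$ and $\bra{\Delta u}\ket{\Delta y} = \epsilon^2 b$. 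Hence $\norm{\Delta y}/\norm{\Delta u} = \sqrt{b/(a+b)}$ and $\cos\angle(\Delta u, \Delta y) = \bra{\Delta u}\ket{\Delta y}/(\norm{\Delta u}\norm{\Delta y}) = \sqrt{b/(a+b)}$. Writing $t = b/(a+b) \in [0,1]$, the resulting point is $\sqrt{t}\, e^{\pm j\arccos\sqrt{t}}$. A direct check shows $|\sqrt{t}e^{j\arccos\sqrt t} - 1/2|^2 = t - 2\cdot\tfrac12\sqrt t\cos(\arccos\sqrt t) + \tfrac14 = t - t + \tfrac14 = \tfrac14$, so this point lies exactly on the circle centred at $1/2$ of radius $1/2$. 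As $t$ ranges over $[0,1]$ (by choosing the measures $a,b$ freely) and the sign of the angle is free, we sweep the entire circle.

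The remaining details are routine but worth noting: the inputs above are fixed-amplitude square waves truncated to one period, hence lie in $L_2$; the smallness condition $\epsilon \le \delta$ is exactly what lets us invoke \eqref{eq:min_slope}--\eqref{eq:max_slope} verbatim on each half-period; and since $u_2$ is constant (indeed $\equiv u^\star$), no multivaluedness issue arises as long as $s$ is single-valued near $u^\star$, which \eqref{eq:min_slope}--\eqref{eq:max_slope} presuppose. The main obstacle — and the one place the argument needs care — is confirming that the pair $(u_1, u_2)$ genuinely realises the claimed $z_S$ value rather than merely being consistent with the sector bound: one must verify the two limits defining gain and angle for these honest $L_2$ signals (no limiting sinusoid business is needed here since the signals are already in $L_2$), and check that the construction stays inside the slope-$1$ and slope-$0$ regions so that no cross terms between $A$ and $B$ appear in the inner product. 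Once every point of the circle is attained, the proposition follows.
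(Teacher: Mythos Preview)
Your proposal is correct and follows essentially the same construction as the paper: both take one input constant at $u^\star$ and the other equal to $u^\star+\epsilon$ on a set of one measure and $u^\star-\epsilon$ on the complementary set, then vary the ratio of the two measures to sweep the circle $\cos\theta\,e^{\pm j\theta}$. The only cosmetic differences are that the paper works on the fixed interval $[0,1]$ with split point $\tau$ (so your $t=b/(a+b)$ is their $1-\tau$) and writes $\Delta y = k(t)\Delta u$ before computing, whereas you compute the norms and inner product directly and then verify $|z-\tfrac12|^2=\tfrac14$.
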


\begin{proof}
        We consider two input signals, supported on $[0, 1]$:
        \begin{IEEEeqnarray*}{C}
                u_1(t) = u^\star, \quad
                u_2(t) = \begin{cases}
                                u^\star + \epsilon & 0 \leq t < \tau\\
                                u^\star - \epsilon & \tau \leq t \leq 1,
                        \end{cases}
        \end{IEEEeqnarray*}
        where $\tau \in [0, 1]$.  The corresponding output signals are given by
        \begin{IEEEeqnarray*}{C}
                y_1(t) = s(u^\star), \quad
                y_2(t) = \begin{cases}
                        s(u^\star + \epsilon) & 0 \leq t < \tau\\
                        s(u^\star - \epsilon) & \tau \leq t \leq 1,
                        \end{cases}
        \end{IEEEeqnarray*}
        giving the incremental signals
        \begin{IEEEeqnarray*}{C}
                \Delta u(t) = \begin{cases}
                        -\epsilon & 0 \leq t < \tau\\
                        \epsilon & \tau \leq t \leq 1,
                        \end{cases}\quad
                \Delta y(t) = \begin{cases}
                        0 & 0 \leq t < \tau\\
                        \epsilon & \tau \leq t \leq 1.
                        \end{cases}
        \end{IEEEeqnarray*}
        $\Delta y$ can be written as $k(t) \Delta u(t)$, where
        \begin{IEEEeqnarray*}{rCl}
                k(t) &=& \begin{cases}
                        0 & 0 \leq t < \tau\\
                        1 & \tau \leq t \leq 1.
                        \end{cases}
        \end{IEEEeqnarray*}
        Calculating gain then gives
        \begin{IEEEeqnarray*}{rCl}
                \norm{\Delta y} &=& \left(\int^1_0\hspace{-2mm} k^2(t) \Delta u^2(t)
        \dd{t}\right)^{\frac{1}{2}}
                                = \left(\int^1_\tau \Delta u^2(t)
        \dd{t}\right)^{\frac{1}{2}} 
                                = \gamma \norm{\Delta u},
        \end{IEEEeqnarray*}
        for some $\gamma$ which varies between $0$ and $1$ as $\tau$ varies between
        $1$ and $0$.  It follows that
        \begin{IEEEeqnarray*}{rCl}
                \frac{\norm{\Delta y}}{\norm{\Delta u}} &=& \gamma.
        \end{IEEEeqnarray*}
        Calculating the phase gives
        \begin{IEEEeqnarray*}{rCl}
               \acos\frac{\bra{\Delta u}\ket{\Delta y}}{\norm{\Delta u}\norm{\Delta y}}
                &=& \acos\frac{\int_0^1 k(t) \Delta u^2(t) \dd{t}}{\gamma \norm{\Delta
                u}^2}\\
                &=& \acos\frac{\int_\tau^1 \Delta u^2(t)\dd{t}}{\gamma \norm{\Delta
                u}^2}\\
                &=& \acos(\gamma).
        \end{IEEEeqnarray*}
        Since $\gamma \in [0, 1]$, we can define $\theta$ by $\cos(\theta) = \gamma$.
        We then have the locus of points on the SRG given by
        \begin{IEEEeqnarray*}{rCl}
                \cos(\theta) \exp(\pm j \theta), \quad 0 \leq \theta \leq \pi/2, 
        \end{IEEEeqnarray*}
        which is the circle with centre $1/2$ and radius $1/2$.
\end{proof}

\begin{proposition}\label{prop:static_nonlinearity_2}
        Suppose $S:L_2 \to L_2$ is a memoryless nonlinearity defined by a
        map $s:\R \to \R$ which satisfies \eqref{eq:incremental_sector} with $\mu =
        0$ and $\lambda = 1$, and which satisfies $s(0) = 0$. Furthermore,
        suppose there is a real number $u^\star$ such that 
        \begin{IEEEeqnarray}{rCl}
                s(u^\star + M) - s(u^\star) = 0 \mbox{ for all } M \geq 0\\ 
        s(u^\star) > 0.
        \end{IEEEeqnarray}
        Then the SRG of $S$ is the disc centred at $s(u^\star)/2u^\star$ with radius
        $s(u^\star)/2u^\star$.
\end{proposition}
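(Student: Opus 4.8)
The plan is to prove the two inclusions $D \subseteq \srg{S}$ and $\srg{S} \subseteq D$ separately, where $D$ is the disc with diameter $[0,\kappa]$ on the real axis and $\kappa \coloneqq s(u^\star)/u^\star$ (equivalently, $D$ is centred at $\kappa/2$ with radius $\kappa/2$). Note first that $\kappa \in (0,1]$: taking $\mu = 0$ in \eqref{eq:incremental_sector} with the pair $(u^\star, 0)$ gives $s(u^\star)\,u^\star \geq 0$, hence $u^\star > 0$ since $s(u^\star) > 0$; taking $\lambda = 1$ gives $s(u^\star) \leq u^\star$.

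For the inclusion $\srg{S} \subseteq D$, I would apply Proposition~\ref{prop:static_bound} with the sector constants $\mu = 0$ and $\lambda = \kappa$ (in place of $\lambda = 1$): the conclusion of that proposition is then exactly that $\srg{S}$ is contained in the disc centred at $\kappa/2$ with radius $\kappa/2$, i.e. in $D$. The work in this direction is checking that $S$ genuinely satisfies \eqref{eq:incremental_sector} with $\lambda = \kappa$, i.e. that no secant of $s$ is steeper than the secant from $0$ to $u^\star$: secants with an endpoint at or beyond $u^\star$ are flattened by the global constancy hypothesis, and the remaining secants, lying in $[0, u^\star]$, are controlled by combining $s(0) = 0$, the flat part starting at $u^\star$, and the $[0,1]$ sector — this is the same ``maximal elbow'' structure already used in Proposition~\ref{thm:static_nonlinearity}.

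For the reverse inclusion $D \subseteq \srg{S}$, I would extend the construction used in the proof of Proposition~\ref{thm:static_nonlinearity} from local small signals to a combination of small and large signals. Work with two-level test signals on $[0,1]$: on a fraction $p$ of the interval use the input pair $(u_1,u_2) = (u^\star, 0)$, which contributes incremental slope $\kappa$ over $\Delta u = u^\star$; on the complementary fraction use $(u^\star + M, 0)$ with $M \geq 0$, which contributes incremental slope $\gamma_M \coloneqq s(u^\star)/(u^\star + M) \in (0,\kappa]$ over $\Delta u = M$. Writing $\Delta y = k(t)\,\Delta u(t)$ with $k$ two-valued in $\{\kappa, \gamma_M\}$, the same gain and phase computation as in Proposition~\ref{thm:static_nonlinearity} shows that, as $p$ varies over $[0,1]$, the resulting SRG point traces the full circle through $\kappa$ and $\gamma_M$ centred on the real axis. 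Letting $M$ range over $[0,\infty)$ sweeps $\gamma_M$ over $(0,\kappa]$; a routine plane-geometry check (every point of $D$ with nonzero imaginary part lies on a unique such circle) shows the union of these circles is all of $D$ except the single point $0$, which is realised directly by any two distinct constant inputs both exceeding $u^\star$, for which $\Delta y \equiv 0$.

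The main obstacle is the inclusion $\srg{S} \subseteq D$: it rests on the claim that the incremental gain of $s$ is attained by the secant joining the origin to the elbow at $u^\star$, since only then is the bounding disc supplied by Proposition~\ref{prop:static_bound} as tight as $D$ (for the saturation and for piecewise-linear-then-flat maps this is immediate, and these are the intended applications). Once this is in hand, the reverse inclusion is essentially mechanical, being the small-signal construction of Proposition~\ref{thm:static_nonlinearity} supplemented by the large-signal family $(u^\star + M, 0)$.
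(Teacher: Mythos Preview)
Your construction for $D \subseteq \srg{S}$ has a gap: it misses the boundary $\partial D$. With your test pair, $\Delta y \equiv s(u^\star)$ is constant while $\Delta u$ takes the two \emph{positive} values $u^\star$ and $u^\star + M$, so $k(t) \in \{\kappa, \gamma_M\}$ with $\gamma_M > 0$; the locus is the circle of diameter $[\gamma_M, \kappa]$, and the union of these over $\gamma_M \in (0,\kappa]$ is only $\operatorname{int}(D) \cup \{\kappa\}$. Indeed, for $z \in \partial D \setminus \{\kappa\}$ one has $|z|^2 = \kappa\,\Re(z)$, which forces the required inner endpoint to be $(\kappa\Re(z) - |z|^2)/(\kappa - \Re(z)) = 0$, not achieved by any finite $M$. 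So your ``routine plane-geometry check'' fails precisely on the boundary. The paper avoids this by arranging one segment to have incremental slope $0$ rather than $\kappa$: it takes $u_1 \equiv M$ and $u_2$ equal to $M + u^\star$ on $[0,\tau)$ and $0$ on $[\tau,1]$, with $M \geq u^\star$, so that on $[0,\tau)$ \emph{both} inputs lie in the flat region. The resulting circles have diameter $[0, \beta(M)]$ with $\beta(M) = s(u^\star)/M \in (0,\kappa]$; they all pass through $0$ and together fill the closed disc, boundary included. The missing idea in your signals is to place both inputs, not just one, in the flat region on one of the two segments.

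As for $\srg{S} \subseteq D$: the paper's proof does not address this direction at all --- it establishes only $D \subseteq \srg{S}$. Your observation that equality requires $s$ to satisfy the tighter incremental sector $[0,\kappa]$, and that this does not follow from the stated hypotheses in general, is correct and goes beyond what the paper actually argues.
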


\begin{proof}
        We consider two input signals, supported on $[0, 1]$:
        \begin{IEEEeqnarray*}{C}
                u_1(t) = M, \quad
                u_2(t) = \begin{cases}
                                M + u^\star & 0 \leq t < \tau\\
                                0 & \tau \leq t \leq 1,
                        \end{cases}
        \end{IEEEeqnarray*}
        where $\tau \in [0, 1]$, and $M \geq u^\star$.  Performing the same
        calculations as in the proof of Proposition~\ref{thm:static_nonlinearity},
        and defining        $\beta(M) \coloneqq s(u^\star)/M$, we have        
\begin{IEEEeqnarray*}{rCl}
                \frac{\norm{\Delta y}}{\norm{\Delta u}} &=& \beta(M)\gamma,\\
                \acos\frac{\bra{\Delta u}\ket{\Delta y}}{\norm{\Delta u}\norm{\Delta y}}
                                                             &=&  \acos(\gamma).
        \end{IEEEeqnarray*}
        Since $\gamma \in [0, 1]$, we can define $\theta$ by $\cos(\theta) = \gamma$.
        We then have the locus of points on the SRG given by
        \begin{IEEEeqnarray*}{rCl}
                \beta(M)\cos(\theta) \exp(\pm j \theta), \quad 0 \leq \theta \leq \pi/2. 
        \end{IEEEeqnarray*}
        This is the circle with centre $\beta(M)/2$ and radius
        $\beta(M)/2$.  Varying $M$ between $u^\star$ and $\infty$ varies $\beta(M)$
        between $s(u^\star)/u^\star$ and $0$, so we fill the disc  with centre
        $s(u^\star)/2u^\star$ and radius $s(u^\star)/2u^\star$.
\end{proof}

Proposition~\ref{prop:static_nonlinearity_2} allows us to gives an exact characterization of the SRGs of
a range of static nonlinearities, including the unit saturation, the ReLU, and the
limiting cases of the relay and ideal diode.

The proof of Proposition~\ref{thm:static_nonlinearity} uses probing signals which
have an arbitrarily small magnitude variation about a ``worst case'' input value.
This shows the local or worst case nature of the SRG --- the boundary of the SRG is generated by these probing
signals. 

\begin{remark}\label{rem:static}
We conclude this section by remarking that the characterization of output-strict
incrementally passive \emph{static} nonlinearities allows the SRGs of a large class
of \emph{dynamic} output-strict incrementally passive nonlinear systems to be
characterized.  Output-strict incremental passivity is preserved under negative
feedback with an incrementally passive system, as shown in
Theorem~\ref{thm:weak_passivity}.  This means
that any scalar system of the form
\begin{IEEEeqnarray*}{rCl}
        \dot y &=& f(u - y),
\end{IEEEeqnarray*}
where $f$ is incrementally in a sector with positive constants, is output-strict
incrementally passive.
\end{remark}

\section{Example 1: feedback with saturation and delay}\label{sec:delay}

In this section, we use SRGs to analyze feedback systems with delays, dynamic
components and static nonlinearities.  We will derive incremental stability bounds which depend both on the delay time and the dynamic time constant, similar to
the state of the art non-incremental bounds obtained using the roll-off IQC
\autocite{Summers2013}.  These bounds are obtained by approximating the SRG of the
delay and the dynamics, treated as a
single component.  In the simple example of this section,
where the dynamic component is LTI,
this approach reduces to the incremental circle criterion. 
However, the approach allows for arbitrary dynamic components,
as shown in Section~\ref{sec:congestion}. One of the advantages of our approach is
the derivation of stability margins and incremental $L_2$ gain bounds for the closed
loop.

We begin with the system of Figure~\ref{fig:delay_1}, showing a time delay and an LTI transfer function in feedback with a
$1/\beta$-output-strict incrementally passive component $\Delta$.

\begin{figure}[ht]
        \centering
        \includegraphics{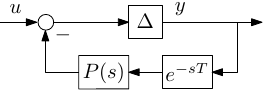}
        \caption{Simple system with delay in the feedback loop.}%
        \label{fig:delay_1}
\end{figure}

We take $P(s) = s^2/(s^3 + 2s^2 + 2s + 1)$, also
considered in \autocite[\S 3]{Megretski1997}.  
The Nyquist diagram of $P(s)$ cascaded with the delay, and a bounding
approximation of the SRG, are shown in the left hand side of
Figure~\ref{fig:delay_2}.  As the delay is increased, the Nyquist diagram, and hence
the SRG, extend further into the left half plane.

\begin{figure}[ht]
        \centering
        \includegraphics{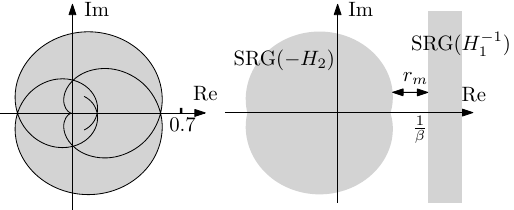}
        \caption{Left: Nyquist diagram of $e^{-sT}s^2/(s^3 + 2s^2 + 2s + 1)$ (black) and a bounding
        approximation of its SRG.  Right: feedback with $1/\beta$-output-strict
incrementally passive system.}%
        \label{fig:delay_2}
\end{figure}

Applying Theorem~\ref{thm:robust} with $H_2 = e^{-sT}P(s)$ and $H_1 = \Delta$, we obtain the right hand side of Figure~\ref{fig:delay_2}.  Stability is
verified if the delay SRG always has real part
greater than $1/\beta$, which ensures that $r_m > 0$.
Solving numerically for $\min_\omega \Re(P(j\omega) e^{j\omega T})$ gives a stability bound on
$\beta$, as a function of $T$, shown in
Figure~\ref{fig:bound}, which also shows the non-incremental stability bound obtained
by \textcite{Megretski1997} using IQC analysis, for the particular case where
$\Delta$ is a saturation.  For short delay times, the
non-incremental bound is shown to tend to infinity, using the Zames-Falb-O'Shea
multiplier.  The incremental bound obtained using SRG analysis has a non-smooth point
where the leftmost segment of the Nyquist diagram switches, and is bounded for all
delay times.

The SRG analysis gives a bound which guarantees finite incremental $L_2$ gain, a 
stronger property than the $L_2$ gain from IQC analysis.  Finite incremental
$L_2$ gain in particular implies input-output Lipschitz continuity. To the best of the authors'
knowledge (and as also noted in \autocite{Wang2019}),
\autocite{Jonsson2001} is the only application of incremental IQCs to stability
analysis of feedback systems, with only a very weak form of stability
guaranteed.  As noted by
\textcite{Kulkarni2001}, stability results using Zames-Falb-O'Shea
and Popov multipliers do not guarantee continuity, as these multipliers do not
preserve the incremental passivity of static nonlinear elements.  The situation for
proving finite incremental $L_2$ gain with these multipliers is similar; the loss of
incremental passivity of the static nonlinearity means the incremental passivity
theorem cannot be applied, so another method of proving stability is needed.  One such method would be to
apply Theorem~\ref{thm:robust} to the transformed loop, and indeed there are
multipliers which destroy incremental passivity but which still verify an incremental
$L_2$ gain bound.  For this particular example, the transfer function $(s +
1)/(s - 1)$ could be used as a multiplier, although it gives a more conservative
bound than Figure~\ref{fig:bound}. 
Global \autocite{Wang2019}, universal
\autocite{Manchester2018} and equilibrium-independent \autocite{Hines2011} $L_2$ gain
are weaker than incremental $L_2$ gain but stronger than $L_2$ gain, and afford
differing levels of tractability.

In addition to proving incremental $L_2$ stability, we can give an incremental $L_2$
gain bound. For a fixed $\beta$, $1/r_m$ is an incremental $L_2$ gain bound from $u$
to $y$, which depends on
the time delay $T$.  For $\beta = 1$, this bound is plotted in Figure~\ref{fig:bound}. 

\begin{figure}[ht]
        \centering
        \includegraphics[width=\linewidth]{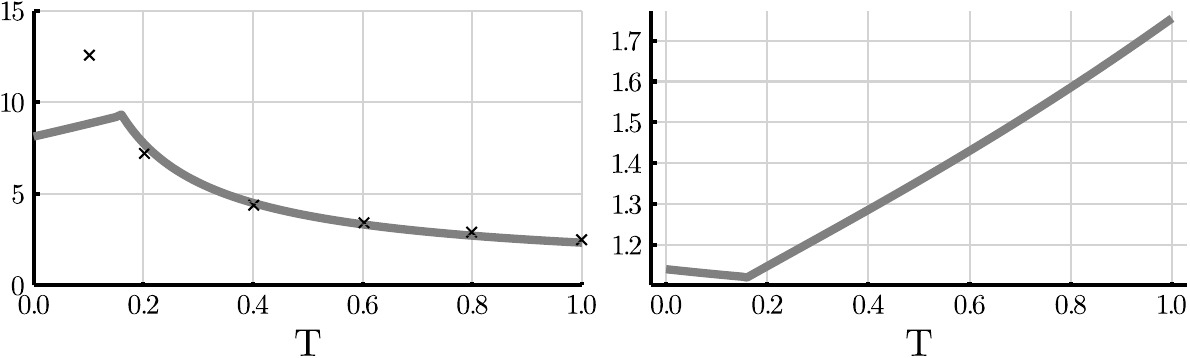}
        \caption{Left: The grey line is an upper bound on $\beta$ which guarantees that the system of Figure~\ref{fig:delay_1}
                has bounded incremental $L_2$ gain with a delay of $T$. The crosses
                give a bound on $\beta$ which guarantees (non-incremental) $L_2$ stability,
obtained using IQC analysis \autocite[Fig. 6]{Megretski1997}. Right: the incremental $L_2$ gain bound from $u$ to $y$ for $\beta
                = 1$.  }%
        \label{fig:bound}
\end{figure}

The motivation behind the traditional structure of the Lur'e system is to put all of
the ``troublesome'' elements in the nonlinear component, and all of the dynamics in
the LTI component.  The availability of explicit SRGs
for elements which are usually troublesome, such as saturations and delays, means
that this structure is not necessarily ideal for SRG analysis, and the feedback
system may be better modelled in a different way.  This is illustrated in the
following two examples.

\section{Example 2: cyclic feedback systems}\label{sec:cyclic}

We now turn to the analysis of cascades. Such systems form the basis of cyclic
feedback systems, which are often found
in biological models \autocite{Tyson1978}, among many other application domains (see,
for example, the discussion of \textcite{Mallet-Paret1996}).
In Theorem~\ref{thm:cascade}, we give the SRG of a cascade of output-strict
incrementally positive systems, which
represents a novel system constraint which cannot be represented as an incremental
IQC.  A
gain margin condition applied to a cascade in unity gain negative feedback gives rise to the incremental secant condition
\autocite{Sontag2006}.  The cascade SRG thus generalizes the incremental secant
condition to arbitrary feedback interconnections and disturbances.
The result we give here is tight in the sense that stronger
conditions than output strict incremental positivity of the plants are required for
any stronger bound.

\begin{figure}[ht]
        \centering
        \includegraphics{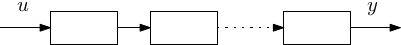}
        \caption{Cascade of $n$ systems.}%
        \label{fig:cascade}
\end{figure}

\begin{theorem}\label{thm:cascade}
Consider the cascade of $n$ output-strict incrementally positive systems, with
parameters $1/\gamma_i$, $i = 1, \ldots, n$, shown in Figure~\ref{fig:cascade}. 
The SRG of the cascade is contained within the region with perimeter
\begin{IEEEeqnarray}{rCl}
        z(\phi) = \gamma_1\gamma_2\ldots\gamma_n \left(\cos\frac{\phi}{n}\right)^n e^{-j \phi}, \quad -\pi
                \leq \phi < \pi. \label{eq:cascade}
\end{IEEEeqnarray}
\end{theorem}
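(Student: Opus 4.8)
The plan is to write the cascade as a product of discs in the complex plane and then to bound that product. First I would identify the SRG of each block: by Proposition~\ref{prop:properties}, an output-strict incrementally positive operator with parameter $1/\gamma_i$ has SRG inside the disc $D_i$ with centre $\gamma_i/2$ and radius $\gamma_i/2$, which I would rewrite in the convenient form $w\in D_i$ iff $w=0$ or ($|\arg w|<\pi/2$ and $|w|\le\gamma_i\cos(\arg w)$). The next step is a one-line verification that $D_i$ satisfies the right-hand arc property: if $|w|\le\gamma_i\cos(\arg w)$ then, since $\cos$ is even and decreasing on $[0,\pi/2)$, also $|w|\le\gamma_i\cos\psi$ for every $|\psi|\le|\arg w|$, so $\rarc{w,\bar w}\subseteq D_i$. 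Denoting by $\mathcal{S}_i$ the whole class of such operators, so that $\srg{\mathcal{S}_i}=D_i$, I would then apply Proposition~\ref{prop:composition} inductively to $\mathcal{S}_n\cdots\mathcal{S}_1$, peeling off the leftmost factor at each step and invoking its arc property, to obtain $\srg{S_n\cdots S_1}\subseteq\srg{\mathcal{S}_n\cdots\mathcal{S}_1}\subseteq D_nD_{n-1}\cdots D_1$; the hypotheses $\infty\nin\srg{\cdot}$ needed at each step hold because every partial product of the $D_i$ is bounded.

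It then remains to show that $D_n\cdots D_1=\{w_1\cdots w_n \mid w_i\in D_i\}$ lies inside the region $R_n$ enclosed by the curve $z(\phi)$. Writing $\Gamma_n\coloneqq\gamma_1\cdots\gamma_n$, for $n\ge 2$ one has $|z(\phi)|=\Gamma_n(\cos(\phi/n))^n\ge 0$ and $\arg z(\phi)=-\phi$ for $\phi\in[-\pi,\pi)$, so $R_n$ is the star-shaped region whose boundary in direction $e^{j\alpha}$, $\alpha\in(-\pi,\pi]$, lies at distance $\Gamma_n(\cos(\alpha/n))^n$. I would take $w_i=r_ie^{j\theta_i}\in D_i$ (a factor $w_i=0$ makes the product $0\in R_n$), so $r_i>0$, $|\theta_i|<\pi/2$, and $r_i\le\gamma_i\cos\theta_i$; the product is $w=\left(\prod_i r_i\right)e^{j\beta}$ with $\beta=\sum_i\theta_i$ and $|\beta|<n\pi/2$. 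Concavity of $\theta\mapsto\log\cos\theta$ on $(-\pi/2,\pi/2)$ (Jensen) gives $\prod_i\cos\theta_i\le(\cos(\beta/n))^n$, hence $|w|\le\Gamma_n(\cos(\beta/n))^n$. Finally, reducing $\beta$ modulo $2\pi$ into $(-\pi,\pi]$ produces $\alpha$ with $|\alpha|\le|\beta|$, so $\cos(\beta/n)\le\cos(\alpha/n)$ and therefore $|w|\le\Gamma_n(\cos(\alpha/n))^n$, i.e. $w\in R_n$. The degenerate case $n=1$ is exactly Proposition~\ref{prop:properties} (the region is the disc $D_1$ itself).

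I expect the main obstacle to be this last step: identifying $R_n$ precisely and controlling the cumulative phase $\beta=\sum_i\theta_i$ modulo $2\pi$, which genuinely matters once $n\ge 3$, since then $|\beta|$ can exceed $\pi$. The observation that makes it go through is that wrapping the argument back into $(-\pi,\pi]$ only moves it toward $0$, which can only loosen the required bound, so the binding case is $|\beta|\le\pi$, where the estimate is just Jensen's inequality applied to $\log\cos$. A secondary point to get right is that the right-hand arc property belongs to $\srg{\mathcal{S}_i}=D_i$, the SRG of the whole class, not to the SRG of an individual operator, so the inductive application of Proposition~\ref{prop:composition} has to be carried out at the level of the classes $\mathcal{S}_i$ and only afterwards specialized to the given operators $S_i\in\mathcal{S}_i$.
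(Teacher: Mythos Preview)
Your proposal is correct and follows essentially the same route as the paper: identify each block's SRG as the disc $D_i$, invoke the right-hand arc property together with Proposition~\ref{prop:composition} to pass to the product $D_1\cdots D_n$, and then bound that product using Jensen's inequality for $\log\cos$. You are in fact more careful than the paper on two technical points---the explicit verification of the arc property and the reduction of the cumulative phase $\beta=\sum_i\theta_i$ modulo $2\pi$ for $n\ge 3$---both of which the paper's proof glosses over.
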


\begin{proof}
The SRG of the $i^\text{th}$ system is the disc with centre $\gamma_i/2$ and radius
$\gamma_i/2$.  The perimeter of this disc has the parameterization 
\begin{IEEEeqnarray}{rCl}
z_i(\theta) &=&
\gamma_i \cos(\theta) e^{-j\theta}\quad -\pi/2 \leq \theta < \pi/2 \label{eq:disc}
\end{IEEEeqnarray}
As this disc satisfies the right hand arc property, the SRG of the full cascade is
the product of $n$ discs.  We claim that
the perimeter of this SRG has the parameterization given by
Equation~\ref{eq:cascade}.

For instance, take any $z_1, z_2, \ldots, z_n$.
Using \eqref{eq:disc} and Proposition~\ref{prop:composition} gives the point
\begin{IEEEeqnarray}{rCl}
        w = \gamma_1\ldots\gamma_n
        \cos(\theta_1)\ldots\cos(\theta_n) e^{-j (\theta_1 + 
        \ldots + \theta_n)}, \label{eq:w}
\end{IEEEeqnarray}
for $-\pi < \theta_1, \theta_2, \ldots, \theta_n < \pi$.  Letting $\theta_1 =
\theta_2 = \ldots = \theta$, and setting $\phi = n\theta$ gives the parameterization
\eqref{eq:cascade} (noting that $-\pi \leq \phi < \pi$ as \eqref{eq:cascade} is
$2\pi$-periodic).  This shows that all the points $z(\phi)$ lie within the SRG.  To
show that they are indeed on the perimeter of the SRG, we take any point $w$ and show
that its magnitude is smaller than the point $z(\phi)$ with the same argument.  This
follows from \eqref{eq:w} if we can show that 
\begin{IEEEeqnarray*}{rCl}
        \cos(\theta_1)\cos(\theta_2)\ldots\cos(\theta_n) \leq \cos(\theta_1 +
        \theta_2 + \ldots + \theta_n).
\end{IEEEeqnarray*}
This is proved in \autocite{Sontag2006}: $f(\phi) = -\ln\cos(\phi)$ is convex on
$(-\pi/2, \pi/2)$.  Applying Jensen's inequality gives $f(\sum_i \theta_i) \leq
\sum_i f(\theta_i)$, and the required inequality follows by taking the exponential.
Note that the inequality still holds in the limit as one angle $\theta_i \to \pm
\pi/2$.
\end{proof}

The SRG given by Theorem~\ref{thm:cascade} is illustrated in
Figure~\ref{fig:cascade_srg}.
For $n = 2$, this SRG is a special case of \autocite[Thm. 2]{Huang2020}.  For $n >
2$, this SRG is a novel result.
The intercept with the negative real axis is at the
point $z(\pi) = -\gamma_1\gamma_2\ldots\gamma_n \left(\cos\frac{\pi}{n}\right)^n$.  A direct
application of Theorem~\ref{thm:Nyquist} to a cascade in unity gain negative feedback thus gives the following incremental secant
condition.

\begin{corollary}\label{cor:secant}
        Suppose the system of Figure~\ref{fig:cascade} is placed in unity gain
        negative feedback, where the $n$ interconnected systems
        are each output-strict incrementally positive with parameters $\gamma_i$, $i =
        1, \ldots, n$. The feedback interconnection has a finite incremental $L_2$ gain if
        \begin{IEEEeqnarray*}{rCl}
                \gamma_1\gamma_2\ldots\gamma_n < \left(\sec\frac{\pi}{n}\right)^n.
        \end{IEEEeqnarray*}
\end{corollary}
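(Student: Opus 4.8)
The plan is to apply Theorem~\ref{thm:Nyquist} directly to the cascade operator placed in unity gain negative feedback. By Theorem~\ref{thm:cascade}, the SRG of the cascade is contained within the region whose perimeter is $z(\phi) = \gamma_1\gamma_2\ldots\gamma_n(\cos(\phi/n))^n e^{-j\phi}$ for $-\pi \leq \phi < \pi$. Since this region lies in a bounded set and is contained (by inspection of the parameterization, or by noting it is a product of discs each touching the origin) in the closed right-half disc structure whose leftmost extent along the negative real axis is the single point $z(\pi) = -\gamma_1\gamma_2\ldots\gamma_n(\cos(\pi/n))^n$, the worst case for the Nyquist condition~\eqref{eq:nyquist} is exactly this real-axis intercept.

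First I would invoke Theorem~\ref{thm:Nyquist} with $L$ equal to the cascade. The theorem requires checking that $0 \nin 1 + \tau\srg{L}$ for all $\tau \in (0,1]$. Since $\srg{L}$ is contained in the region of Theorem~\ref{thm:cascade}, it suffices to check this containment region against $-1/\tau$. The region meets the negative real axis only at $z(\pi)$, so $1 + \tau\srg{L}$ can contain $0$ only if $-1/\tau = z(\pi)$ for some $\tau \in (0,1]$, i.e.\ only if $\gamma_1\gamma_2\ldots\gamma_n(\cos(\pi/n))^n \geq 1$. Hence if
\begin{IEEEeqnarray*}{rCl}
        \gamma_1\gamma_2\ldots\gamma_n(\cos(\pi/n))^n &<& 1,
\end{IEEEeqnarray*}
equivalently $\gamma_1\gamma_2\ldots\gamma_n < (\sec(\pi/n))^n$, the condition~\eqref{eq:nyquist} holds for all $\tau \in (0,1]$, and Theorem~\ref{thm:Nyquist} yields that the closed loop maps $L_2$ to $L_2$ with finite incremental $L_2$ gain.

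The step requiring the most care is verifying that the region of Theorem~\ref{thm:cascade} genuinely meets the negative real axis only at $z(\pi)$ — that is, that no interior point or other boundary point of the (simply connected) region lies on the negative real axis further from the origin than $z(\pi)$. This follows because the region is a product of $n$ discs each tangent to the imaginary axis at the origin with nonnegative real centre, and such a product is contained in the set $\{re^{j\psi} : |\psi| \leq \pi\}$ with magnitude bounded as in~\eqref{eq:w}; the proof of Theorem~\ref{thm:cascade} already establishes that for fixed argument $\phi$ the maximal magnitude is $|z(\phi)|$, and at $\phi = \pi$ this gives the unique leftmost real point. The finite-incremental-gain hypothesis on $L$ needed by Theorem~\ref{thm:Nyquist} is immediate since the cascade of finitely many bounded operators (each output-strict incrementally positive operator has incremental gain at most $\gamma_i$) is bounded.
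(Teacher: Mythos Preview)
Your proposal is correct and follows exactly the approach the paper takes: the paper simply states that the intercept of the cascade SRG with the negative real axis is $z(\pi)=-\gamma_1\cdots\gamma_n(\cos(\pi/n))^n$ and then says the corollary is ``a direct application of Theorem~\ref{thm:Nyquist}.'' Your write-up supplies the details the paper omits (checking the $\tau\in(0,1]$ homotopy, verifying finite incremental gain of the cascade, and arguing that $z(\pi)$ is the leftmost real point of the bounding region); one small wording slip is that the region meets the negative real axis along the whole segment $[z(\pi),0]$, not ``only at $z(\pi)$,'' but your subsequent clarification that $z(\pi)$ is the \emph{furthest-left} such point is the correct statement and is all that is needed.
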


To see that the cascade SRG expresses a more general constraint than possible with an
incremental IQC, we can take the $n=2$ case of
Equation~\eqref{eq:cascade}, and eliminate the parameter $\phi$.  This gives the
following equality constraint on the boundary of the SRG:
\begin{IEEEeqnarray*}{rCl}
\bra{u_1 - u_2}\ket{y_1 - y_2} + \norm{u_1 - u_2}\norm{y_1 - y_2} - 2\norm{y_1 -
y_2}^2.
\end{IEEEeqnarray*}
The middle term cannot be expressed as an incremental IQC.

The cascade SRG allows several other useful values to
be computed. An incremental $L_2$ gain bound can be found by minimizing the distance between $-1$
and $ 1/(\gamma_1\ldots\gamma_n
        \cos(\theta_1)\ldots\cos(\theta_n) e^{-j (\theta_1 + 
        \ldots + \theta_n)})$.  This distance is shown for $n
        = 4$ in Figure~\ref{fig:cascade_inverse}.
Furthermore,  we can
calculate the shortage of input-strict incremental positivity of the cascade by finding the distance
the SRG extends into the left half plane.  For example, for a cascade of two systems,
 the shortage of input-strict incremental positivity is
 $\gamma_1\gamma_2/8$ \autocite[p. 118]{Lawrence1972}. Stan \emph{et al.} \autocite{Stan2007} show that if the coupling strength in a network of oscillators modelled as
cascade feedback systems is large enough compared to the shortage of each oscillator,
the network will synchronize.

\begin{figure}[ht]
        \centering
        \includegraphics{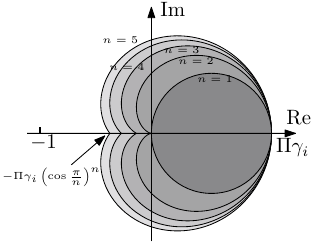}
        \caption{SRGs of the cascade of Figure~\ref{fig:cascade}, where 
        subsystem $i$ is $\gamma_i$-output-strict incrementally positive, for 1 to 5 subsystems.}%
        \label{fig:cascade_srg}
\end{figure}

\begin{figure}[ht]
        \centering
        \includegraphics{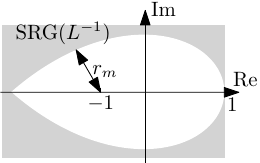}
        \caption{Inverse SRG of a cascade of four output-strict incrementally
                positive 
        systems.  The stability margin is $r_m$. The intercept with the negative real
axis is at $-1/(\Pi_i \gamma_i (\cos(\pi/n))^n)$.}%
        \label{fig:cascade_inverse}
\end{figure}

SRG analysis allows the incremental secant condition to be generalized beyond
negative feedback interconnections.  For example,
if an uncertain gain $k_\Delta$ is placed in feedback with the cascade, as shown
in Figure~\ref{fig:cascade_ff}, we can give a bound on $k_\Delta$
for which incremental stability is guaranteed.  The inverse SRG of the cascade
(Figure~\ref{fig:cascade_inverse}) is shifted to the left by $k_\Delta$; if it does
not intersect $-1$, the closed loop has finite incremental gain.  This allows us to conclude stability if
\begin{IEEEeqnarray*}{rCl}
        \left(\gamma_1\gamma_2\ldots\gamma_n\right)^{-1} > k_\Delta > 1 -
        \left(\gamma_1\gamma_2\ldots\gamma_n\left(\cos\frac{\pi}{n}\right)^n\right)^{-1}.
\end{IEEEeqnarray*}

\begin{figure}[ht]
        \centering
        \includegraphics{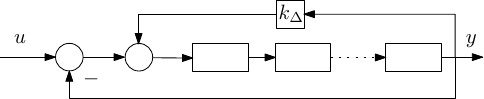}
        \caption{Cascade of subsystems with an
        uncertain feedback gain $k_\Delta$, in unity gain negative feedback.}%
        \label{fig:cascade_ff}
\end{figure}

\begin{remark}
        Remark~\ref{rem:static} showed how the bounding SRG for a static nonlinearity
        that is incrementally in a sector could be used to determine bounding SRGs
        for dynamic nonlinearities.  The cascade SRG derived in this section allows
        us to extend this idea to dynamic nonlinearities described by differential
        equations of the form
        \begin{IEEEeqnarray}{rCl}
                \dot{y} &=& f(g(u) - y).\label{eq:dynamic_nonlinearity}
        \end{IEEEeqnarray}
        Suppose that $f$ is incrementally in the sector $[\mu_1, \gamma_1]$ (in the
        sense of Proposition~\ref{prop:static_bound}), and $g$
        is incrementally in the sector $[\mu_2, \gamma_2]$.  For simplicity, assume
        $\mu_1 = \mu_2 = 0$.  The system of Equation~\eqref{eq:dynamic_nonlinearity}
        can be represented as $f$ in negative feedback with an integrator, with the
        nonlinearity $g$ at the input.  It follows from Remark~\ref{rem:static} and
        the Theorem~\ref{thm:cascade} that this system has a bounding SRG given by
        the $n=2$ case of Figure~\ref{fig:cascade_srg}.
\end{remark}

\section{Example 3: combining cascades and delays}\label{sec:congestion}

In this final example, we combine a delay with a cascade of two output-strict incrementally
positive systems, and revisit the Internet
congestion control example of \textcite{Summers2013}. In that paper,
equilibrium-independent IQCs are verified numerically in order to compute a bound on
the variables $\delta$, $\beta$ and $N_u$ in the left of Figure~\ref{fig:congestion}, which
guarantees (non-incremental) input/output stability of the system.  Here, we derive a
bound which guarantees finite incremental gain of the system.

\begin{figure}[ht]
        \centering
        \includegraphics[width=\linewidth]{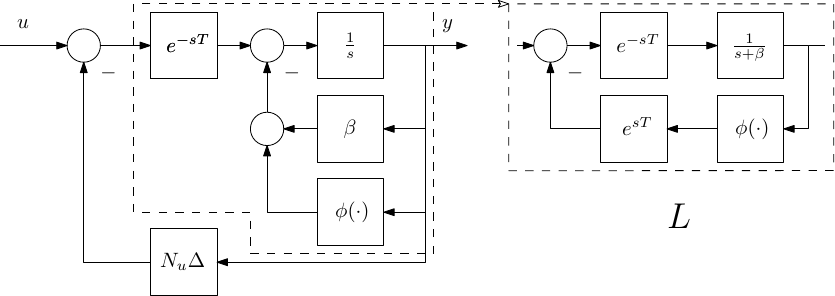}
        \caption{Left: Internet congestion control example of \autocite{Summers2013}.
        $\beta > 0$, $\phi(w)$ is $1/\gamma$-output-strict incrementally positive, $N_u
\in \mathbb{N}$, $\Delta$ is $\delta$-output-strict incrementally positive. $0 < \gamma
< \beta$.  Right:
equivalent representation of the forward path, $L$.}%
        \label{fig:congestion}
\end{figure}

In order to combine the delay and first order lag, we rearrange the forward path as shown
in the right of Figure~\ref{fig:congestion}.  This is the negative feedback
interconnection of $H_1 = e^{-sT}/(s + \beta)$ and $H_2 = -e^{sT}\phi(\cdot)$.
Bounding SRGs for $H_1^{-1}$ and $-H_2$, illustrated in the top of
Figure~\ref{fig:congestion_forward_SRG}, are derived using Theorem~\ref{prop:nyq},
Proposition~\ref{prop:properties} and Proposition~\ref{prop:composition}.    Combining these gives a delay-dependent bounding SRG,
shown in the bottom left of Figure~\ref{fig:congestion_forward_SRG}, 
for the forward path.

\begin{figure}[ht]
        \centering
        \includegraphics{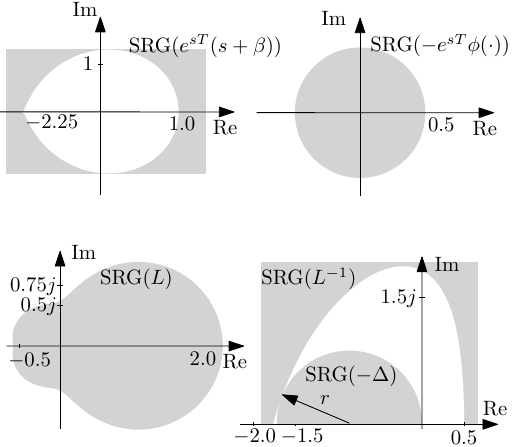}
        \caption{Top: bounding SRGs for $e^{sT}(s+\beta)$ (left) and
                $-e^{sT}\phi(\cdot)$ (right),  for $\beta = 1$, $T = 1$, $\gamma =
        0.5$.  Adding these and inverting gives a bounding SRG for $L$, shown on the
        bottom left.  Bottom right: inverse of $L$ and negative of an
$r$-output-strict incrementally positive block $\Delta$. Only the upper half
is shown.}%
        \label{fig:congestion_forward_SRG}
\end{figure}

To apply Theorem~\ref{thm:robust}, we solve for the largest radius $r$ as shown in
the bottom right of Figure~\ref{fig:congestion_forward_SRG}, before the two SRGs
overlap.  This is equal to the
reciprocal of the distance the SRG of $L$ extends into the left half
plane, which is solved 
numerically. This gives the bound on $N_u/\delta$, plotted in
Figure~\ref{fig:congestion_gain_bound}, that guarantees an incremental $L_2$ gain bound for
the closed loop.

\begin{figure}[ht]
        \centering
        \includegraphics[width=0.7\linewidth]{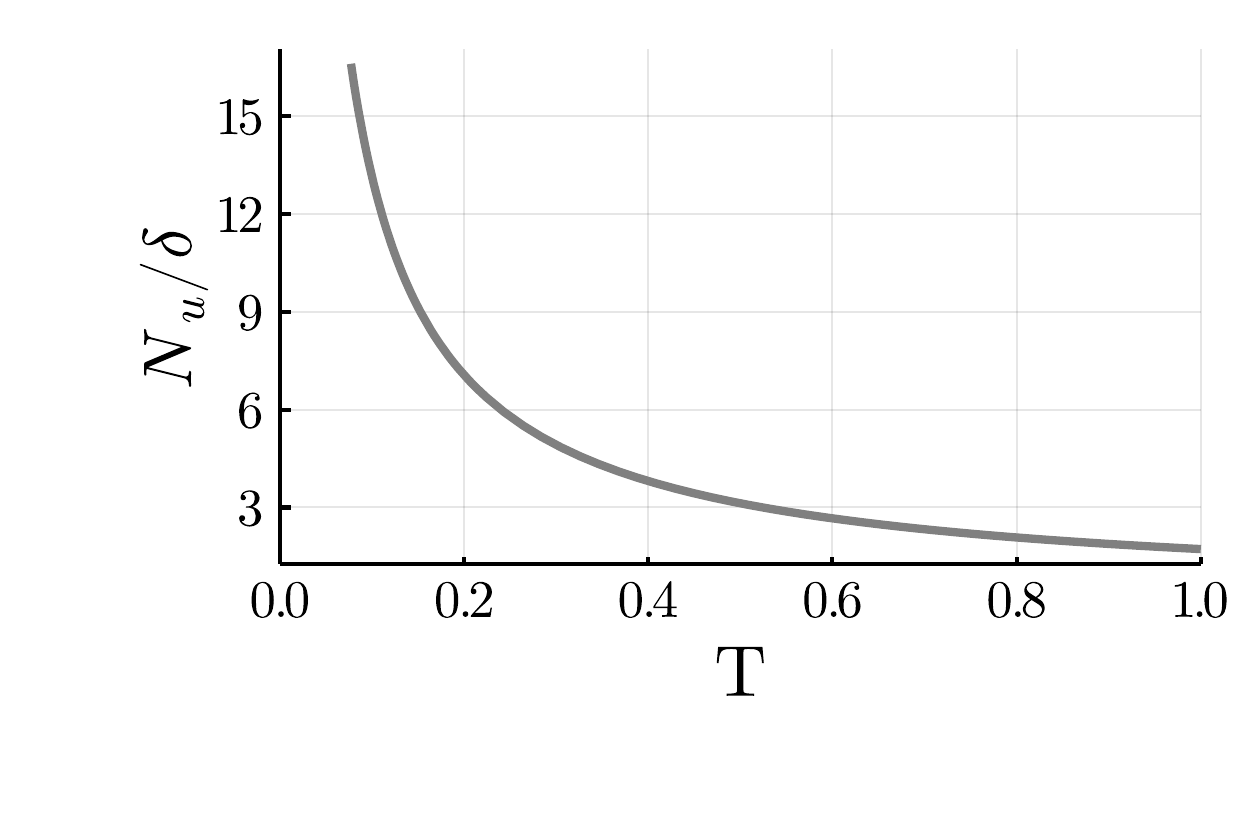}
        \caption{Upper bound on $N_u/\delta$ for the system of
        Figure~\ref{fig:congestion} to have finite incremental $L_2$ gain, derived by
applying Theorem~\ref{thm:robust}. Plotted for $\beta = 1$, $\gamma = 0.5$.}%
        \label{fig:congestion_gain_bound}
\end{figure}

\section{Conclusions}\label{sec:conclusions}

We have introduced the tool of Scaled Relative Graphs to system analysis, and used it
to analyze the incremental stability of operators in feedback.  Characterizing
stability by the separation of two SRGs unifies existing theorems such as the
incremental small gain and passivity theorems, the incremental circle criterion and the
incremental secant condition, using an intuitive graphical language.  This graphical
language is particularly suited to the calculation and visualization of stability
margins, and furthermore allows the input-output gain of a feedback system to be
estimated.  It also allows for a formulation of $H_\infty$ control design for
nonlinear operators. There are many questions for future work; here we will list only a
couple.  

The SRG composition rules rely on a worst-case assumption: that the \emph{same} signals
correspond to the worst-case points in the SRGs of both systems.  When dealing with
interconnections of individual systems (rather than classes of systems), this can be
conservative.  For example, applying Theorem~\ref{thm:robust} to the negative
feedback interconnection of $1/(s+1)$ and $e^{-sT}$ does not give a guarantee of
stability; we know from the Nyquist criterion, however, that unity-gain negative feedback around $e^{-sT}/(s+1)$ does
give a stable system.  A second case where the analysis appears to be conservative is
in the study of multiple input, multiple output systems.  Understanding when SRG analysis is tight, and when it is
conservative, is a topic of ongoing research.  We expect that twenty years of IQC
analysis will contribute to further developing SRG analysis.

A second question is concerned with computation of SRGs.  Efficient algorithms for
computing or approximating the SRGs of nonlinear operators
defined by state space models, or directly from input/output data, are an interesting
topic for future research.

A third question is concerned with the extension of the Nyquist theorem to the
general case of unstable open loop plants.

The SRG characterization of a system can be tightened by taking the intersection of
several bounding SRGs, similar to taking the intersection of several IQCs.  SRG
analysis also allows a characterization to be \emph{loosened} by taking the union of
SRG: for example, a system might \emph{either} be passive, \emph{or} have small gain
(or both).  This is explored in \autocite{Chaffey2022}.

We hope that the graphical analysis presented in this paper will further narrow the
gap between linear and nonlinear control theory.

\appendices
\section{Proof of Theorem \ref{prop:nyq}} \label{app:nyq_proof}

        The proof has three components.  We begin by showing that, for an LTI transfer
        function $G(s)$, the Nyquist
        diagram at the frequencies $n2\pi/T$ is a subset of the SRG of $G(s)$. We
        then show, for operators on the space $L_{2,
        T}$ of $T-$periodic, finite energy signals, the SRG is in the convex hull of the points generated by applying the operator
        to the basis of $L_{2, T}$ given by $\{e^{jt n2\pi/T}\}_{n \in \mathbb{Z}}$,
        which are exactly the points on the Nyquist diagram.  The result then follows by taking
        the limit as $T\to\infty$, analogous to the classical derivation of the
        Fourier transform from the Fourier series.

        We begin by observing that the point on the Nyquist diagram of $G$ corresponding to
        frequency $\omega \in \R$ is precisely $z_G(e^{j\omega t})$. Set 
        $u = a e^{j\omega t}$, then $y = G(u) = \alpha a e^{j\omega t +
        j\psi}$, where $\alpha = |G(j\omega)|$ and $\psi = \angle{G(j \omega)}$. A direct
        calculation gives 
        \begin{IEEEeqnarray*}{rCl}
                \bra{u}\ket{y} &=& \int^T_0 u(t) \bar{y}(t) \mathrm{d} t\\
                               &=& T \alpha a^2 e^{j\phi},\\
                \norm{u} &=& \sqrt{T} a,\\
                \norm{y} &=& \sqrt{T} \alpha a,
        \end{IEEEeqnarray*}
        where $\bra{\cdot}\ket{\cdot}$ is the inner product on $L_{2, T}$.  
        It follows immediately that 
        \begin{IEEEeqnarray*}{rCl}
                z_G(u) &=& \alpha e^{j \psi},
        \end{IEEEeqnarray*}
        that is, the point on the Nyquist diagram of $G$ corresponding to frequency
        $\omega$.

        The next part of the proof closely follows \textcite[Thm 3.1]{Huang2020a}.
        In the interests of brevity, we point out only the main arguments and
        modifications required to that proof.

        Let $G$ be an LTI operator on $L_2$.  The restriction of $G$ to $L_{2, T}$ is
        then an operator on $L_{2, T}$.  Let $\mathcal{B}$ be the set of functions in
        $t$ given by $\mathcal{B} = \{e^{jtn2\pi/T}, n \in \mathbb{Z}\}$.  We show that
        % check definition of B so it lines up with Fourier series
        \begin{IEEEeqnarray}{rCl}
                z_G(\text{span}(\mathcal{B})\setminus\{0\}) =
                \Poly{z_G(\mathcal{B})}.\label{eq:countable}
        \end{IEEEeqnarray}

        We begin by noting that $\mathcal{B}$ is an orthonormal basis for $L_{2, T}$, and in particular,
        for all $u, v \in \mathcal{B}$, $u \neq v$,
        $\bra{v}\ket{u} = \bra{v}\ket{Gu} = \bra{Gv}\ket{u} = \bra{Gv}\ket{Gu} = 0$.
        Therefore, the result of Part 2 of the proof of \textcite[Thm.
        3.1]{Huang2020a} holds: for all such $u, v$, we have
        \begin{IEEEeqnarray*}{rCl}
                z_G(\text{span}(u, v)\setminus\{0\}) = \Arc{z_G(u), z_G(v)}.
        \end{IEEEeqnarray*}
        The only modification required to the proof is that the inner product here is
        complex valued, and the real part must be taken.  Parts 3 and 4 of the proof
        of \textcite[Thm. 3.1]{Huang2020a} show that
        $z_G(\text{span}(\mathcal{B})\setminus\{0\}) \subseteq \Poly{z_G(\mathcal{B})}$ and 
        $z_G(\text{span}(\mathcal{B})\setminus\{0\}) \supseteq \Poly{z_G(\mathcal{B})}$
        respectively, with the proof requiring only the additional fact that
        $\Poly{S}$ (in the proof of \autocite[Thm. 3.1]{Huang2020a}) is defined for a countably infinite set, as described in
        Section~\ref{sec:hgeo}.  This concludes the second part of the proof: 
        $z_G(\text{span}(\mathcal{B})\setminus\{0\}) = \Poly{z_G(\mathcal{B})}$.
        
        Finally, we extend to aperiodic signals by letting the period $T \to \infty$ and the
        fundamental frequency $2\pi/T \to 0$.  In the interests of brevity, we give
        the proof here assuming that the Fourier transform of the input $u(t)$ is
        Riemann integrable.  The result can be extended to arbitrary functions on
        $L_2$ using the same machinery for defining the Fourier transform on $L_2$ -
        see, for instance, \textcite[Chap. 9]{Rudin1987}. We first note that $z_G(ae^{i\omega
        t})$ may be computed using the inner product and norm on $L_2$, rather than
        $L_{2, T}$, as a limit, and the result will be unchanged.
        Let $u(t)$ be an input signal on $L_2$, and $y(t)$ the corresponding output.
        The Fourier inversion theorem gives
        \begin{IEEEeqnarray}{rCl}
                y(t) &=& \frac{1}{\sqrt{2\pi}} \int_{-\infty}^{\infty}
                G(j\omega) \hat{u}(\omega) e^{j\omega t}\mathrm{d}\omega.\label{eq:inversion_int}
        \end{IEEEeqnarray}
        Let
        \begin{IEEEeqnarray*}{rCl}
                \frac{\Delta \omega}{\sqrt{2\pi}} \sum_{n = -\infty}^{\infty} G(jn\Delta \omega) \hat{u}(n
                \Delta \omega)e^{jn \Delta \omega} 
        \end{IEEEeqnarray*}
        be a Riemann sum approximation of the right hand side of \eqref{eq:inversion_int}, with uniform
        spacing $\Delta \omega$.  By \eqref{eq:countable}, we know this sum belongs to
        $\Poly{\{G(j \Delta \omega) e^{jn\Delta \omega_0 t}\}_{n \in \mathbb{Z}}}
        \subseteq \Poly{\{G(j \omega) e^{j\omega t}\}_{\omega \in \R}}$. 
        Letting $\Delta \omega \to 0$,  
        we have that the right hand side of \eqref{eq:inversion_int} belongs to
        $\Poly{\{G(j\omega) e^{j\omega t}\}_{\omega \in \R}}$, noting that the restriction of
        the Nyquist diagram to $\C_{\Im \geq 0}$ is compact in $\C$.   
        Note that this is precisely the h-convex hull of the Nyquist diagram of $G$. 
        \qed

\printbibliography

\begin{IEEEbiography}[{\includegraphics[width=1in,height=1.25in,clip,keepaspectratio]{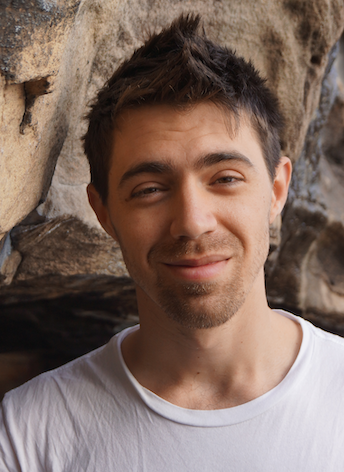}}]{Thomas
        Chaffey} (S17) received the B.Sc. (advmath) degree in
        mathematics and computer science in 2015 and the M.P.E. degree in mechanical
        engineering in 2018, from the University of Sydney, Australia, and the Ph.D.
        degree in engineering from the University of
        Cambridge, UK, in 2022.  He currently holds the Maudslay--Butler Research
        Fellowship at Pembroke College, University of Cambridge.  His research interests are in nonlinear systems, circuits and
        optimization.  He received the Best Student Paper Award at
        the 2021 European Control Conference and the Outstanding Student Paper Award
        at the 2021 IEEE Conference on Decision and Control.
\end{IEEEbiography}

\begin{IEEEbiography}[{\includegraphics[width=1in,height=1.25in,clip,keepaspectratio]{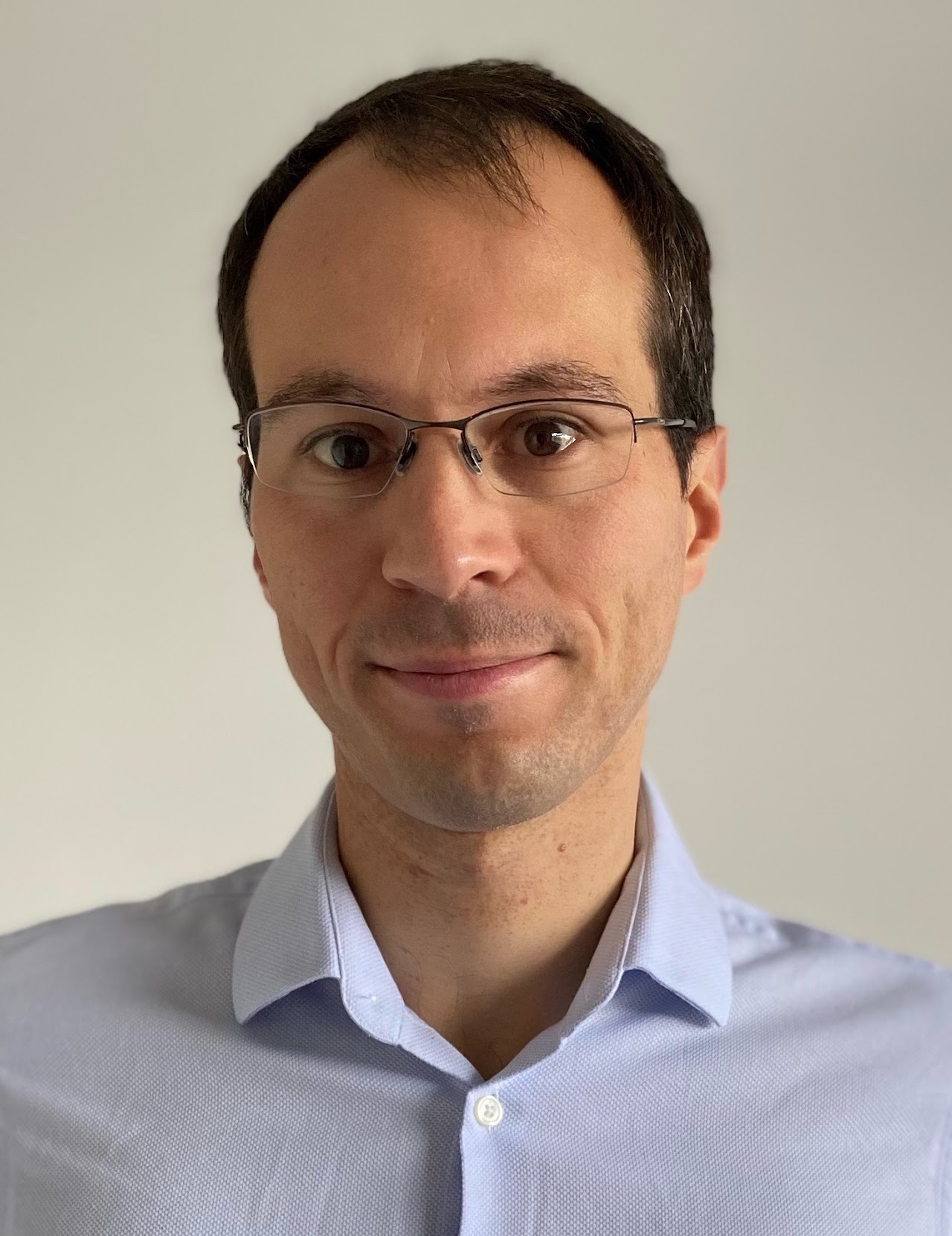}}]{Fulvio Forni}
 received the Ph.D. degree in computer science and control engineering from the University of Rome Tor Vergata, Rome, Italy, in 2010. In 2008–2009, he held visiting positions with the LFCS, University of Edinburgh, U.K. and with the CCDC of the University of California Santa Barbara, USA. In 2011–2015, he held a post-doctoral position with the University of Liege, Belgium (FNRS). He is currently a Lecturer with the Department of Engineering, University of Cambridge, U.K. Dr. Forni was a recipient of the 2020 IEEE CSS George S. Axelby Outstanding Paper Award.
\end{IEEEbiography}

\begin{IEEEbiography}[{\includegraphics[width=1in,height=1.25in,clip,keepaspectratio]{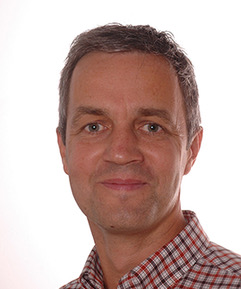}}]{Rodolphe Sepulchre} (M96,SM08,F10)  received the engineering degree and the Ph.D. degree from the Université catholique de Louvain in 1990 and in 1994, respectively.  His is Professor of engineering at Cambridge University since 2013.  His research interests are in nonlinear control and optimization, and more recently neuromorphic control.  He co-authored the monographs "Constructive Nonlinear Control" (Springer-Verlag, 1997) and "Optimization on Matrix Manifolds" (Princeton University Press, 2008). 
He is Editor-in-Chief of IEEE Control Systems. In 2008, he was awarded the IEEE Control Systems Society Antonio Ruberti Young Researcher Prize. He is a fellow of IEEE, IFAC,  and SIAM. He has been IEEE CSS distinguished lecturer  between 2010 and 2015. In 2013, he was elected at the Royal Academy of Belgium.
\end{IEEEbiography}
\end{document}